\newcommand{\hongyang}[1]{\textcolor{black}{#1}}
\newcommand{\OPT}{\textup{\textsf{OPT}}}
\newcommand{\sign}{\textup{\textsf{sign}}}
\newcommand{\diag}{\textsf{Diag}}
\newcommand{\ber}{\textup{\textsf{Ber}}}
\newcommand{\Appendix}[1]{the full version for}
\newtheorem{theorem}{Theorem}[section]
\newtheorem{lemma}[theorem]{Lemma}
\newtheorem{proposition}[theorem]{Proposition}
\newtheorem{claim}{Claim}
\newtheorem{definition}{Definition}
\newtheorem{conjecture}{Conjecture}
\newtheorem{condition}{Condition}
\newcommand{\e}{\mathbf{e}}
\renewcommand{\u}{\mathbf{u}}
\newcommand{\x}{\mathbf{x}}
\newcommand{\y}{\mathbf{y}}
\newcommand{\A}{\mathbf{A}}
\newcommand{\B}{\mathbf{B}}
\newcommand{\C}{\mathbf{C}}
\newcommand{\D}{\mathbf{D}}
\newcommand{\E}{\mathbf{E}}
\newcommand{\F}{\mathbf{F}}
\newcommand{\G}{\mathbf{G}}
\renewcommand{\H}{\mathbf{H}}
\newcommand{\I}{\mathbf{I}}
\renewcommand{\L}{\mathbf{L}}
\newcommand{\M}{\mathbf{M}}
\newcommand{\R}{\mathbb{R}}
\renewcommand{\S}{\mathbf{S}}
\newcommand{\T}{\mathbf{T}}
\newcommand{\U}{\mathbf{U}}
\newcommand{\V}{\mathbf{V}}
\newcommand{\W}{\mathbf{W}}
\newcommand{\X}{\mathbf{X}}
\newcommand{\Y}{\mathbf{Y}}
\newcommand{\Z}{\mathbf{Z}}
\newcommand{\rank}{\textup{\textsf{rank}}}
\newcommand{\bLambda}{\mathbf{\Lambda}}
\newcommand{\0}{\mathbf{0}}
\renewcommand{\comment}[1]{}
\newcommand{\tr}{\textsf{tr}}
\newcommand{\cA}{\mathcal{A}}
\newcommand{\cC}{\mathcal{C}}
\newcommand{\cD}{\mathcal{D}}
\newcommand{\cG}{\mathcal{G}}
\newcommand{\cI}{\mathcal{I}}
\newcommand{\cS}{\mathcal{S}}
\newcommand{\cT}{\mathcal{T}}
\newcommand{\cU}{\mathcal{U}}
\newcommand{\cV}{\mathcal{V}}
\newcommand{\cN}{\mathcal{N}}
\newcommand{\cO}{\mathcal{O}}
\newcommand{\cP}{\mathcal{P}}
\newcommand{\cQ}{\mathcal{Q}}
\DeclareMathOperator*{\argmax}{argmax}
\DeclareMathOperator*{\argmin}{argmin}
\newenvironment{proofoutline}{\noindent{\emph{Proof Sketch. }}}{\hfill$\square$\medskip}
\title{Matrix Completion and Related Problems via Strong Duality}
\author{Maria-Florina Balcan\thanks{Carnegie Mellon University. Email: ninamf@cs.cmu.edu} \and Yingyu Liang\thanks{University of Wisconsin-Madison. Email: yliang@cs.wisc.edu } \and David P. Woodruff\thanks{Carnegie Mellon University. Email: dwoodruf@cs.cmu.edu}    \and
Hongyang Zhang\thanks{Corresponding author.  Carnegie Mellon University. Email: hongyanz@cs.cmu.edu}
}
\date{}
\begin{document}

\maketitle

\begin{abstract}
This work studies the \emph{strong duality of non-convex matrix factorization problems}: we show that under certain dual conditions, these problems and its dual have the same optimum. This has been well understood for convex optimization, but little was known for non-convex problems. We propose a novel analytical framework and show that under certain dual conditions, the optimal solution of the matrix factorization program is the same as its bi-dual and thus the global optimality of the non-convex program can be achieved by solving its bi-dual which is convex. These dual conditions are satisfied by a wide class of matrix factorization problems, although matrix factorization problems are hard to solve in full generality.
This analytical framework may be of independent interest to non-convex optimization more broadly.

We apply our framework to two prototypical matrix factorization problems: matrix completion and robust Principal Component Analysis (PCA). These are examples of efficiently recovering a hidden matrix given limited reliable observations of it.
Our framework shows that exact recoverability and strong duality hold with nearly-optimal sample complexity guarantees for matrix completion and robust PCA.
\end{abstract}

\thispagestyle{empty}
\setcounter{page}{0}
\newpage

\section{Introduction}
\vspace{+0.2cm}
Non-convex matrix factorization problems have been an emerging object of study in theoretical computer science~\cite{jain2013low,hardt2014understanding,sun2015guaranteed,razenshteyn2016weighted}, optimization~\cite{wen2012solving,shen2014augmented}, machine learning~\cite{bhojanapalli2016global,ge2016matrix,ge2015escaping,jain2010guaranteed,li2016recovery,ICML2012Wang_233}, and many other domains. In theoretical computer science and optimization, the study of such models has led to significant advances in provable algorithms that converge to local minima in linear time~\cite{jain2013low,hardt2014understanding,sun2015guaranteed,agarwal2016finding,allen2016katyusha}. In machine learning, matrix factorization serves as a building block for large-scale prediction and recommendation systems, e.g., the winning submission for the Netflix prize~\cite{koren2009matrix}. Two prototypical examples are matrix completion and robust Principal Component Analysis (PCA).

This work develops a novel framework to analyze a class of non-convex matrix factorization problems with strong duality, which leads to exact recoverability for matrix completion and robust Principal Component Analysis (PCA) via the solution to a convex problem.
The matrix factorization problems can be stated as finding a target matrix $\X^*$ in the form of $\X^*=\A\B$, by minimizing the objective function $H(\A\B)+\frac{1}{2}\|\A\B\|_F^2$ over factor matrices $\A\in\R^{n_1\times r}$ and $\B\in\R^{r\times n_2}$ with a known value of $r\ll \min\{n_1,n_2\}$, where $H(\cdot)$ is some function that characterizes the desired properties of $\X^*$.

\vspace{+0.15cm}
Our work is motivated by several promising areas where our analytical framework for non-convex matrix factorizations is applicable. The first area is low-rank matrix completion, where it has been shown that a low-rank matrix can be exactly recovered by finding a solution of the form $\A\B$ that is consistent with the observed entries (assuming that it is incoherent)~\cite{jain2013low,sun2015guaranteed,ge2016matrix}. This problem has received a tremendous amount of attention due to its important role in optimization and its wide applicability in many areas such as quantum information theory and collaborative filtering~\cite{hardt2014understanding,zhang2016completing,balcan2016noise}. The second area is robust PCA, a fundamental problem of interest in data processing that aims at recovering both the low-rank and the sparse components exactly from their superposition~\cite{Candes,netrapalli2014non,gu2016low,Zhang2015AAAI,zhang2016completing,yi2016fast}, where the low-rank component corresponds to the product of $\A$ and $\B$ while the sparse component is captured by a proper choice of function $H(\cdot)$, e.g., the $\ell_1$ norm~\cite{Candes,awasthi2016learning}. We believe our analytical framework can be potentially applied to other non-convex problems more broadly, e.g., matrix sensing~\cite{tu2015low}, dictionary learning~\cite{sun2016complete}, weighted low-rank approximation~\cite{razenshteyn2016weighted,li2016recovery}, and deep linear neural network~\cite{kawaguchi2016deep}, which may be of independent interest.

\vspace{+0.15cm}
Without assumptions on the structure of the objective function, direct formulations of matrix factorization problems are NP-hard to optimize in general~\cite{hardt2014computational,Zhang:Counterexample}. With standard assumptions on the structure of the problem and with sufficiently many samples, these optimization problems can be solved efficiently, e.g., by convex relaxation~\cite{Candes2009exact,chen2015incoherence}. Some other methods run local search algorithms given an initialization close enough to the global solution in the basin of attraction~\cite{jain2013low,hardt2014understanding,sun2015guaranteed,ge2015escaping,jin2017escape}. However, these methods have sample complexity significantly larger than the information-theoretic lower bound; see Table \ref{table: comparison of sample complexity on matrix completion} for a comparison. The problem becomes more challenging when the number of samples is small enough that the sample-based initialization is far from the desired solution, in which case the algorithm can run into a local minimum or a saddle point.


\vspace{+0.15cm}
Another line of work has focused on studying the loss surface of matrix factorization problems, providing positive results for approximately achieving global optimality.
One nice property in this line of research is that there is no spurious local minima for specific applications such as matrix completion~\cite{ge2016matrix}, matrix sensing~\cite{bhojanapalli2016global}, dictionary learning~\cite{sun2016complete}, phase retrieval~\cite{sun2016geometric}, linear deep neural networks~\cite{kawaguchi2016deep}, etc.
However, these results are based on concrete forms of objective functions. Also, even when any local minimum is guaranteed to be globally optimal, in general it remains NP-hard to escape high-order saddle points~\cite{anandkumar2016efficient}, and additional arguments are needed to show the achievement of a local minimum.
Most importantly, all existing results rely on strong assumptions on the sample size.

\subsection{Our Results}
\vspace{-0.1cm}
Our work studies the exact recoverability problem for a variety of non-convex
matrix factorization problems.
The goal is to provide a unified framework to analyze a large class of matrix factorization problems, and to achieve efficient algorithms. Our main results show that although matrix factorization problems are hard to optimize in general, \emph{under certain dual conditions the duality gap is zero}, and thus the problem can be converted to an equivalent convex program. The main theorem of our framework is the following.

\medskip
\noindent\textbf{Theorems~\ref{theorem: strong duality with condition (c)} (Strong Duality. Informal).}
\emph{Under certain dual conditions, strong duality holds for the non-convex optimization problem
\vspace{-0.3cm}
\begin{equation}
\label{equ: informal objective function}
(\widetilde\A,\widetilde\B)=\argmin_{\A\in\R^{n_1\times r},\B\in\R^{r\times n_2}} F(\A,\B)=H(\A\B)+\frac{1}{2}\|\A\B\|_F^2,\quad H(\cdot)\text{\ is convex and closed},
\end{equation}
where ``the function $H(\cdot)$ is closed'' means that for each $\alpha\in\R$, the sub-level set $\{\X\in\R^{n_1\times n_2}:H(\X)\le\alpha\}$ is a closed set.
In other words, problem \eqref{equ: informal objective function} and its bi-dual problem
\begin{equation}
\label{equ: informal bi-dual objective function}
\widetilde\X=\argmin_{\X\in\R^{n_1\times n_2}} H(\X)+\|\X\|_{r*},\vspace{-0.3cm}
\end{equation}
have exactly the same optimal solutions in the sense that $\widetilde\A\widetilde\B=\widetilde\X$, where $\|\X\|_{r*}$ is a convex function defined by $\|\X\|_{r*}=\max_\M \langle\M,\X\rangle-\frac{1}{2}\|\M\|_r^2$ and $\|\M\|_r^2=\sum_{i=1}^r\sigma_i^2(\M)$ is the sum of the first $r$ largest squared singular values.
}

\medskip
Theorem \ref{theorem: strong duality with condition (c)} connects the non-convex program \eqref{equ: informal objective function} to its convex counterpart via strong duality; see Figure \ref{figure: bridge}. We mention that strong duality rarely happens in the non-convex optimization region: low-rank matrix approximation~\cite{overton1992sum} and quadratic optimization with two quadratic constraints~\cite{beck2006strong} are among the few paradigms that enjoy such a nice property.
Given strong duality, the computational issues of the original problem can be overcome by solving the convex bi-dual problem \eqref{equ: informal bi-dual objective function}.

The positive result of our framework is complemented by a lower bound to formalize the hardness of the above problem in general. Assuming that the random 4-SAT problem is hard (see Conjecture \ref{conj:r4sat})~\cite{razenshteyn2016weighted}, we give a strong negative result for deterministic algorithms. If also {\sf BPP = P} (see Section \ref{section: computations} for a discussion), then the same conclusion holds for randomized algorithms succeeding with probability at least $2/3$.

\begin{figure}[tb]
\centering\vspace{-0.5cm}
\includegraphics[width=1\textwidth]{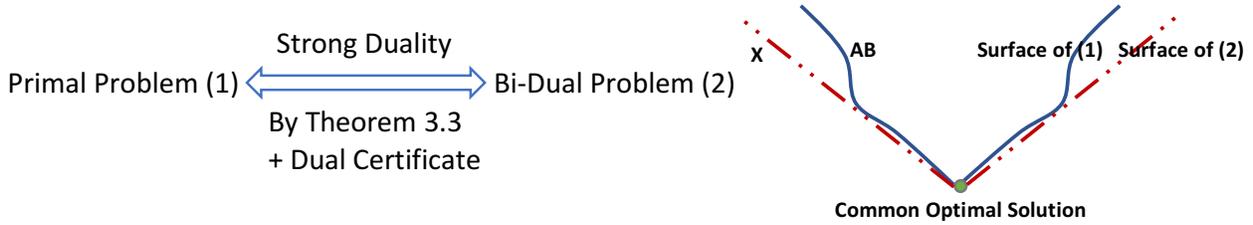}\vspace{-0.2cm}
\caption{Strong duality of matrix factorizations.}
\label{figure: bridge}\vspace{-0.4cm}
\end{figure}

\medskip
\noindent\textbf{Theorem~\ref{theorem: lower bound for general optimization} (Hardness Statement. Informal).}
\emph{Assuming that random 4-SAT is hard on average, there is a problem in the form of \eqref{equ: informal objective function} such that any deterministic algorithm achieving $(1+\epsilon)\OPT$ in the objective function value with $\epsilon\le \epsilon_0$ requires $2^{\Omega(n_1+n_2)}$ time, where $\textup{\textsf{OPT}}$ is the optimum and $\epsilon_0>0$ is an absolute constant. If {\sf BPP = P}, then the same conclusion holds for randomized algorithms succeeding with probability at least $2/3$.
}

Our framework only requires the dual conditions in Theorem \ref{theorem: strong duality with condition (c)} to be verified. We will show that two prototypical problems, matrix completion and robust PCA, obey the conditions.
They belong to the linear inverse problems of form \eqref{equ: informal objective function} with a proper choice of function $H(\cdot)$, which aim at exactly recovering a hidden matrix {$\X^*$ with $\rank(\X^*)\le r$} given a limited number of linear observations of it.

For matrix completion, the linear measurements are of the form $\{\X_{ij}^*: (i,j) \in \Omega\}$, where $\Omega$ is the support set which is uniformly distributed among all subsets of $[n_1] \times [n_2]$ of cardinality $m$. With strong duality, we can either study the exact recoverability of the primal problem \eqref{equ: informal objective function}, or investigate the validity of its convex dual (or bi-dual) problem \eqref{equ: informal bi-dual objective function}. Here we study the former with tools from geometric analysis.
Recall that in the analysis of matrix completion, one typically requires an $\mu$-incoherence condition \hongyang{for a given rank-$r$ matrix $\X^*$ with skinny SVD $\U\mathbf{\Sigma}\V^T$~\cite{recht2011simpler,Candes2010power}:}
\begin{align}
\|\U^T\e_i\|_2\le &\sqrt{\frac{\mu r}{n_1}},\qquad\mbox{and}\qquad \|\V^T\e_i\|_2\le \sqrt{\frac{\mu r}{n_2}},\qquad \text{for all }i\label{equ: incoherence}
\end{align}
where $\e_i$'s are vectors with $i$-th entry equal to $1$ and other entries equal to $0$. The incoherence condition claims that information spreads throughout the left and right singular vectors and is quite standard in the matrix completion literature. Under this standard condition, we have the following results.

\medskip
\noindent\textbf{Theorems~\ref{theorem: uniqueness of matrix completion}, \ref{theorem: matrix completion}, and \ref{theorem: lower bound for matrix completion, strong incoherence} (Matrix Completion. Informal).}
\emph{$\X^*\in\R^{n_1\times n_2}$ is the unique matrix of rank at most $r$ that is consistent with the $m$ measurements with minimum Frobenius norm by a high probability, provided that $m=\cO(\kappa^2\mu(n_1+n_2)r\log (n_1+n_2)\log_{2\kappa}(n_1+n_2))$ and $\X^*$ satisfies incoherence \eqref{equ: incoherence}. In addition, there exists a convex optimization for matrix completion in the form of \eqref{equ: informal bi-dual objective function} that exactly recovers $\X^*$ with high probability, provided that $m=\cO(\kappa^2\mu(n_1+n_2)r\log (n_1+n_2)\log_{2\kappa}(n_1+n_2))$, where $\kappa$ is the condition number of $\X^*$.}

\begin{table}[tb]
\caption{Comparison of matrix completion methods. Here $\kappa=\sigma_1(\X^*)/\sigma_r(\X^*)$ is the condition number of $\X^*\in\R^{n_1\times n_2}$, $\epsilon$ is the accuracy such that the output $\widetilde\X$ obeys $\|\widetilde\X-\X^*\|_F\le\epsilon$, $n_{(1)}=\max\{n_1,n_2\}$ and $n_{(2)}=\min\{n_1,n_2\}$.}
\label{table: comparison of sample complexity on matrix completion}
\centering
\begin{tabular}{c|cc}%
\hline
Work & Sample Complexity & $\mu$-Incoherence\\
\hline\hline
\cite{jain2013low} & $\cO\left(\kappa^4\mu^2r^{4.5}n_{(1)}\log n_{(1)}\log\left(\frac{r\|\X^*\|_F}{\epsilon}\right)\right)$ & Condition \eqref{equ: incoherence}\\
\hline
\cite{hardt2014understanding} & $\cO\left(\mu rn_{(1)}(r+\log\left(\frac{n_{(1)}\|\X^*\|_F}{\epsilon}\right)\frac{\|\X^*\|_F^2}{\sigma_r^2}\right)$ & Condition \eqref{equ: incoherence}\\
\hline
\cite{ge2016matrix} & $\cO(\max\{\mu^6\kappa^{16}r^4,\mu^4\kappa^4r^6\}n_{(1)}\log^2 n_{(1)})$ & $\|\X^*_{i:}\|_2\le \frac{\mu}{\sqrt{n_{(2)}}}\|\X^*\|_F$\\
\hline
\cite{sun2015guaranteed} & $\cO(rn_{(1)}\kappa^2\max\left\{\mu \log n_{(2)},\sqrt{\frac{n_{(1)}}{n_{(2)}}}\mu^2r^6\kappa^4\right\}$ & Condition \eqref{equ: incoherence}\\
\hline
\cite{zheng2016convergence} & $\cO(\mu r^2n_{(1)}\kappa^2\max(\mu,\log n_{(1)}))$ & Condition \eqref{equ: incoherence}\\
\hline
\cite{gamarnik2017matrix} & $\cO\left(\left(\mu^2r^4\kappa^2+\mu r\log\left(\frac{\|\X^*\|_F}{\epsilon}\right)\right)n_{(1)}\log\left(\frac{\|\X^*\|_F}{\epsilon}\right)\right)$ & Condition \eqref{equ: incoherence}\\
\hline
\cite{zhao2015nonconvex} & $\cO\left(\mu r^3n_{(1)}\log n_{(1)}\log\left(\frac{1}{\epsilon}\right)\right)$ & Condition \eqref{equ: incoherence}\\
\hline
\cite{keshavan2010matrix} & $\cO\left(n_{(2)}r\sqrt{\frac{n_{(1)}}{n_{(2)}}}\kappa^2\max\left\{\mu\log n_{(2)},\mu^2r\sqrt{\frac{n_{(1)}}{n_{(2)}}}\kappa^4\right\}\right)$ & Similar to \eqref{equ: incoherence} and \eqref{equ: strong incoherence for RPCA}\\
\hline
\cite{Gross2011recovering} & $\cO(\mu r n_{(1)}\log^2 n_{(1)})$ & Conditions \eqref{equ: incoherence} and \eqref{equ: strong incoherence for RPCA}\\
\hline
\cite{chen2015incoherence} & $\cO(\mu r n_{(1)}\log^2 n_{(1)})$ & Condition \eqref{equ: incoherence}\\
\hline
Ours & $\cO(\kappa^2\mu r n_{(1)}\log (n_{(1)})\log_{2\kappa}(n_{(1)}))$ & Condition \eqref{equ: incoherence}\\
\hline
\hline
Lower Bound\tablefootnote{This lower bound is information-theoretic.} \cite{Candes2010power} & $\Omega(\mu r n_{(1)}\log n_{(1)})$ & Condition \eqref{equ: incoherence}\\
\hline
\end{tabular}
\vspace{-0.3cm}
\end{table}

\medskip
To the best of our knowledge, our result is the first to connect convex matrix completion to non-convex matrix completion, two parallel lines of research that have received significant attention in the past few years. Table \ref{table: comparison of sample complexity on matrix completion} compares our result with prior results. 

For robust PCA, instead of studying exact recoverability of problem \eqref{equ: informal objective function} as for matrix completion, we investigate problem \eqref{equ: informal bi-dual objective function} directly. The robust PCA problem is to decompose a given matrix $\D=\X^*+\S^*$ into the sum of a low-rank component $\X^*$ and a sparse component $\S^*$~\cite{agarwal2012noisy}. We obtain the following theorem for robust PCA.

\medskip
\noindent\textbf{Theorems~\ref{theorem: robust PCA} (Robust PCA. Informal).}
\emph{There exists a convex optimization formulation for robust PCA in the form of problem \eqref{equ: informal bi-dual objective function} that exactly recovers the incoherent matrix $\X^*\in\R^{n_1\times n_2}$ and $\S^*\in\R^{n_1\times n_2}$ with high probability, even if
$\rank(\X^*)=\Theta\left(\frac{\min\{n_1,n_2\}}{\mu\log^2 \max\{n_1,n_2\}}\right)$ and the size of the support of $\S^*$ is $m=\Theta(n_1n_2)$, where the support set of $\S^*$ is uniformly distributed among all sets of cardinality $m$, and the incoherence parameter $\mu$ satisfies constraints \eqref{equ: incoherence} and $\|\X^*\|_\infty\le\sqrt{\frac{\mu r}{n_1n_2}}\sigma_r(\X^*)$.
}

\medskip
The bounds in Theorem \ref{theorem: robust PCA} match the best known results in the robust PCA literature when the supports of $\S^*$ are uniformly sampled~\cite{Candes}, while our assumption is arguably more intuitive; see Section~\ref{sec:rpca}. Note that our results hold even when $\X^*$ is close to full rank and a constant fraction of the entries have noise.
Independently of our work, Ge et al.~\cite{Rong2017No} developed a framework to analyze the loss surface of low-rank problems, and applied the framework to matrix completion and robust PCA. Their bounds are: for matrix completion, the sample complexity is $\cO(\kappa^6\mu^4r^6(n_1+n_2)\log (n_1+n_2))$; for robust PCA, the outlier entries are deterministic and the number that the method can tolerate is $\cO\left(\frac{n_1n_2}{\mu r\kappa^5}\right)$. Zhang et al.~\cite{zhang2017nonconvex} also studied the robust PCA problem using non-convex optimization, where the outlier entries are deterministic and the number of outliers that their algorithm can tolerate is $\cO\left(\frac{n_1n_2}{r\kappa}\right)$. 
The strong duality approach is unique to our work.

\subsection{Our Techniques}
\label{section: Our Techniques}
%

\medskip
\noindent{\textbf{Reduction to Low-Rank Approximation.}}
Our results are inspired by the low-rank approximation problem:
\begin{equation}
\label{equ: PCA}
\min_{\A\in\R^{n_1\times r},\B\in\R^{r\times n_2}} \frac{1}{2}\|\hongyang{-\widetilde\bLambda}-\A\B\|_F^2.
\end{equation}
We know that all local solutions of \eqref{equ: PCA} are globally optimal (see Lemma \ref{lemma: local-global}) and that strong duality holds for any given matrix $-\widetilde\bLambda\in\R^{n_1\times n_2}$~\cite{grussler2016low}. To extend this property to our more general problem \eqref{equ: informal objective function}, our main insight is to reduce problem \eqref{equ: informal objective function} to the form of \eqref{equ: PCA} using the $\ell_2$-regularization term. While some prior work attempted to apply a similar reduction, their conclusions either \hongyang{depended} on unrealistic conditions on local solutions, e.g., all local solutions are rank-deficient~\cite{haeffele2014structured,grussler2016low}, or their conclusions relied on strong assumptions on the objective functions, e.g., that the objective functions are twice-differentiable~\cite{haeffele2015global}. Instead, our general results formulate strong duality via the existence of a dual certificate $\widetilde\bLambda$. For concrete applications, the existence of a dual certificate is then converted to mild assumptions, e.g., that the number of measurements is sufficiently large and the positions of measurements are randomly distributed. We will illustrate the importance of randomness below.

\comment{
More specifically, denote by $(\widetilde\A,\widetilde\B)$ the optimal solution to \eqref{equ: informal objective function}. Define $\cT=\{\widetilde\A\L+\M\widetilde\B:\ \L\in\R^{r\times n_2},\M\in\R^{n_1\times r}\}$, $\cT^\perp$ the complement of $\cT$, and $\cP_\cT$ the orthogonal projection onto subspace $\cT$. Let $\partial H(\X)=\{\bLambda\in\R^{n_1\times n_2}: H(\Y)\ge H(\X)+\langle\bLambda,\Y-\X\rangle\mbox{ for any } \Y\}$ be the sub-differential of function $H$ evaluated at $\X$. To perform the reduction from problem \eqref{equ: informal objective function} to \eqref{equ: PCA}, we study the Lagrangian $L(\A,\B,\bLambda)$ of \eqref{equ: informal objective function}, which is equivalent to problem \eqref{equ: PCA} if we fix $\bLambda=\widetilde\bLambda$. We show that, for a fixed Lagrangian multiplier $\widetilde\bLambda\in\partial H(\widetilde\A\widetilde\B)$, minimizing the primal problem \eqref{equ: informal objective function} reduces to minimizing the Lagrangian function $L(\A,\B,\widetilde\bLambda)$, (i.e., problem \eqref{equ: PCA}, thus strong duality holds), if $(\widetilde\A,\widetilde\B)$ remains globally optimal to $L(\A,\B,\widetilde\bLambda)$ (i.e., problem \eqref{equ: PCA}). This can be translated to: a) $\exists\widetilde\bLambda\in\partial H(\widetilde\A\widetilde\B)$, b) $(\widetilde\A,\widetilde\B)$ is a stationary point of the Lagrangian $L(\A,\B,\widetilde\bLambda)$ so that c) $\widetilde\A\widetilde\B=\textsf{svd}_r(-\widetilde\bLambda)$, where $\textsf{svd}_r(-\widetilde\bLambda)=\U_{:,1:r}\mathbf{\Sigma}_{1:r,1:r}\V_{:,1:r}^T$ if $\U\mathbf{\Sigma}\V^T$ is the SVD of $-\widetilde\bLambda$, and $\U_{:,1:r}$ and $\mathbf{\Sigma}_{1:r,1:r}$ are the first $r$ columns of $\U$ and top left $r\times r$ submatrix of $\mathbf{\Sigma}$, respectively. We note that conditions b) and c) can be rephrased as $\cP_\cT(-\widetilde\bLambda)=\widetilde\A\widetilde\B$ and $\sigma_1(\cP_{\cT^\perp}\widetilde\bLambda)<\sigma_r(\widetilde\A\widetilde\B)$, respectively. To satisfy conditions a), b) and c) simultaneously, one may want to find a certificate $\widetilde\bLambda$ such that among all matrices $\bLambda\in\partial H(\widetilde\A\widetilde\B)$ \hongyang{(i.e., condition a))} with $\cP_\cT(-\bLambda)=\widetilde\A\widetilde\B$ \hongyang{(i.e., condition b))}, $\widetilde\bLambda$ is the one with minimum Frobenius norm, so that condition c) is easier to satisfy. Following this principle, we build our dual certificate $\widetilde\bLambda$ by $-\cP_{\partial H}\cP_{\cT}(\cP_\cT\cP_{\partial H}\cP_\cT)^{-1}(\widetilde\A\widetilde\B)$. It can be easily checked that conditions a) and b) hold for our construction. Thus the remainder is to prove condition c) for specific applications. We observe that $\partial H=\Omega$ for matrix completion, where $\Omega$ is the linear space that characterizes the sample support, i.e., $\Omega=\{\X:\X_{ij}=0 \text{ if } (i,j) \text{ is unsampled}\}$.\footnote{Without confusion, we denote by $\Omega$ both the linear subspace and the index set concerning the sampled positions.} This nice property serves as a bridge, connecting our analytical framework to the concrete application of matrix completion. We will then state how to prove condition c) by randomness as follows.
}

\medskip
\noindent{\textbf{The Blessing of Randomness.}}
The desired dual certificate $\widetilde\bLambda$ may not exist in the deterministic world. A hardness result~\cite{razenshteyn2016weighted} shows that for the problem of weighted low-rank approximation, which can be cast in the form of \eqref{equ: informal objective function}, without some randomization in the
measurements made on the underlying low rank matrix,
it is NP-hard to achieve a good objective value, not to mention to achieve strong duality.
A similar phenomenon was observed for deterministic matrix completion~\cite{hardt2012algorithms}. Thus we should utilize such randomness to analyze the existence of a dual certificate. For matrix completion, the assumption that the measurements are random is standard, under which, the angle between the space $\Omega$ (the space of matrices which are consistent with observations) and the space $\cT$ (the space of matrices which are low-rank) is small with high probability, namely, $\X^*$ is almost the unique low-rank matrix that is consistent with the measurements. Thus, our dual certificate can be represented as another form of a convergent Neumann series concerning the projection operators on the spaces $\Omega$ and $\cT$. The remainder of the proof is to show that such a construction obeys the dual conditions.

To prove the dual conditions for matrix completion, we use the fact that the subspace $\Omega$ and the complement space $\cT^\perp$ are almost orthogonal when the sample size is sufficiently large. This implies the projection of our dual certificate on the space $\cT^\perp$ has a very small norm, which exactly matches the dual conditions.

\begin{wrapfigure}{R}{5cm}
\includegraphics[width=5cm]{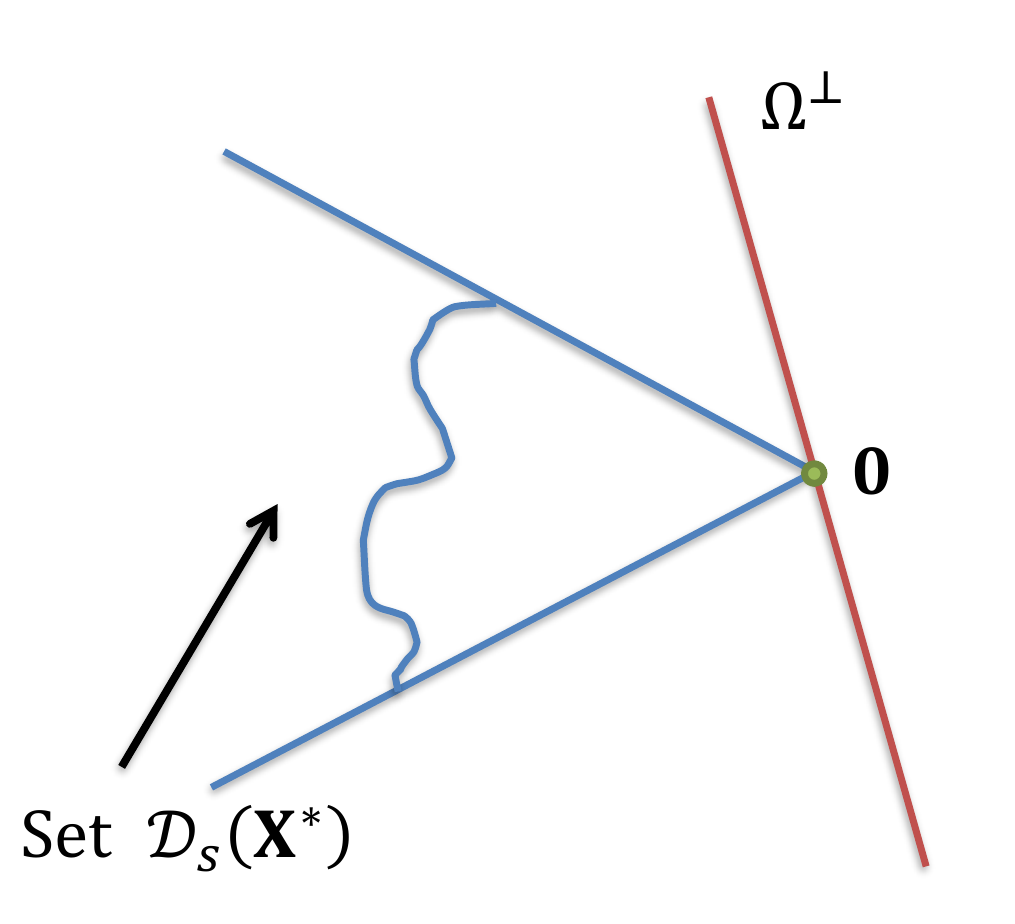}
\vspace{-0.6cm}
\caption{Feasibility.}
\label{figure: exact recovery}
\end{wrapfigure}
\medskip
\noindent{\textbf{Non-Convex Geometric Analysis.}}
Strong duality implies that the primal problem \eqref{equ: informal objective function} and its bi-dual problem \eqref{equ: informal bi-dual objective function} have exactly the same solutions in the sense that $\widetilde\A\widetilde\B=\widetilde\X$. Thus, to show exact recoverability of linear inverse problems such as matrix completion and robust PCA, it suffices to study either the non-convex primal problem \eqref{equ: informal objective function} or its convex counterpart \eqref{equ: informal bi-dual objective function}. Here we do the former analysis for matrix completion. We mention that traditional techniques~\cite{Candes2010power,recht2011simpler,chandrasekaran2012convex} for convex optimization break down for our non-convex problem, since the subgradient of a non-convex objective function may not even exist~\cite{boyd2004convex}. Instead, we apply tools from geometric analysis~\cite{Vershynin2010lectures} to analyze the geometry of problem \eqref{equ: informal objective function}. Our non-convex geometric analysis is in stark contrast to prior techniques of convex geometric analysis~\cite{Vershynin2014Estimation} where convex combinations of non-convex constraints were used to define the Minkowski functional (e.g., in the definition of atomic norm) while our method uses the non-convex constraint itself.

For matrix completion, problem \eqref{equ: informal objective function} has two hard constraints: a) the rank of the output matrix should be no larger than $r$, as implied by the form of $\A\B$; b) the output matrix should be consistent with the sampled measurements, i.e., $\cP_\Omega(\A\B)=\cP_\Omega(\X^*)$. We study the feasibility condition of problem \eqref{equ: informal objective function} from a geometric perspective: $\widetilde\A\widetilde\B=\X^*$ is the unique optimal solution to problem \eqref{equ: informal objective function} if and only if starting from $\X^*$, either the rank of $\X^*+\D$ or $\|\X^*+\D\|_F$ increases for all directions $\D$'s in the constraint set $\Omega^\perp=\{\D\in\R^{n_1\times n_2}:\cP_\Omega(\X^*+\D)=\cP_\Omega(\X^*)\}$. This can be geometrically interpreted as the requirement that the set $\cD_\cS(\X^*)=\{\X-\X^*\in\R^{n_1\times n_2}:\rank(\X)\le r, \|\X\|_F \le \|\X^*\|_F\}$ and the constraint set $\Omega^\perp$ must intersect uniquely at $\0$ (see Figure \ref{figure: exact recovery}). This can then be shown by a dual certificate argument.

%

\comment{
\medskip
\noindent{\textbf{Bypassing the Golfing Scheme to Obtain Optimal Bounds.}}
Compared to the nuclear norm method~\cite{chen2015incoherence,recht2011simpler,Gross2011recovering}, our non-convex geometric analysis leads to a multiplicative $\log n_{(1)}$ factor improvement in the sample complexity for matrix completion. The main reason is that we can avoid using the technique in the so-called golfing scheme, which is loose but inevitable in the analysis of the nuclear norm method. The golfing scheme is a sequential way of building a certificate vector: suppose we would like to build a dual certificate that is close to $\W_*$. Starting from an initialization $\W_0$, in the $i$-th step the golfing scheme calculates the distance between our current guess $\W_i$ and the target $\W_*$, and takes fresh i.i.d. samples the entries of $\W_i-\W_*$ projected onto subspace $\cT$ in each step, using them for our next guess of $\W_{i+1}-\W_*$. In expectation, the distance between $\W_i$ and $\W_*$ becomes smaller and after $2\lceil\log n_{(1)}\rceil$ steps, $\W_{2\lceil\log n_{(1)}\rceil}$ is very close to $\W_*$ with high probability. However, the golfing scheme re-samples matrices in each of $2\lceil\log n_{(1)}\rceil$ iterative steps, and so the analysis incurs an additional multiplicative $\log n_{(1)}$ factor in the sample complexity. Although it might be possible to modify the golfing scheme to use the same samples in each iteration or to reuse many samples, it is unknown how to do so and this remains an important open question. In contrast, our analysis completely avoids the necessity of resampling by going through a very different geometric approach.
}

\medskip
\noindent{\textbf{Putting Things Together.}} We summarize our new analytical framework with the following figure.
\begin{figure}[h]
\centering
\includegraphics[width=1\textwidth]{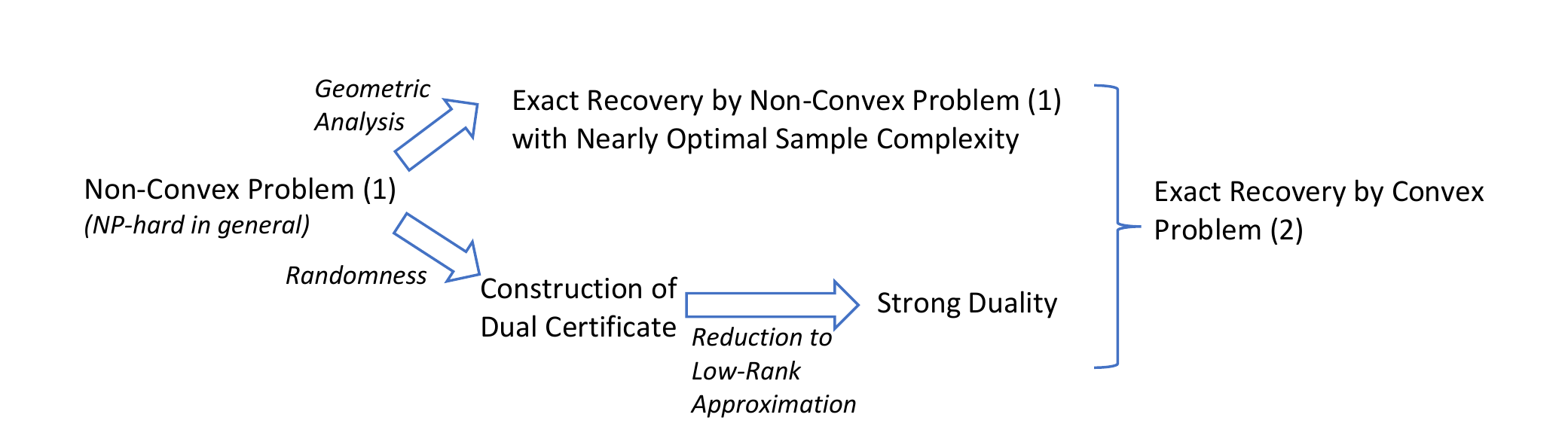}
\end{figure}

\vspace{-0.2cm}

\medskip
\noindent{\textbf{Other Techniques.}} An alternative method is to investigate the exact recoverability of problem \eqref{equ: informal bi-dual objective function} via standard convex analysis. We find that the sub-differential of our induced function $\|\cdot\|_{r*}$ is very similar to that of the nuclear norm. With this observation, we prove the validity of robust PCA in the form of \eqref{equ: informal bi-dual objective function} by combining this property of $\|\cdot\|_{r*}$ with standard techniques from \cite{Candes}.

\section{Preliminaries}
\label{section: Notations and Preliminaries}

We will use calligraphy to represent a set, bold capital letters to represent a matrix, bold lower-case letters to represent a vector, and lower-case letters to represent scalars. Specifically, we denote by $\X^*\in\R^{n_1\times n_2}$ the underlying matrix. We use $\X_{:t}\in\R^{n_1\times 1}$ ($\X_{t:}\in\R^{1\times n_2}$) to indicate the $t$-th column (row) of $\X$. The entry in the $i$-th row, $j$-th column of $\X$ is represented by $\X_{ij}$. The condition number of $\X$ is $\kappa=\sigma_1(\X)/\sigma_r(\X)$. We let $n_{(1)}=\max\{n_1, n_2\}$ and $n_{(2)}=\min\{n_1, n_2\}$.
For a function $H(\M)$ on an input matrix $\M$, its conjugate function $H^*$ is defined by $H^*(\bLambda)=\max_{\M}\langle\bLambda,\M\rangle-H(\M)$. Furthermore, let $H^{**}$ denote the conjugate function of $H^*$.

We will frequently use $\rank(\X)\le r$ to constrain the rank of $\X$. This can be equivalently represented as $\X=\A\B$, by restricting the number of columns of $\A$ and rows of $\B$ to be $r$. For norms, we denote by $\|\X\|_F=\sqrt{\sum_{ij}\X_{ij}^2}$ the Frobenius norm of matrix $\X$. Let $\sigma_1(\X)\ge\sigma_2(\X)\ge...\ge\sigma_r(\X)$ be the non-zero singular values of $\X$. The nuclear norm (a.k.a. trace norm) of $\X$ is defined by $\|\X\|_*=\sum_{i=1}^r\sigma_i(\X)$, and the operator norm of $\X$ is $\|\X\|=\sigma_1(\X)$. Denote by $\|\X\|_\infty=\max_{ij}|\X_{ij}|$. For two matrices $\A$ and $\B$ of equal dimensions, we denote by $\langle \A,\B\rangle=\sum_{ij}\A_{ij}\B_{ij}$. We denote by $\partial H(\X)=\{\bLambda\in\R^{n_1\times n_2}: H(\Y)\ge H(\X)+\langle\bLambda,\Y-\X\rangle\mbox{ for any } \Y\}$ the sub-differential of function $H$ evaluated at $\X$. We define the indicator function of convex set $\cC$ by
$\I_\cC(\X)=
\begin{cases}
0, & \mbox{if } \X\in\cC;\\
+\infty, & \mbox{otherwise}.
\end{cases}$
For any non-empty set $\cC$, denote by $\textsf{cone}(\cC)=\{t\X:\X\in\cC,\ t\ge 0\}$.

We denote by $\Omega$ the set of indices of observed entries, and $\Omega^\perp$ its complement. Without confusion, $\Omega$ also indicates the linear subspace formed by matrices with entries in $\Omega^\perp$ being $0$. We denote by $\cP_\Omega: \R^{n_1\times n_2}\rightarrow\R^{n_1\times n_2}$ the orthogonal projector of subspace $\Omega$. We will consider a single norm for these operators, namely, the operator norm denoted by $\|\cA\|$ and defined by $\|\cA\|=\sup_{\|\X\|_F=1}\|\cA(\X)\|_F$. For any orthogonal projection operator $\cP_\cT$ to any subspace $\cT$, we know that $\|\cP_\cT\|=1$ whenever $\textup{\textsf{dim}}(\cT)\not=0$. For distributions, denote by $\cN(0,1)$ a standard Gaussian random variable, $\textsf{Uniform}(m)$ the uniform distribution of cardinality $m$, and $\ber(p)$ the Bernoulli distribution with success probability $p$.

\vspace{-0.2cm}
\section{$\ell_2$-Regularized Matrix Factorizations: A New Analytical Framework}
\vspace{-0.2cm}
\label{section: Framework}
In this section, we develop a novel framework to analyze a general class of $\ell_2$-regularized matrix factorization problems. Our framework can be applied to different specific problems and leads to nearly optimal sample complexity guarantees.
In particular, we study the $\ell_2$-regularized matrix factorization problem
\begin{equation*}
(\textbf{P})\qquad \min_{\A\in\R^{n_1\times r},\B\in\R^{r\times n_2}} F(\A,\B)=H(\A\B)+\frac{1}{2}\|\A\B\|_F^2,\ H(\cdot)\text{\ is convex and closed.}\vspace{-0.2cm}
\end{equation*}
We show that under suitable conditions the duality gap between (\textbf{P}) and its dual \hongyang{(bi-dual)} problem is zero, so problem (\textbf{P}) can be converted to an equivalent convex problem.

\comment{
\medskip
\noindent \textbf{Examples.} There are many examples in which the data under study can be modelled in the form of (\textbf{P}) and fit our framework. To shed light on the nature of our results, we give two examples inspired by contemporary challenges in recommendation system, statistics and machine learning.
\begin{itemize}[leftmargin=*]\setlength{\itemsep}{-\itemsep}
\item
\noindent \textbf{Matrix Completion.}
Setting $H(\A\B)=\I_{\{\M:\cP_\Omega(\M)=\cP_\Omega(\X^*)\}}(\A\B)$, we obtain the models for low-rank matrix completion:
$
\min_{\A\in\R^{n_1\times r},\B\in\R^{r\times n_2}} \frac{1}{2}\|\A\B\|_F^2,\ \mbox{s.t.}\  \cP_\Omega(\A\B)=\cP_\Omega(\X^*),
$
where $\Omega\sim \textsf{\mbox{Uniform}}(m)$. As shown in Section \ref{section: Matrix Completion},
the solution to this optimization problem exactly recovers the incoherent matrix $\X^*$ with optimal sample complexity up to a constant factor.
\item
\noindent \textbf{Matrix Sensing.} Setting $H(\A\B)=\I_{\{\M:\cA(\M)=\cA(\X^*)\}}(\A\B)$, we obtain the models for low-rank matrix sensing:
$
\min_{\A\in\R^{n_1\times r},\B\in\R^{r\times n_2}} \frac{1}{2}\|\A\B\|_F^2,\ \mbox{s.t.}\ \cA(\A\B)=\cA(\X^*),
$
where $\cA(\cdot)=\{\langle\A_i,\cdot\rangle\}_{i=1}^m$ is the linear operator with standard Gaussian matrix $\{\A_i\}_{i=1}^m$. As shown in Section \ref{section: Matrix Sensing}, the solution to this optimization problem performs as well as the best known results, being able to exactly recover the underlying matrix $\X^*$ with sample complexity as small as the information-theoretic limit.
\end{itemize}
}

\vspace{-0.2cm}
\subsection{Strong Duality}
\label{section: proof of strong duality}
\vspace{-0.1cm}
We first consider an easy case where $H(\A\B)=\frac{1}{2}\|\widehat \Y\|_F^2-\langle \widehat \Y,\A\B\rangle$ for a fixed $\widehat \Y$, leading to the objective function $\frac{1}{2}\|\widehat \Y-\A\B\|_F^2$. For this case, we establish the following lemma.
\vspace{-0.2cm}
\begin{lemma}
\label{lemma: local-global}
For any given matrix $\widehat \Y \in \R^{n_1 \times n_2}$, any local minimum of $f(\A,\B)=\frac{1}{2}\|\widehat \Y-\A\B\|_F^2$ over $\A\in\R^{n_1\times r}$ and $\B\in\R^{r\times n_2} (r \le \min\{n_1, n_2\})$ is globally optimal, given by $\textup{\textsf{svd}}_r(\widehat{\Y})$. The objective function $f(\A,\B)$ around any saddle point has a negative second-order directional curvature. Moreover, $f(\A,\B)$ has no local maximum.\footnote{Prior work studying the loss surface of low-rank matrix approximation assumes that the matrix $\widetilde\bLambda$ is of full rank and does not have the same singular values~\cite{Baldi1989neural}. In this work, we generalize this result by removing these two assumptions.}
\end{lemma}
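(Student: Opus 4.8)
The plan is to reduce the entire statement to a description of the critical points of $f$. Abbreviate $\M:=\widehat{\Y}$ and, for a candidate pair, $\X:=\A\B$. The stationarity equations are $\nabla_\A f=(\X-\M)\B^T=\0$ and $\nabla_\B f=\A^T(\X-\M)=\0$; multiplying the first on the left by $\A$ and the second on the left by $\B^T$ gives $(\M-\X)\X^T=\0$ and $\X^T(\M-\X)=\0$, i.e. $\M\X^T=\X\X^T$ and $\M^T\X=\X^T\X$. Plugging in a compact SVD $\X=\U_\X\mathbf{\Sigma}_\X\V_\X^T$ with $\mathbf{\Sigma}_\X$ diagonal positive definite of size $k:=\rank(\X)\le r$, and cancelling the invertible $\mathbf{\Sigma}_\X$, yields $\M\V_\X=\U_\X\mathbf{\Sigma}_\X$ and $\M^T\U_\X=\V_\X\mathbf{\Sigma}_\X$; hence the columns of $\U_\X,\V_\X$ and the diagonal of $\mathbf{\Sigma}_\X$ form $k$ singular triples of $\M$. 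So every critical point satisfies $\X=\sum_{i\in S}\sigma_i(\M)\,\u_i\v_i^T$ for a set $S$ with $|S|=k\le r$ of indices drawn from an SVD of $\M$, whence $f(\A,\B)=\tfrac12\sum_{i\notin S}\sigma_i^2(\M)$. Since $f$ is coercive in $\X$ and $\{\X:\rank(\X)\le r\}$ is closed, a global minimizer exists and is a critical point; comparing the above values over admissible $S$, the minimum value is $\tfrac12\sum_{i>r}\sigma_i^2(\M)$, attained precisely when $S$ indexes the $r$ largest singular values, i.e. $\X=\textup{\textsf{svd}}_r(\M)$.

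Next I would show that a critical point $(\A,\B)$ which is not a global minimizer is not even a local minimizer and in fact carries a direction of strictly negative second-order curvature. Suboptimality forces one of: (i) $|S|=r$ with some ``uncaptured'' index $l\notin S$ and some $j\in S$ with $\sigma_l>\sigma_j$; or (ii) $|S|<r$ with some uncaptured $\sigma_l>0$ (otherwise $\X=\M$, which is optimal). To turn this spectral picture into a curve, I would use the $\textup{GL}_r$-action $(\A,\B)\mapsto(\A\mathbf{R},\mathbf{R}^{-1}\B)$, which fixes $f$ and is a local diffeomorphism near $(\A,\B)$, to put the pair in the normal form: column $i$ of $\A$ equal to $\sqrt{\sigma_i}\,\u_i$ and row $i$ of $\B$ equal to $\sqrt{\sigma_i}\,\v_i^T$ for $i\in S$, with the remaining $r-|S|$ columns of $\A$ and rows of $\B$ zero. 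In case (i) I take the curve rotating the $(j,l)$ slot: replace column $j$ of $\A$ by $\sqrt{\sigma_j}(\cos t\,\u_j+\sin t\,\u_l)$ and row $j$ of $\B$ by $\sqrt{\sigma_j}(\cos t\,\v_j+\sin t\,\v_l)^T$; expanding $\tfrac12\|\M-\A(t)\B(t)\|_F^2$ and using orthonormality of the $\u$'s and $\v$'s gives $f=f_0+\sigma_j(\sigma_j-\sigma_l)\,t^2+O(t^4)$, a strict decrease. In case (ii) I instead populate an unused column/row pair, setting column $k{+}1$ of $\A$ to $t\,\u_l$ and row $k{+}1$ of $\B$ to $t\,\v_l^T$, so that $\A(t)\B(t)=\X+t^2\u_l\v_l^T$ exactly and $f=f_0-\sigma_l\,t^2+\tfrac12 t^4$; a minor variant (e.g. $\Delta\A=\tfrac1{\sigma_l}\u_l\mathbf{p}^T$, $\Delta\B=\mathbf{p}\v_l^T$ for a suitable $\mathbf{p}$) covers the sub-case where only a rank-deficiency of one factor, rather than a fully free slot, is available. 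Pulling the curve back through the reparametrization transplants the strictly negative curvature to the original point. Together with the first paragraph, this shows every local minimum equals $\textup{\textsf{svd}}_r(\widehat\Y)$ and every saddle point carries a negative second-order directional curvature.

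Finally, $f$ has no local maximum: at any critical point with $\A\B\neq\0$, rescaling $\B\mapsto(1+t)\B$ has vanishing first derivative at $t=0$ (since $\langle\M-\A\B,\A\B\rangle=\langle\A^T(\M-\A\B),\B\rangle=0$ at a critical point) and second derivative $\|\A\B\|_F^2>0$, so the point is not a maximum; a critical point with $\A\B=\0$ is either the global minimum (when $\widehat\Y=\0$) or admits a small perturbation of $(\A,\B)$ strictly increasing $\|\widehat\Y-\A\B\|_F$ (when $\widehat\Y\neq\0$), hence is not a maximum either.

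The spectral characterization of the critical points is the conceptual core and is short; the main obstacle is the second paragraph — carrying out the normal-form reduction and the explicit perturbation in full generality. Repeated singular values of $\widehat\Y$, and more delicately rank-deficient critical points (where one of $\A,\B$ may still have full rank $r$), make the reparametrization bookkeeping nontrivial, and one must verify the perturbations give genuinely negative, not merely non-positive, second-order curvature. This generality is precisely what lets us drop the full-rank and distinct-singular-value hypotheses used in prior work.
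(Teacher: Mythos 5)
Your route is essentially the paper's: first a spectral characterization of the critical points (every critical product $\X=\A\B$ is a sum of singular triples of $\widehat\Y$, so the critical values are $\tfrac12\sum_{i\notin S}\sigma_i^2(\widehat\Y)$), then explicit perturbations with strictly negative second-order curvature at every suboptimal critical point. The paper executes the first step by showing $\A\B=\cP_\A\widehat\Y$ with the column space of $\A$ spanned by eigenvectors of $\widehat\Y\widehat\Y^T$ and then rotating that column space (with $\B$ re-optimized through the pseudo-inverse); you instead work with the SVD of the product and a $\mathrm{GL}_r$ normal form and perturb both factors along explicit curves. Your case (i) rotation reproduces the paper's computation exactly ($f=f_0+\sigma_j(\sigma_j-\sigma_l)t^2+O(t^4)$), and your first paragraph is correct (modulo the slip that you multiply $(\X-\M)\B^T=\0$ on the \emph{right} by $\A^T$, not on the left by $\A$). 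Your attention to rank-deficient critical points is warranted: the paper's own Claim 2, which writes $\A=(\U\D)_{:\cI}\C$ with $\C$ invertible, silently assumes $\rank(\A)=r$.

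The one step that fails as written is the ``minor variant'' in case (ii). Taking $\mathbf{p}\neq\0$ with $\mathbf{p}^T\B=\0$ and $(\Delta\A,\Delta\B)=(\sigma_l^{-1}\u_l\mathbf{p}^T,\ \mathbf{p}\v_l^T)$, the first-order term vanishes (by $\A^T(\widehat\Y-\A\B)=\0$), but the directional second derivative is $\|\A\mathbf{p}\|_2^2-2\|\mathbf{p}\|_2^2$, which can be positive when $\A$ is large; since both terms are homogeneous of degree two in $\mathbf{p}$, no ``suitable $\mathbf{p}$'' alone repairs this. The repair is to unbalance the two factors: take $(\Delta\A,\Delta\B)=(c\,\sigma_l^{-1}\u_l\mathbf{p}^T,\ c^{-1}\mathbf{p}\v_l^T)$, so the cross term $\A\Delta\B$ scales as $1/c$ while $\Delta\A\Delta\B=\sigma_l^{-1}\|\mathbf{p}\|_2^2\,\u_l\v_l^T$ is unchanged, giving second derivative $c^{-2}\|\A\mathbf{p}\|_2^2-2\|\mathbf{p}\|_2^2<0$ for $c$ large. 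Note that Sylvester's rank inequality guarantees some factor has rank below $r$ whenever $\rank(\A\B)<r$, so this single (rescaled) variant covers \emph{all} of case (ii) and you never need the normal form there; the $\mathrm{GL}_r$ normal form is needed, and is legitimate, only in case (i), where $\rank(\A\B)=r$ forces the column space of $\A$ and row space of $\B$ to coincide with those of $\A\B$. Finally, for ``no local maximum'' when $\widehat\Y=\0$, being the global minimum does not by itself exclude a local maximum; add that $f$ is not locally constant, since every neighborhood of any $(\A,\B)$ contains pairs with nonzero product.
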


The proof of Lemma \ref{lemma: local-global} is basically to calculate the gradient of $f(\A,\B)$ and let it equal to zero; see Appendix \ref{section: Proof of PCA Lemma} for details. Given this lemma, we can reduce $F(\A,\B)$ to the form $\frac{1}{2}\|\widehat{\Y}-\A\B\|_F^2$ for some $\widehat \Y$ plus an extra term:
\begin{equation}\vspace{-0.2cm}
\label{equ: Lagrangian form}
\begin{split}
F(\A,\B)&\hspace{-0.05cm}=\hspace{-0.05cm}\frac{1}{2}\|\A\B\|_F^2+H(\A\B)\hspace{-0.05cm}=\hspace{-0.05cm}\frac{1}{2}\|\A\B\|_F^2+H^{**}(\A\B)\hspace{-0.05cm}=\hspace{-0.05cm}\max_\bLambda\frac{1}{2}\|\A\B\|_F^2+\langle\bLambda,\A\B\rangle-H^*(\bLambda)\\
&=\max_\bLambda\frac{1}{2}\|-\bLambda-\A\B\|_F^2-\frac{1}{2}\|\bLambda\|_F^2-H^*(\bLambda)\triangleq\max_\bLambda L(\A,\B,\bLambda),
\end{split}
\end{equation}
where we define $L(\A,\B,\bLambda)\triangleq\frac{1}{2}\|-\bLambda-\A\B\|_F^2-\frac{1}{2}\|\bLambda\|_F^2-H^*(\bLambda)$ \hongyang{as the Lagrangian of problem (\textbf{P})},\footnote{One can easily check that $L(\A,\B,\bLambda)=\min_\M L'(\A,\B,\M,\bLambda)$, where $L'(\A,\B,\M,\bLambda)$ is the Lagrangian of the constraint optimization problem $\min_{\A,\B,\M} \frac{1}{2}\|\A\B\|_F^2+H(\M),\ \mbox{s.t.}\ \M=\A\B$. With a little abuse of notation, we call $L(\A,\B,\bLambda)$ the Lagrangian of the unconstrained problem (\textbf{P}) as well.} and the second equality holds because $H$ is closed and convex w.r.t. the argument $\A\B$.
For any fixed value of $\bLambda$, by Lemma \ref{lemma: local-global}, any local minimum of $L(\A,\B,\bLambda)$ is globally optimal, because minimizing $L(\A,\B,\bLambda)$ is equivalent to minimizing $\frac{1}{2}\|-\bLambda-\A\B\|_F^2$ for a fixed $\bLambda$.

The remaining part of our analysis is to choose a proper $\widetilde \bLambda$ such that $(\widetilde \A,\widetilde\B,\widetilde \bLambda)$ is a primal-dual saddle point of $L(\A,\B,\bLambda)$, \hongyang{so that $\min_{\A,\B} L(\A,\B,\widetilde\bLambda)$ and problem (\textbf{P}) have the same optimal solution $(\widetilde \A,\widetilde\B)$.}
For this, we introduce the following condition, \hongyang{and later we will show that the condition holds with high probability.}

\begin{condition}
\label{lemma: Lagrangian multiplier}
For a solution ($\widetilde \A$, $\widetilde \B$) to problem \textup{(\textbf{P})}, there exists an $\widetilde\bLambda\in\partial_{\X} H(\X)|_{\X=\widetilde\A\widetilde \B}$ such that\vspace{-0.2cm}
\begin{equation}
\label{equ: Lambda condition}
-\widetilde\A\widetilde \B\widetilde \B^T=\widetilde \bLambda\widetilde \B^T\quad \mbox{and}\quad \widetilde \A^T(-\widetilde\A\widetilde \B)=\widetilde \A^T\widetilde \bLambda.
\end{equation}
\end{condition}

\noindent
\textbf{Explanation of Condition \ref{lemma: Lagrangian multiplier}.} We note that
$\nabla_\A L(\A,\B,\bLambda)=\A\B\B^T+\bLambda\B^T\ \mbox{and}\
\nabla_\B L(\A,\B,\bLambda)=\A^T\A\B+\A^T\bLambda$ for a fixed $\bLambda$.
In particular, if we set $\bLambda$ to be the $\widetilde \bLambda$ in \eqref{equ: Lambda condition}, then $\nabla_\A L(\A,\widetilde \B,\widetilde \bLambda)|_{\A=\widetilde \A}=\0$ and $\nabla_\B L(\widetilde \A,\B,\widetilde \bLambda)|_{\B=\widetilde \B}=\0$. So Condition \ref{lemma: Lagrangian multiplier} implies that $(\widetilde\A,\widetilde\B)$ is either a saddle point or a local minimizer of $L(\A,\B,\widetilde\bLambda)$ as a function of $(\A,\B)$ for the fixed $\widetilde\bLambda$.

The following lemma states that if it is a local minimizer, then strong duality holds.

\begin{lemma}[Dual Certificate]
\label{lemma: f and k local minimum}
Let $(\widetilde\A,\widetilde\B)$ be a global minimizer of $F(\A,\B)$. If there exists a dual certificate $\widetilde\bLambda$ satisfying Condition \ref{lemma: Lagrangian multiplier} and the pair $(\widetilde\A,\widetilde\B)$ is a local minimizer of $L(\A,\B,\widetilde\bLambda)$ for the fixed $\widetilde\bLambda$, then strong duality holds. Moreover, we have the relation $\widetilde\A\widetilde\B=\textup{\textsf{svd}}_r(-\widetilde\bLambda)$.
\end{lemma}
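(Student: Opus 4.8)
\textbf{Proof proposal for Lemma~\ref{lemma: f and k local minimum}.}
The plan is to establish a chain of (in)equalities showing that the primal optimum $F(\widetilde\A,\widetilde\B)$ equals the dual optimum, and then read off $\widetilde\A\widetilde\B = \textup{\textsf{svd}}_r(-\widetilde\bLambda)$ from the tightness of the weak-duality bound. First I would recall from \eqref{equ: Lagrangian form} that $F(\A,\B) = \max_\bLambda L(\A,\B,\bLambda)$, so for \emph{every} $\bLambda$ and in particular $\widetilde\bLambda$ we have the weak-duality inequality
\begin{equation*}
\min_{\A,\B} F(\A,\B) \ \ge\ \min_{\A,\B} L(\A,\B,\widetilde\bLambda).
\end{equation*}
The content of the lemma is that this inequality is an equality for the constructed $\widetilde\bLambda$. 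To see it, I would use the hypothesis that $(\widetilde\A,\widetilde\B)$ is a local minimizer of $L(\cdot,\cdot,\widetilde\bLambda)$: by Lemma~\ref{lemma: local-global} applied with $\widehat\Y = -\widetilde\bLambda$, the function $L(\A,\B,\widetilde\bLambda) = \frac12\|-\widetilde\bLambda-\A\B\|_F^2 - \frac12\|\widetilde\bLambda\|_F^2 - H^*(\widetilde\bLambda)$ has \emph{every} local minimum globally optimal, so $(\widetilde\A,\widetilde\B)$ is in fact a global minimizer of $L(\cdot,\cdot,\widetilde\bLambda)$, and Lemma~\ref{lemma: local-global} further identifies the minimizing product as $\widetilde\A\widetilde\B = \textup{\textsf{svd}}_r(-\widetilde\bLambda)$. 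This already yields the ``moreover'' claim.

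Next I would close the duality gap. Since $\widetilde\bLambda \in \partial_\X H(\X)|_{\X=\widetilde\A\widetilde\B}$ by Condition~\ref{lemma: Lagrangian multiplier}, the Fenchel equality holds at the pair $(\widetilde\A\widetilde\B,\widetilde\bLambda)$: $H(\widetilde\A\widetilde\B) + H^*(\widetilde\bLambda) = \langle\widetilde\bLambda,\widetilde\A\widetilde\B\rangle$. Plugging this into the definition of $L$ and comparing with $F$, I get
\begin{equation*}
L(\widetilde\A,\widetilde\B,\widetilde\bLambda) = \tfrac12\|\widetilde\A\widetilde\B\|_F^2 + \langle\widetilde\bLambda,\widetilde\A\widetilde\B\rangle - H^*(\widetilde\bLambda) = \tfrac12\|\widetilde\A\widetilde\B\|_F^2 + H(\widetilde\A\widetilde\B) = F(\widetilde\A,\widetilde\B),
\end{equation*}
where the middle step rewrites $\frac12\|-\widetilde\bLambda-\widetilde\A\widetilde\B\|_F^2 - \frac12\|\widetilde\bLambda\|_F^2 = \frac12\|\widetilde\A\widetilde\B\|_F^2 + \langle\widetilde\bLambda,\widetilde\A\widetilde\B\rangle$. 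Combining: $\min_{\A,\B} L(\A,\B,\widetilde\bLambda) = L(\widetilde\A,\widetilde\B,\widetilde\bLambda) = F(\widetilde\A,\widetilde\B) = \min_{\A,\B} F(\A,\B)$, so the weak-duality inequality above is tight. Finally I would argue that this common value also equals the bi-dual optimum $\max_\bLambda \min_{\A,\B} L(\A,\B,\bLambda)$: it is $\le \min_{\A,\B} F$ by the max–min inequality, and $\ge \min_{\A,\B} L(\A,\B,\widetilde\bLambda) = \min_{\A,\B} F$ by choosing $\bLambda = \widetilde\bLambda$, hence strong duality holds and $(\widetilde\A,\widetilde\B,\widetilde\bLambda)$ is a primal–dual saddle point; the reduction of the bi-dual to the stated convex form \eqref{equ: informal bi-dual objective function} in terms of $\|\cdot\|_{r*}$ is a separate computation deferred to the theorem.

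The main obstacle, and the step deserving the most care, is the appeal to Lemma~\ref{lemma: local-global} to upgrade ``local minimizer of $L(\cdot,\cdot,\widetilde\bLambda)$'' to ``global minimizer.'' One must check that $L(\A,\B,\widetilde\bLambda)$ differs from $\frac12\|-\widetilde\bLambda-\A\B\|_F^2$ only by an additive constant (in $(\A,\B)$), which is immediate since $-\frac12\|\widetilde\bLambda\|_F^2 - H^*(\widetilde\bLambda)$ does not depend on $(\A,\B)$; hence their local/global minimizer structures coincide and Lemma~\ref{lemma: local-global} applies verbatim, including the characterization via $\textup{\textsf{svd}}_r$. A minor subtlety is that $\textup{\textsf{svd}}_r(-\widetilde\bLambda)$ need not be unique when $-\widetilde\bLambda$ has repeated singular values at the cutoff; but Lemma~\ref{lemma: local-global} as stated already covers this (the product $\A\B$ at any global minimizer is a best rank-$r$ approximation), so the identity $\widetilde\A\widetilde\B = \textup{\textsf{svd}}_r(-\widetilde\bLambda)$ should be read as ``$\widetilde\A\widetilde\B$ is \emph{a} best rank-$r$ approximation of $-\widetilde\bLambda$,'' which is all that is needed downstream.
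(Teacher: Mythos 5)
Your proposal is correct and follows essentially the same route as the paper: upgrade the local minimizer of $L(\cdot,\cdot,\widetilde\bLambda)$ to a global one via Lemma~\ref{lemma: local-global} (yielding $\widetilde\A\widetilde\B=\textup{\textsf{svd}}_r(-\widetilde\bLambda)$), show $F(\widetilde\A,\widetilde\B)=L(\widetilde\A,\widetilde\B,\widetilde\bLambda)$, and close the gap with the max--min inequality. Your explicit Fenchel--Young equality step is just a restatement of the paper's observation that $\widetilde\bLambda=\argmax_{\bLambda}L(\widetilde\A,\widetilde\B,\bLambda)$ (via $\widetilde\A\widetilde\B\in\partial H^*(\widetilde\bLambda)$), so the two arguments coincide in substance.
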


\vspace{-0.2cm}
\begin{proofoutline}
By the assumption of the lemma, we can show that $(\widetilde\A,\widetilde\B,\widetilde\bLambda)$ is a primal-dual saddle point to the Lagrangian $L(\A,\B,\bLambda)$; see Appendix \ref{section: proof of strong duality}. To show strong duality, by the fact that $F(\A,\B)=\max_\bLambda L(\A,\B,\bLambda)$ and that $\widetilde\bLambda=\argmax_{\bLambda} L(\widetilde\A,\widetilde\B,\bLambda)$, we have
$
F(\widetilde\A,\widetilde\B)=L(\widetilde\A,\widetilde\B,\widetilde\bLambda)\le L(\A,\B,\widetilde\bLambda),
$
for any $\A,\B,$ where the inequality holds because $(\widetilde\A,\widetilde\B,\widetilde\bLambda)$ is a primal-dual saddle point of $L$. So on the one hand,
$
\min_{\A,\B}\max_\bLambda L(\A,\B,\bLambda)=F(\widetilde\A,\widetilde\B)\le \min_{\A,\B} L(\A,\B,\widetilde\bLambda)\le\max_\bLambda\min_{\A,\B} L(\A,\B,\bLambda).
$
On the other hand, by weak duality, we have
$
\min_{\A,\B}\max_\bLambda L(\A,\B,\bLambda)\ge \max_\bLambda\min_{\A,\B} L(\A,\B,\bLambda).
$
Therefore, $\min_{\A,\B}\max_\bLambda L(\A,\B,\bLambda)=\max_\bLambda\min_{\A,\B} L(\A,\B,\bLambda)$, i.e., strong duality holds. Therefore,
$
\widetilde\A\widetilde\B=\argmin_{\A\B} L(\A,\B,\widetilde\bLambda)=\argmin_{\A\B} \frac{1}{2}\|\A\B\|_F^2+\langle\widetilde\bLambda,\A\B\rangle-H^*(\widetilde\bLambda)=\argmin_{\A\B} \frac{1}{2}\|-\widetilde\bLambda-\A\B\|_F^2=\textsf{svd}_r(-\widetilde\bLambda),
$
as desired.
\end{proofoutline}

This lemma then leads to the following theorem.

\begin{theorem}
\label{theorem: strong duality with condition (c)}
Denote by $(\widetilde\A,\widetilde\B)$ the optimal solution of problem \textup{(\textbf{P})}. Define a matrix space
$$\cT\triangleq \{\widetilde\A\X^T+\Y\widetilde\B,\ \X\in\R^{n_2\times r},\ \Y\in\R^{n_1\times r}\}.$$
Then strong duality holds for problem \textup{(\textbf{P})}, provided that there exists $\widetilde \bLambda$ such that\vspace{-0.1cm}
\begin{flalign}
\label{equ: dual condition}
\begin{split}
\textup{(1)}\ \widetilde \bLambda\in\partial H(\widetilde\A\widetilde\B)\triangleq\Psi, \qquad \textup{(2)}\ \cP_\cT(-\widetilde\bLambda)=\widetilde\A\widetilde\B,\qquad \textup{(3)}\ \|\cP_{\cT^\perp}\widetilde\bLambda\|< \sigma_r(\widetilde\A\widetilde\B).\\
\end{split}
\end{flalign}
\end{theorem}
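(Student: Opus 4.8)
The plan is to show that the three dual conditions in~\eqref{equ: dual condition} are exactly what is needed to apply the Dual Certificate Lemma (Lemma~\ref{lemma: f and k local minimum}): conditions (1)--(2) will be recast as Condition~\ref{lemma: Lagrangian multiplier}, and condition (3) will be used to upgrade the stationary point of $L(\A,\B,\widetilde\bLambda)$ produced by Condition~\ref{lemma: Lagrangian multiplier} into an honest minimizer, which is the remaining hypothesis of Lemma~\ref{lemma: f and k local minimum}.

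First I would record the elementary identity
$$\cT^\perp=\{\Z\in\R^{n_1\times n_2}:\ \widetilde\A^T\Z=\0\ \text{and}\ \Z\widetilde\B^T=\0\},$$
obtained by pairing an arbitrary $\Z$ against the spanning families $\widetilde\A\X^T$ and $\Y\widetilde\B$ of $\cT$. Decomposing $-\widetilde\bLambda=\cP_\cT(-\widetilde\bLambda)+\cP_{\cT^\perp}(-\widetilde\bLambda)$ and using condition (2) gives $\cP_{\cT^\perp}(-\widetilde\bLambda)=-\widetilde\bLambda-\widetilde\A\widetilde\B$; plugging this matrix into the two defining relations of $\cT^\perp$ yields precisely the equalities $-\widetilde\A\widetilde\B\widetilde\B^T=\widetilde\bLambda\widetilde\B^T$ and $\widetilde\A^T(-\widetilde\A\widetilde\B)=\widetilde\A^T\widetilde\bLambda$ of~\eqref{equ: Lambda condition}. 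Together with condition (1), this establishes Condition~\ref{lemma: Lagrangian multiplier}, and by the gradient computation in the paragraph explaining that condition, $(\widetilde\A,\widetilde\B)$ is then a stationary point of $L(\cdot,\cdot,\widetilde\bLambda)$.

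Next I would use condition (3) to show that this stationary point is in fact $\textsf{svd}_r(-\widetilde\bLambda)$. Because $L(\A,\B,\widetilde\bLambda)$ differs from $\frac{1}{2}\|-\widetilde\bLambda-\A\B\|_F^2$ only by a constant in $(\A,\B)$, Lemma~\ref{lemma: local-global} tells us a stationary point of $L(\cdot,\cdot,\widetilde\bLambda)$ is either globally optimal (hence equal to $\textsf{svd}_r(-\widetilde\bLambda)$) or a strict saddle. Write $-\widetilde\bLambda=\widetilde\A\widetilde\B+\cP_{\cT^\perp}(-\widetilde\bLambda)$. From the description of $\cT^\perp$, the column space of $\cP_{\cT^\perp}(-\widetilde\bLambda)$ is orthogonal to $\mathrm{col}(\widetilde\A)\supseteq\mathrm{col}(\widetilde\A\widetilde\B)$ and its row space is orthogonal to $\mathrm{row}(\widetilde\B)\supseteq\mathrm{row}(\widetilde\A\widetilde\B)$, so after choosing aligned orthonormal bases $-\widetilde\bLambda$ becomes block-diagonal and its singular value multiset is the union of those of $\widetilde\A\widetilde\B$ and of $\cP_{\cT^\perp}\widetilde\bLambda$ (signs being irrelevant to singular values). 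Condition (3), $\|\cP_{\cT^\perp}\widetilde\bLambda\|<\sigma_r(\widetilde\A\widetilde\B)$, then forces $\sigma_r(\widetilde\A\widetilde\B)>0$ — so $\rank(\widetilde\A\widetilde\B)=r$ — and creates a strict gap between the $r$-th and $(r{+}1)$-st singular values of $-\widetilde\bLambda$; hence the leading rank-$r$ part of $-\widetilde\bLambda$ is exactly $\widetilde\A\widetilde\B$, i.e. $\textsf{svd}_r(-\widetilde\bLambda)=\widetilde\A\widetilde\B$, which rules out the saddle case. Thus $(\widetilde\A,\widetilde\B)$ is a global, and in particular local, minimizer of $L(\cdot,\cdot,\widetilde\bLambda)$, and Lemma~\ref{lemma: f and k local minimum} now applies verbatim to give strong duality for~(\textbf{P}).

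I expect the only delicate step to be this spectral bookkeeping: arguing that a sum of two matrices with mutually orthogonal column spaces and mutually orthogonal row spaces has singular values equal to the union of the two individual lists, and verifying that the strict inequality in condition (3) simultaneously certifies $\rank(\widetilde\A\widetilde\B)=r$ and supplies the eigengap that makes $\textsf{svd}_r(-\widetilde\bLambda)$ well defined and equal to $\widetilde\A\widetilde\B$. The rest is routine: the $\cT^\perp$ characterization, the conversion of conditions (1)--(2) into Condition~\ref{lemma: Lagrangian multiplier}, and the appeals to Lemmas~\ref{lemma: local-global} and~\ref{lemma: f and k local minimum}.
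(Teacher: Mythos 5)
Your proposal is correct and follows essentially the same route as the paper: it converts conditions (1)--(3) into Condition~\ref{lemma: Lagrangian multiplier} plus the identity $\widetilde\A\widetilde\B=\textup{\textsf{svd}}_r(-\widetilde\bLambda)$ via the characterization of $\cT^\perp$ and the block-orthogonal decomposition of $-\widetilde\bLambda$, and then invokes Lemmas~\ref{lemma: local-global} and~\ref{lemma: f and k local minimum}, exactly as the paper does (your spectral bookkeeping just spells out the step the paper states tersely).
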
\vspace{-0.3cm}

\begin{proof}
The proof idea is to construct a dual certificate $\widetilde\bLambda$ so that the conditions in Lemma \ref{lemma: f and k local minimum} hold.
$\widetilde\bLambda$ should satisfy the following:\vspace{-0.1cm}
\begin{flalign}
\label{equ: dual conditions for general problem}
\begin{split}
&\mbox{(a)}\qquad \widetilde \bLambda\in\partial H(\widetilde\A\widetilde\B),\qquad \mbox{(by Condition \ref{lemma: Lagrangian multiplier})}\\
&\mbox{(b)}\qquad (\widetilde\A\widetilde\B+\widetilde \bLambda)\widetilde \B^T=\0\quad \mbox{and}\quad \widetilde \A^T(\widetilde\A\widetilde\B+\widetilde \bLambda)=\0,\qquad \mbox{(by Condition \ref{lemma: Lagrangian multiplier})}\\
&\mbox{(c)}\qquad \widetilde\A\widetilde\B=\textsf{svd}_r(-\widetilde\bLambda).\qquad \mbox{(by the local minimizer assumption and Lemma \ref{lemma: local-global})}\\
\end{split}
\end{flalign}
It turns out that for any matrix $\M\in\R^{n_1\times n_2}$, $\cP_{\cT^\perp}\M=(\I-\widetilde\A\widetilde\A^\dag)\M(\I-\widetilde\B\widetilde\B^\dag)$ and so $\|\cP_{\cT^\perp}\M\|\le \|\M\|$, a fact that we will frequently use in the sequel. Denote by $\cU$ the left singular space of $\widetilde\A\widetilde\B$ and $\cV$ the right singular space. Then the linear space $\cT$ can be equivalently represented as $\cT=\cU+\cV$. Therefore, $\cT^\perp=(\cU+\cV)^\perp=\cU^\perp\cap\cV^\perp$. With this, we note that: (b) $(\widetilde\A\widetilde\B+\widetilde\bLambda)\widetilde\B^T=\0$ and $\widetilde\A^T(\widetilde\A\widetilde\B+\widetilde\bLambda)=\0$ imply $\widetilde\A\widetilde\B+\widetilde \bLambda\in\textsf{Null}(\widetilde\A^T)=\textsf{Col}(\widetilde\A)^\perp$ and $\widetilde\A\widetilde\B+\widetilde \bLambda\in\textsf{Row}(\widetilde\B)^\perp$ (so $\widetilde\A\widetilde\B+\widetilde \bLambda\in\cT^\perp$), and vice versa. And (c) $\widetilde\A\widetilde\B=\textsf{svd}_r(-\widetilde\bLambda)$ implies that for an orthogonal decomposition
$-\widetilde\bLambda=\widetilde\A\widetilde\B+\E, \mbox{ where } \widetilde\A\widetilde\B\in\cT, \mbox{ and }\E\in\cT^\perp$, we have
$\|\E\|<\sigma_r(\widetilde\A\widetilde\B)$. Conversely, $\|\E\|<\sigma_r(\widetilde\A\widetilde\B)$ and condition (b) imply $\widetilde\A\widetilde\B=\textup{\textsf{svd}}_r(-\widetilde\bLambda)$.
Therefore, the dual conditions in \eqref{equ: dual conditions for general problem} are equivalent to (1) $\widetilde \bLambda\in\partial H(\widetilde\A\widetilde\B)\triangleq \Psi$; (2) $\cP_\cT(-\widetilde\bLambda)=\widetilde\A\widetilde\B$; (3) $\|\cP_{\cT^\perp}\widetilde\bLambda\|< \sigma_r(\widetilde\A\widetilde\B)$.
\end{proof}

To show the dual condition in Theorem \ref{theorem: strong duality with condition (c)}, intuitively, we need to show that the angle $\theta$ between subspace $\cT$ and $\Psi$ is small (see Figure \ref{figure: dual conditions for general problem}) for a specific function $H(\cdot)$. In the following \hongyang{(see Section \ref{section: Dual Certification by Least Squares})}, we will demonstrate applications that, with randomness, obey this dual condition with high probability.

\begin{figure}[tb]
\centering\vspace{-0.6cm}
\includegraphics[width=0.5\textwidth]{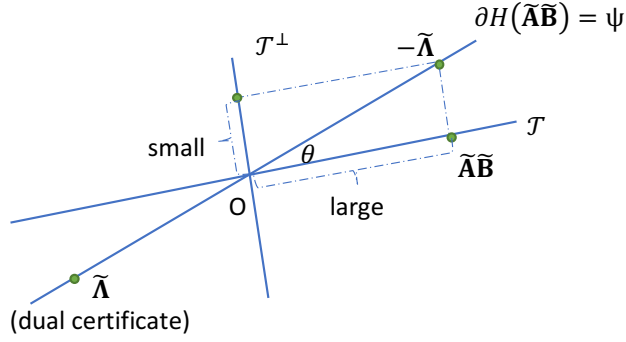}\vspace{-0.2cm}
\caption{Geometry of dual condition \eqref{equ: dual condition} for general matrix factorization problems.}
\label{figure: dual conditions for general problem}
\end{figure}

\vspace{-0.4cm}
\section{Matrix Completion}
\label{section: Matrix Completion}
\vspace{-0.1cm}

In matrix completion, there is a hidden matrix $\X^*\in\R^{n_1\times n_2}$ with rank $r$. We are given measurements $\{\X_{ij}^*: (i,j) \in \Omega\}$, where $\Omega\sim\textup{\textsf{Uniform}}(m)$, i.e., $\Omega$ is sampled uniformly at random from all subsets of $[n_1] \times [n_2]$ of cardinality $m$. The goal is to exactly recover $\X^*$ with high probability.
Here we apply our unified framework in Section \ref{section: Framework} to matrix completion, by setting $H(\cdot)=\I_{\{\M:\cP_\Omega(\M)=\cP_\Omega(\X^*)\}}(\cdot)$.

A quantity governing the difficulties of matrix completion is the incoherence parameter $\mu$. Intuitively, matrix completion is possible only if the information spreads evenly throughout the low-rank matrix. This intuition is captured by the incoherence conditions. Formally, denote by $\U\mathbf{\Sigma}\V^T$ the skinny SVD of a fixed $n_1\times n_2$ matrix $\X$ of rank $r$. Cand\`{e}s et al.~\cite{Candes,Candes2009exact,recht2011simpler,zhang2016completing} introduced the $\mu$-incoherence condition \eqref{equ: incoherence} to the low-rank matrix $\X$.
For conditions \eqref{equ: incoherence}, it can be shown that $1\le \mu\le \frac{n_{(1)}}{r}$. The condition holds for many random matrices with incoherence parameter $\mu$ about $\sqrt{r\log n_{(1)}}$~\cite{keshavan2010matrix}.

We first propose a non-convex optimization problem whose unique solution is indeed the ground truth $\X^*$, and then apply our framework to show that strong duality holds for this non-convex optimization and its bi-dual optimization problem.
\begin{theorem}[Uniqueness of Solution]
\label{theorem: uniqueness of matrix completion}
Let $\Omega\sim\textup{\textsf{Uniform}}(m)$ be the support set uniformly distributed among all sets of cardinality $m$. Suppose that $m\ge c\kappa^2\mu n_{(1)}r\log n_{(1)}\log_{2\kappa} n_{(1)}$ for an absolute constant $c$ and $\X^*$ obeys $\mu$-incoherence \eqref{equ: incoherence}. Then $\X^*$ is the unique solution of non-convex optimization
\begin{equation}
\label{equ: primal problem}
\min_{\A,\B} \frac{1}{2}\|\A\B\|_F^2,\quad \mathrm{s.t.}\quad \cP_\Omega(\A\B)=\cP_\Omega(\X^*),
\end{equation}
with probability at least $1-n_{(1)}^{-10}$.
\end{theorem}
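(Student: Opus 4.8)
The plan is to establish uniqueness by exhibiting a dual certificate and combining it with the near-orthogonality of the subspaces $\Omega$ and $\cT^\perp$ under the stated sampling regime. Let $\X^* = \U\mathbf{\Sigma}\V^T$ be the skinny SVD, set $\cT = \{\U\X^T + \Y\V^T : \X\in\R^{n_2\times r}, \Y\in\R^{n_1\times r}\}$, and recall that proving $\X^*$ is the unique rank-$\le r$ matrix of minimal Frobenius norm consistent with $\cP_\Omega$ amounts to showing that $\cD_\cS(\X^*) = \{\X - \X^* : \rank(\X)\le r,\ \|\X\|_F\le\|\X^*\|_F\}$ meets $\Omega^\perp$ only at $\0$ (the geometric picture from Figure \ref{figure: exact recovery}). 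First I would argue that any feasible competitor $\X^* + \D$ with $\D\in\Omega^\perp$ and $\rank(\X^*+\D)\le r$ must have its perturbation $\D$ concentrated near $\cT$: writing $\D = \cP_\cT\D + \cP_{\cT^\perp}\D$, the rank constraint forces $\|\cP_{\cT^\perp}\D\|_F$ to be controlled by a quadratic in $\|\cP_\cT\D\|_F$ (this is the standard ``tangent-space linearization of the rank-$r$ manifold'' estimate), while $\cP_\Omega(\D) = \0$ together with the restricted-isometry-type bound $\|\cP_\Omega\cP_\cT\|\le$ something bounded away from $1$ (which holds w.h.p. once $m\gtrsim \mu n_{(1)} r\log n_{(1)}$, by a Bernoulli-sampling operator-norm argument as in Recht's analysis) forces $\cP_\cT\D$ to be small unless $\D=\0$.

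Next I would use the certificate to rule out the remaining possibility that $\D$ is small in $\cT$ but nonzero. Here the key object is a matrix $\widetilde\bLambda$ supported on $\Omega$ (since $\partial H = \Omega$ for the indicator function $H$) with $\cP_\cT(\widetilde\bLambda) = -\U\V^T\cdot c$ for the appropriate scaling and $\|\cP_{\cT^\perp}\widetilde\bLambda\|$ strictly below the relevant threshold — morally the same dual certificate that Theorem \ref{theorem: strong duality with condition (c)} requires, constructed as a convergent Neumann series $\widetilde\bLambda = \cP_\Omega\sum_{k\ge 0}(\cP_\cT - \cP_\cT\cP_\Omega\cP_\cT)^k(\cdots)$ exploiting that $\cP_\cT\cP_{\Omega^\perp}\cP_\cT$ has operator norm $<1$. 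Pairing $\widetilde\bLambda$ against the perturbed solution: because $\widetilde\bLambda\in\Omega$ and $\D\in\Omega^\perp$ we get $\langle\widetilde\bLambda,\D\rangle = 0$, and expanding $\tfrac12\|\X^*+\D\|_F^2 = \tfrac12\|\X^*\|_F^2 + \langle\X^*,\D\rangle + \tfrac12\|\D\|_F^2$, the first-order term $\langle\X^*,\D\rangle$ can be split via the certificate into a part that vanishes by orthogonality to $\Omega^\perp$ and a part bounded by $\|\cP_{\cT^\perp}\widetilde\bLambda\|\cdot\|\cP_{\cT^\perp}\D\|_*$, which is itself dominated (using the rank-$r$ bound on $\D$ and the quadratic estimate above) strictly by $\tfrac12\|\D\|_F^2$ unless $\D=\0$. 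Hence $\|\X^*+\D\|_F^2 > \|\X^*\|_F^2$ for all nonzero feasible directions, giving uniqueness.

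The bookkeeping of the constants — in particular threading the condition number $\kappa$ through, since $\sigma_r(\X^*) = \|\X^*\|/\kappa$ enters the dual inequality $\|\cP_{\cT^\perp}\widetilde\bLambda\| < \sigma_r(\widetilde\A\widetilde\B)$ and this is what produces the $\kappa^2$ and the $\log_{2\kappa} n_{(1)}$ factors in the sample bound — will require care: the Neumann series must be truncated at $O(\log_{2\kappa} n_{(1)})$ terms and each truncation step re-examined so that the accumulated error stays below $\sigma_r(\X^*)$ rather than merely below $\|\X^*\|$. The main obstacle, I expect, is exactly this: controlling $\|\cP_{\cT^\perp}\widetilde\bLambda\|$ against the \emph{smallest} singular value $\sigma_r(\X^*)$ (not the operator norm) while keeping the sample complexity at $O(\kappa^2\mu n_{(1)} r\log n_{(1)}\log_{2\kappa} n_{(1)})$ — this is where the non-convex geometric argument genuinely improves over a naive golfing-scheme bound, because we avoid resampling across the $\log_{2\kappa} n_{(1)}$ iterations and instead reuse the single random $\Omega$ throughout, which forces a more delicate operator-norm concentration estimate for $\cP_\cT\cP_\Omega\cP_\cT$ valid simultaneously for all powers. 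The failure probability $n_{(1)}^{-10}$ then comes from a union bound over the $O(\log_{2\kappa} n_{(1)})$ concentration events, each holding with probability $1 - n_{(1)}^{-\Omega(1)}$ by matrix Bernstein.
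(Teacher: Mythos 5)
Your skeleton matches the paper's proof: reduce uniqueness to $\cD_\cS(\X^*)\cap\Omega^\perp=\{\0\}$, build a dual certificate $\widetilde\bLambda\in\Omega$ with $\cP_\cT(-\widetilde\bLambda)$ prescribed and $\|\cP_{\cT^\perp}\widetilde\bLambda\|$ small via a Neumann-series/partitioned construction, and finish with $\cT\cap\Omega^\perp=\{\0\}$ from the operator concentration bound. However, the step where you actually rule out nonzero feasible $\D$ has two genuine problems. First, the certificate must satisfy $\cP_\cT(-\widetilde\bLambda)=\X^*=\U\mathbf{\Sigma}\V^T$ \emph{exactly} (the paper's condition (2)), not $\cP_\cT(\widetilde\bLambda)=-c\,\U\V^T$; with a multiple of $\U\V^T$ the cross term $\langle\X^*+\cP_\cT\widetilde\bLambda,\D\rangle$ does not vanish unless all singular values of $\X^*$ are equal, and the whole cancellation collapses. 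Second, and more fundamentally, your comparison expands $\tfrac12\|\X^*+\D\|_F^2$ directly and asks the quadratic term $\tfrac12\|\D\|_F^2$ to dominate the certificate error $\|\cP_{\cT^\perp}\widetilde\bLambda\|\,\|\cP_{\cT^\perp}\D\|_*$, which is \emph{linear} in $\D$; for feasible directions with small nonzero $\cT^\perp$-component the linear term wins and the claimed strict inequality fails. The ``quadratic estimate'' you invoke to repair this --- that the rank constraint forces $\|\cP_{\cT^\perp}\D\|_F$ to be quadratically controlled by $\|\cP_\cT\D\|_F$, with $\cP_\cT\D$ then forced small by an RIP-type bound --- is a local (tangent-space linearization) statement and is false for the non-local perturbations allowed here: e.g.\ $\D=-\X^*+\Z$ with $\Z$ a rank-one matrix in $\cT^\perp$ is feasible for the rank constraint, has $\|\cP_\cT\D\|_F=\|\X^*\|_F$ fixed, and $\|\cP_{\cT^\perp}\D\|_F$ as large as the Frobenius-ball constraint permits. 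Also, $\cP_\Omega\D=\0$ together with injectivity of $\cP_\Omega$ on $\cT$ only bounds $\|\cP_\cT\D\|_F$ in terms of $\|\cP_{\cT^\perp}\D\|_F$ (with a $p^{-1/2}$ loss); it does not make $\cP_\cT\D$ small on its own, so the bootstrap you sketch only closes locally around $\X^*$, not over the whole feasible set.

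The paper's proof avoids exactly this trap by never comparing Frobenius norms directly: it relaxes the constraint set, using $\tfrac12\|\Y\|_F^2=\|\Y\|_{r*}$ for rank-$r$ matrices, to the sublevel set of the convex function $\|\cdot\|_{r*}$, and then applies the subgradient inequality with the characterization $\partial\|\X^*\|_{r*}=\{\X^*+\W:\ \W\in\cT^\perp,\ \|\W\|\le\sigma_r(\X^*)\}$. Choosing $\W$ aligned with $\cP_{\cT^\perp}\D$ yields a \emph{first-order} gain $\sigma_r(\X^*)\|\cP_{\cT^\perp}\D\|_*$, which strictly beats the certificate error $\|\cP_{\cT^\perp}\bLambda\|\,\|\cP_{\cT^\perp}\D\|_*<\tfrac23\sigma_r(\X^*)\|\cP_{\cT^\perp}\D\|_*$ term-by-term, leaving only the case $\cP_{\cT^\perp}\D=\0$, which is killed by $\cT\cap\Omega^\perp=\{\0\}$. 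If you want to keep your direct expansion, you would need to state and prove a genuinely global curvature bound of the form $\|\cP_{\cT^\perp}\D\|_*\le\|\D\|_F^2/\sigma_r(\X^*)$ for \emph{all} $\D$ with $\rank(\X^*+\D)\le r$ (in terms of $\|\D\|_F$, not $\|\cP_\cT\D\|_F$), which your proposal neither states nor establishes. One further narrative correction: the paper's certificate is built by partitioning $\Omega$ into $b=\Theta(\log_{2\kappa}n_{(1)})$ independent pieces each of size $\Theta(\kappa^2\mu r n_{(1)}\log n_{(1)})$ (a golfing-type scheme with per-step contraction $1/(2\kappa)$), which is precisely where the product $\kappa^2\cdot\log_{2\kappa}n_{(1)}$ in the sample bound comes from --- not from reusing a single sample set across all Neumann iterations as you describe.
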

\begin{proofoutline}
Here we sketch the proof and defer the details to Appendix \ref{section: proof of information-theoretic upper bound for matrix completion}.
We consider the feasibility of the matrix completion problem:
\begin{equation}
\label{equ: rank-constraint matrix completion main}
\mbox{Find a matrix $\X\in\R^{n_1\times n_2}$ such that}\quad\cP_\Omega(\X)=\cP_\Omega(\X^*),\quad\rank(\X)\le r, \quad \|\X\|_F \le  \|\X^*\|_F .
\end{equation}
Our proof first identifies a feasibility condition for problem \eqref{equ: rank-constraint matrix completion main}, and then shows that $\X^*$ is the only matrix which obeys this feasibility condition when the sample size is large enough. More specifically, we note that $\X^*$ obeys the conditions in problem \eqref{equ: rank-constraint matrix completion main}. Therefore, $\X^*$ is the only matrix which obeys condition \eqref{equ: rank-constraint matrix completion main} if and only if $\X^*+\D$ does not follow the condition for all $\D$, i.e., $\cD_\cS(\X^*)\cap \Omega^\perp=\{\0\}$, where $\cD_\cS(\X^*)$ is defined as
\begin{equation*}
\cD_\cS(\X^*)=\{\X-\X^*\in\R^{n_1\times n_2}:\rank(\X)\le r,\ \|\X\|_F\le \|\X^*\|_F\}.
\end{equation*}
This can be shown by combining the satisfiability of the dual conditions in Theorem~\label{theorem: strong duality with condition (c)}, and the well known fact that $\cT\cap\Omega^\perp=\{\0\}$ when the sample size is large.
\end{proofoutline}


Given the non-convex problem, we are ready to state our main theorem for matrix completion.
\begin{theorem}[Efficient Matrix Completion]
\label{theorem: matrix completion}
Let $\Omega\sim\textup{\textsf{Uniform}}(m)$ be the support set uniformly distributed among all sets of cardinality $m$.
Suppose $\X^*$ has condition number $\kappa=\sigma_1(\X^*)/\sigma_r(\X^*)$.
Then there are absolute constants $c$ and $c_0$ such that with probability at least $1-c_0n_{(1)}^{-10}$, the output of the convex problem
\begin{equation}
\label{equ: equality constraint matrix completion}
\widetilde\X=\argmin_{\X} \|\X\|_{r*},\quad\textup{\mbox{s.t.}}\quad \cP_\Omega(\X)=\cP_\Omega(\X^*),
\end{equation}
is unique and exact, i.e., $\widetilde\X=\X^*$, provided that $m\ge c\kappa^2\mu rn_{(1)}\log_{2\kappa} (n_{(1)})\log (n_{(1)})$ and $\X^*$ obeys $\mu$-incoherence \eqref{equ: incoherence}. Namely, strong duality holds for problem \eqref{equ: primal problem}.\footnote{In addition to our main results on strong duality, in a previous version of this paper we also claimed a tight information-theoretic bound on the number of samples required for matrix completion; the proof of that latter claim was problematic as stated, and so we have removed that claim in this version.}
\end{theorem}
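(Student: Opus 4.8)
The plan is to apply the framework of Section~\ref{section: Framework}. With $H(\cdot)=\I_{\{\M:\,\cP_\Omega(\M)=\cP_\Omega(\X^*)\}}(\cdot)$, problem \eqref{equ: primal problem} is exactly problem~(\textbf{P}), and $\partial H$ at any feasible point is the coordinate subspace $\Omega$. Writing $\U\mathbf{\Sigma}\V^T$ for the skinny SVD of $\X^*$ and $\cT=\cU+\cV$ for the tangent space at $\X^*$ (which is the space $\cT$ of Theorem~\ref{theorem: strong duality with condition (c)} once $\widetilde\A\widetilde\B=\X^*$), that theorem reduces the problem to exhibiting a dual certificate $\widetilde\bLambda$ with $(1)$ $\widetilde\bLambda\in\Omega$, $(2)$ $\cP_\cT(-\widetilde\bLambda)=\X^*$, and $(3)$ $\|\cP_{\cT^\perp}\widetilde\bLambda\|<\sigma_r(\X^*)$. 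Such a $\widetilde\bLambda$ does double duty: retracing the equivalences in the proof of Theorem~\ref{theorem: strong duality with condition (c)}, it forces $\X^*=\textsf{svd}_r(-\widetilde\bLambda)$, which certifies that $(\widetilde\A,\widetilde\B)$ with $\widetilde\A\widetilde\B=\X^*$ is a global minimizer of \eqref{equ: primal problem} and that strong duality holds for it; strong duality then identifies the bi-dual minimizer in \eqref{equ: equality constraint matrix completion} as $\widetilde\X=\widetilde\A\widetilde\B=\X^*$, the ``exact'' half of the claim.

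The certificate is built from the usual concentration toolkit for a rescaled sampling operator $\tfrac1q\cP_{\Omega'}$, where $\Omega'$ is a $\ber(q)$ batch: via matrix Bernstein and $\mu$-incoherence, if $q\,n_1n_2\ge c\,\delta^{-2}\mu r\,n_{(1)}\log n_{(1)}$ then with high probability $(i)$ $\|\cP_\cT(\mathrm{Id}-\tfrac1q\cP_{\Omega'})\cP_\cT\|\le\delta$ on $\cT$; $(ii)$ $\|(\mathrm{Id}-\tfrac1q\cP_{\Omega'})\Z\|_\infty\le\tfrac12\|\Z\|_\infty$ for fixed $\Z\in\cT$; and $(iii)$ $\|\cP_{\cT^\perp}(\mathrm{Id}-\tfrac1q\cP_{\Omega'})\Z\|\le\delta\,\Phi(\Z)$ for fixed $\Z\in\cT$, with $\Phi$ aggregating the $\ell_\infty$ and max-row/column-$\ell_2$ norms (both controlled by incoherence). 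These are the quantitative form of ``$\Omega$ and $\cT^\perp$ are nearly orthogonal''.

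The construction is a batched Neumann series, i.e.\ a golfing scheme tuned to a per-step contraction $\delta=\tfrac1{2\kappa}$ rather than $\tfrac12$. Split $\Omega$ into $j_0=\Theta(\log_{2\kappa} n_{(1)})$ independent $\ber(q)$ batches, with $q\asymp m/(j_0n_1n_2)$ chosen so each batch obeys the toolkit at $\delta=\tfrac1{2\kappa}$; this is precisely what makes $m\ge c\,\kappa^2\mu r\,n_{(1)}\log n_{(1)}\log_{2\kappa} n_{(1)}$ the requirement. Put $\Z_0=-\X^*$ and, for $j=1,\dots,j_0$, $\widetilde\bLambda_j=\tfrac1q\cP_{\Omega_j}\cP_\cT\Z_{j-1}$ and $\Z_j=\Z_{j-1}-\cP_\cT\widetilde\bLambda_j=\cP_\cT(\mathrm{Id}-\tfrac1q\cP_{\Omega_j})\cP_\cT\Z_{j-1}$. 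Then $\widetilde\bLambda=\sum_j\widetilde\bLambda_j\in\Omega$, and telescoping gives $\cP_\cT\widetilde\bLambda=-\X^*-\Z_{j_0}$ with $\|\Z_{j_0}\|_F\le(2\kappa)^{-j_0}\|\X^*\|_F$ and $\|\Z_{j_0}\|_\infty$ equally negligible by $(i)$--$(ii)$; since $\cT\cap\Omega^\perp=\{\0\}$ for such $m$ (the fact used in Theorem~\ref{theorem: uniqueness of matrix completion}), $\cP_\cT$ maps $\Omega$ onto $\cT$, so a small correction term on a reserved sub-batch makes $(2)$ exact without spoiling $(1)$ or $(3)$. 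For $(3)$, since $\cP_{\cT^\perp}\cP_\cT=\0$ we have $\cP_{\cT^\perp}\widetilde\bLambda=\sum_j\cP_{\cT^\perp}(\tfrac1q\cP_{\Omega_j}-\mathrm{Id})\Z_{j-1}$; feeding $(iii)$ and the geometric decay of $\Phi(\Z_{j-1})$ from $(ii)$ into the sum, and using $\Phi(\Z_0)=\Phi(\X^*)\lesssim\sqrt{\mu r}\,\sigma_1(\X^*)=\sqrt{\mu r}\,\kappa\,\sigma_r(\X^*)$, the geometric series in $\tfrac1{2\kappa}$ closes to $\|\cP_{\cT^\perp}\widetilde\bLambda\|<\sigma_r(\X^*)$ for $c$ large enough --- the $\tfrac1{2\kappa}$-decay being exactly what absorbs the extra factor $\kappa$ carried by the target $-\X^*$ (versus the $\U\V^T$ of nuclear-norm theory).

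Finally, uniqueness on the convex side. The conjugate $\|\X\|_{r*}=\max_\M\langle\M,\X\rangle-\tfrac12\sum_{i=1}^r\sigma_i^2(\M)$ has, at the rank-$r$ point $\X^*$, subdifferential $\partial\|\X^*\|_{r*}=\{\X^*+\W:\cP_\cT\W=\0,\ \|\W\|\le\sigma_r(\X^*)\}$, the exact analogue of $\partial\|\X^*\|_*=\{\U\V^T+\W:\cP_\cT\W=\0,\ \|\W\|\le1\}$. Hence $-\widetilde\bLambda=\X^*+\cP_{\cT^\perp}(-\widetilde\bLambda)$ lies in $\partial\|\X^*\|_{r*}$ and in $\Omega$, with $\cT^\perp$-part of operator norm \emph{strictly} below $\sigma_r(\X^*)$ by $(3)$; together with $\cT\cap\Omega^\perp=\{\0\}$, the standard dual-certificate uniqueness argument (identical to the nuclear-norm case) gives that $\X^*$ is the unique minimizer of \eqref{equ: equality constraint matrix completion}. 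I expect the main obstacle to be the quantitative heart of the third paragraph: proving the toolkit $(i)$--$(iii)$ with sharp $\kappa$- and $\log$-dependence, choosing the per-batch accuracies and sizes, and running the golfing ledger --- two norms to track, plus the residual-correction step --- so that the accumulated $\cT^\perp$-norm lands \emph{strictly} below $\sigma_r(\X^*)$ rather than merely at it; once this is in place, conditions $(1)$--$(3)$, strong duality, and convex uniqueness follow mechanically.
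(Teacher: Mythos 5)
Your proposal follows essentially the same route as the paper: reduce to the dual conditions of Theorem~\ref{theorem: strong duality with condition (c)} with $\Psi=\Omega$, build the certificate by a golfing scheme over $\Theta(\log_{2\kappa} n_{(1)})$ batches of size $\Theta(\kappa^2\mu r n_{(1)}\log n_{(1)})$ with per-step contraction $\tfrac{1}{2\kappa}$, restore exactness of $\cP_\cT(-\widetilde\bLambda)=\X^*$ via the near-isometry of $\cP_\cT\cP_\Omega\cP_\cT$, control $\|\cP_{\cT^\perp}\widetilde\bLambda\|$ with the standard $\ell_\infty$/$\ell_{\infty,2}$ lemmas and incoherence, and conclude exactness and uniqueness of \eqref{equ: equality constraint matrix completion} from the subdifferential of $\|\cdot\|_{r*}$ together with $\cT\cap\Omega^\perp=\{\0\}$. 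The only differences (reserved sub-batch versus full-$\Omega$ least-squares correction, and where exactly the extra $\kappa$ factor is absorbed in the bookkeeping) are cosmetic, so the plan is sound.
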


\begin{proofoutline}
We have shown in Theorem \ref{theorem: uniqueness of matrix completion} that the problem
$
(\widetilde\A,\widetilde\B)=\argmin_{\A,\B} \frac{1}{2}\|\A\B\|_F^2,\textup{\mbox{s.t.}}\cP_\Omega(\A\B)=\cP_\Omega(\X^*),
$
exactly recovers $\X^*$, i.e., $\widetilde\A\widetilde\B=\X^*$, with small sample complexity. So if strong duality holds, this non-convex optimization problem can be equivalently converted to the convex program \eqref{equ: equality constraint matrix completion}. Then Theorem \ref{theorem: matrix completion} is straightforward from strong duality.

It now suffices to apply our unified framework in Section \ref{section: Framework} to prove the strong duality. We show that the dual condition in Theorem \ref{theorem: strong duality with condition (c)} holds with high probability by the following arguments.
Let $(\widetilde\A,\widetilde\B)$ be a global solution to problem \eqref{equ: equality constraint matrix completion}.
For $H(\X)=\I_{\{\M\in\R^{n_1\times n_2}:\ \cP_\Omega\M=\cP_\Omega\X^*\}}(\X)$, we have
\begin{equation*}
\begin{split}
\Psi &= \partial H(\widetilde\A\widetilde\B)
=\{\G\in\R^{n_1\times n_2}: \langle\G,\widetilde\A\widetilde\B\rangle\ge\langle \G,\Y\rangle,\ \mbox{for any } \Y\in\R^{n_1\times n_2}\mbox{ s.t. } \cP_\Omega\Y=\cP_\Omega\X^*\}\\
&=\{\G\in\R^{n_1\times n_2}: \langle\G,\X^*\rangle\ge\langle \G,\Y\rangle,\ \mbox{for any } \Y\in\R^{n_1\times n_2}\mbox{ s.t. } \cP_\Omega\Y=\cP_\Omega\X^*\}
=\Omega,
\end{split}
\end{equation*}
where the third equality holds since $\widetilde\A\widetilde\B=\X^*$. Then we only need to show
\begin{flalign}
\label{equ: dual condition for matrix completion main}
\begin{split}
\mbox{(1)}\ \widetilde \bLambda\in\Omega,\qquad \mbox{(2)}\  \cP_{\cT}(-\widetilde\bLambda)=\widetilde\A\widetilde\B,\qquad \mbox{(3)}\  \|\cP_{\cT^\perp}\widetilde\bLambda\|<\frac{2}{3}\sigma_r(\widetilde\A\widetilde\B).\\
\end{split}
\end{flalign}
It is interesting to see that dual condition \eqref{equ: dual condition for matrix completion main} can be satisfied if the angle $\theta$ between subspace $\Omega$ and subspace $\cT$ is very small; see Figure \ref{figure: dual conditions for general problem}. When the sample size $|\Omega|$ becomes larger and larger, the angle $\theta$ becomes smaller and smaller (e.g., when $|\Omega|=n_1n_2$, the angle $\theta$ is zero as $\Omega=\R^{n_1\times n_2}$). We show that the sample size $m= \Omega(\kappa^2\mu rn_{(1)}\log_{2\kappa} (n_{(1)})\log (n_{(1)}))$ is a sufficient condition for condition \eqref{equ: dual condition for matrix completion main} to hold.
\end{proofoutline}

This positive result matches a lower bound from prior work up to a logarithmic factor, which shows that the sample complexity in Theorem \ref{theorem: uniqueness of matrix completion} is nearly optimal.
\begin{theorem}[Information-Theoretic Lower Bound. \cite{Candes2010power}, Theorem 1.7]
\label{theorem: lower bound for matrix completion, strong incoherence}
Denote by $\Omega\sim\textup{\textsf{Uniform}}(m)$ the support set uniformly distributed among all sets of cardinality $m$. Suppose that $m\le c\mu n_{(1)}r\log n_{(1)}$ for an absolute constant $c$. Then there exist infinitely many $n_1\times n_2$ matrices $\X'$ of rank at most $r$ obeying $\mu$-incoherence \eqref{equ: incoherence} such that $\cP_\Omega(\X')=\cP_\Omega(\X^*)$, with probability at least $1-n_{(1)}^{-10}$.
\end{theorem}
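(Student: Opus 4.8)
The plan is to prove this as a counting / coupon-collector lower bound, following \cite{Candes2010power}. It suffices to exhibit one hard instance $\X^*\in\R^{n_1\times n_2}$ of rank exactly $r$ obeying $\mu$-incoherence \eqref{equ: incoherence} such that, with probability at least $1-n_{(1)}^{-10}$ over $\Omega\sim\textup{\textsf{Uniform}}(m)$, the sampling pattern fails to pin down one column of $\X^*$ within the set of rank-$\le r$ matrices; a small perturbation of that column inside the column space of $\X^*$ then yields infinitely many rank-$\le r$, $\mu$-incoherent matrices agreeing with $\X^*$ on $\Omega$. Without loss of generality take $n_2=n_{(1)}\ge n_1$ (otherwise transpose and argue about rows). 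Fix $\cA\subseteq[n_1]$ with $|\cA|=\lceil n_1/\mu\rceil$, let $\U\in\R^{n_1\times r}$ have orthonormal columns supported on $\cA$ and otherwise chosen generically (so that any $r$ rows of $\U$ indexed by $\cA$ are linearly independent), pick any incoherent orthonormal $\V\in\R^{n_2\times r}$, and set $\X^*=\U\mathbf{\Sigma}\V^T$ with all singular values equal to $1$. Every row of $\U$ outside $\cA$ is zero and every row inside $\cA$ has norm $\sqrt{r/|\cA|}\le\sqrt{\mu r/n_1}$, so $\X^*$ is $\mu$-incoherent.

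The combinatorial core of the argument — and the step I expect to be the main obstacle — is the probabilistic estimate. For column $j$ write $\Omega_j=\{i:(i,j)\in\Omega\}$ and $\X^*_{:j}=\U\c_j$ with $\c_j\in\R^r$. Since the entries of column $j$ outside $\cA$ vanish, the observations in that column constrain $\c_j$ only through the equations $\U_{i,:}\c_j=\X^*_{ij}$ for $i\in\Omega_j\cap\cA$; by the general-position property these determine $\c_j$ uniquely iff $|\Omega_j\cap\cA|\ge r$. Hence it is enough to show that, with probability at least $1-n_{(1)}^{-10}$, some column $j_0$ has $|\Omega_{j_0}\cap\cA|\le r-1$ (a ``deficient column''). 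Passing from the $\textup{\textsf{Uniform}}(m)$ model to the Bernoulli model with $p=m/(n_1n_2)$ — a standard reduction that costs only absolute constants for lower bounds — the counts $|\Omega_j\cap\cA|$ are independent across the $n_2$ columns, each $\mathrm{Binomial}(|\cA|,p)$ with mean $|\cA|p\asymp m/(\mu n_2)$. A Binomial (Poisson-type) lower-tail bound on $\Pr[|\Omega_j\cap\cA|\le r-1]$ combined with a second-moment (independence) argument over the $n_2$ columns shows that once $m\le c\,\mu\,n_{(1)}\,r\log n_{(1)}$ for a small enough absolute constant $c$, the probability that every column is sampled at least $r$ times inside $\cA$ is at most $n_{(1)}^{-10}$; equivalently, a deficient column exists with the claimed probability. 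Making $c$ explicit and matching the $\log n_{(1)}$ factor in the tail estimate is precisely the content of \cite{Candes2010power}, Theorem~1.7.

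Finally, suppose $\Omega$ leaves a deficient column $j_0$. Then the null space of the linear map $\c\mapsto(\U_{i,:}\c)_{i\in\Omega_{j_0}\cap\cA}$ on $\R^r$ has dimension at least $r-(r-1)=1$, so it contains some $\w\neq\0$; for $t\in\R$ set $\X'(t)=\X^*+t\,\U\w\,\e_{j_0}^T$. Each $\X'(t)$ has column space inside $\mathrm{col}(\U)$, hence rank at most $r$; it differs from $\X^*$ only in column $j_0$, where the change at row $i$ is $t\,\U_{i,:}\w$, which vanishes for $i\in\Omega_{j_0}\cap\cA$ by the choice of $\w$ and for $i\in\Omega_{j_0}\setminus\cA$ since $\U_{i,:}=\0$ there, so $\cP_\Omega(\X'(t))=\cP_\Omega(\X^*)$. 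For generic, sufficiently small $t\neq0$ the rank of $\X'(t)$ is exactly $r$, so its left singular space is $\mathrm{col}(\U)$ itself (left incoherence unchanged), and its row space lies in $\mathrm{row}(\V)+\mathrm{span}(\e_{j_0})$ and tends to $\mathrm{row}(\V)$ as $t\to0$; hence for $|t|$ small enough $\mu$-incoherence \eqref{equ: incoherence} is preserved, allowing $\mu$ to grow by a $1+o(1)$ factor if necessary. Letting $t$ vary over a small interval produces infinitely many distinct matrices of rank at most $r$ obeying $\mu$-incoherence and agreeing with $\X^*$ on $\Omega$, which completes the proof modulo the probabilistic estimate above.
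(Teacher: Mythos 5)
The paper itself offers no proof of this statement; it is imported verbatim from Cand\`es--Tao \cite{Candes2010power}, so the only question is whether your blind argument would reconstruct that result. It would not: the probabilistic estimate you defer to (and flag as the main obstacle) is actually false for your construction once $r$ exceeds a constant. With a single support set $\cA$ of size $|\cA|=\lceil n_1/\mu\rceil$ and a generic $\U$ supported on $\cA$, ambiguity in column $j$ requires $|\Omega_j\cap\cA|\le r-1$. Under the Bernoulli model with $p=m/(n_1n_2)$ and $m=c\,\mu\,n_{(1)}r\log n_{(1)}$, the count $|\Omega_j\cap\cA|$ has mean $\lambda=p|\cA|\approx c\,r\log n_{(1)}$, and the lower tail gives $\Pr\big[|\Omega_j\cap\cA|\le r-1\big]\le e^{-\lambda}\big(e\lambda/(r-1)\big)^{r-1}= n_{(1)}^{-cr(1-o(1))}$. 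Summing over the $n_{(1)}$ columns, the expected number of deficient columns is at most $n_{(1)}^{1-cr(1-o(1))}$, which tends to $0$ whenever $r\gtrsim 2/c$. So for any absolute constant $c$ and all but constantly many ranks, with probability $1-o(1)$ \emph{every} column receives at least $r$ samples inside $\cA$, all your column-wise linear systems are uniquely solvable, and the instance certifies nothing. A single-block instance can only show the necessity of $m\gtrsim \mu n_{(1)}(r+\log n_{(1)})$, not of the product $\mu n_{(1)}r\log n_{(1)}$.

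The $r\log n_{(1)}$ threshold in \cite{Candes2010power} comes from a different hard instance, and this is the missing idea: partition a set of $\approx n_1/\mu$ rows into $r$ disjoint blocks of size $\approx n_1/(\mu r)$ and let the $k$-th column $\U_{:k}$ of $\U$ be flat and supported on the $k$-th block (row norms are again $\sqrt{\mu r/n_1}$, so incoherence \eqref{equ: incoherence} still holds). Writing column $j$ of $\X^*$ as $\U\c_j$, the coefficient $(\c_j)_k$ is constrained only by observations of column $j$ falling in block $k$, so exact identification of $\X^*$ within the rank-$\le r$, incoherent class forces every one of the $r\,n_{(1)}$ block--column cells to be hit. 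Each cell is missed independently with probability $(1-p)^{n_1/(\mu r)}\approx e^{-m/(\mu r n_{(1)})}\ge n_{(1)}^{-c}$, so the probability that no cell is missed is at most $\exp\big(-r\,n_{(1)}^{1-c}\big)\ll n_{(1)}^{-10}$; a missed cell frees one coefficient and yields the infinite family exactly as in your final paragraph. That perturbation step and the incoherence bookkeeping are essentially fine (though you should build a constant-factor slack into $\V$ so that small perturbations remain $\mu$-incoherent, since the statement does not permit inflating $\mu$ by $1+o(1)$); the genuine gap is the choice of hard instance and the resulting tail estimate, which is the heart of the theorem and cannot be patched within your single-support construction.
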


\section{Robust Principal Component Analysis}\label{sec:rpca}
In this section, we develop our theory for robust PCA based on our framework. In the problem of robust PCA, we are given an observed matrix of the form $\D=\X^*+\S^*$, where $\X^*$ is the ground-truth matrix and $\S^*$ is the corruption matrix which is sparse. The goal is to recover the hidden matrices $\X^*$ and $\S^*$ from the observation $\D$. We set $H(\X)=\lambda\|\D-\X\|_1$.

To make the information \hongyang{spreads} evenly throughout the matrix, the matrix cannot have one entry whose absolute value is significantly larger than other entries.
For the robust PCA problem, Cand\`{e}s et al.~\cite{Candes} introduced an extra incoherence condition (Recall that $\X^*=\U\mathbf{\Sigma}\V^T$ is the skinny SVD of $\X^*$)
\begin{align}
\|\U\V^T\|_\infty\le\sqrt\frac{\mu r}{n_1n_2}.\label{equ: incoherence 2}
\end{align}
In this work, we make the following incoherence assumption for robust PCA instead of \eqref{equ: incoherence 2}:
\begin{equation}
\label{equ: strong incoherence for RPCA}
\|\X^*\|_\infty\le\sqrt{\frac{\mu r}{n_1n_2}}\sigma_r(\X^*).
\end{equation}
Note that condition \eqref{equ: strong incoherence for RPCA} is very similar to the incoherence condition \eqref{equ: incoherence 2} for the robust PCA problem, but the two notions are incomparable. Note that condition \eqref{equ: strong incoherence for RPCA} has an intuitive explanation, namely, that the entries must scatter almost uniformly across the low-rank matrix.

We have the following results for robust PCA.

\begin{theorem}[Robust PCA]
\label{theorem: robust PCA}
Suppose $\X^*$ is an $n_1\times n_2$ matrix of rank $r$, and obeys incoherence \eqref{equ: incoherence} and \eqref{equ: strong incoherence for RPCA}.
Assume that the support set $\Omega$ of $\S^*$ is uniformly distributed among all sets of cardinality $m$. Then with probability at least $1-cn_{(1)}^{-10}$, the output of the optimization problem
\begin{equation*}
(\widetilde\X,\widetilde\S)=\argmin_{\X,\S} \|\X\|_{r*}+\lambda \|\S\|_1,\quad\textup{\mbox{s.t.}}\quad \D=\X+\S,
\end{equation*}
with $\lambda=\frac{\sigma_r(\X^*)}{\sqrt{n_{(1)}}}$ is exact, namely, $\widetilde\X=\X^*$ and $\widetilde\S=\S^*$, provided that
$\rank(\X^*)\le \rho_r\frac{n_{(2)}}{\mu \log^2 n_{(1)}}\ \mbox{and}\ m\le\rho_sn_1n_2$,
where $c$, $\rho_r$, and $\rho_s$ are all positive absolute constants, and function $\|\cdot\|_{r*}$ is given by \eqref{equ: r* norm}.
\end{theorem}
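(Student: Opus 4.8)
The plan is to establish Theorem~\ref{theorem: robust PCA} by verifying that the convex program
\[
(\widetilde\X,\widetilde\S)=\argmin_{\X,\S}\ \|\X\|_{r*}+\lambda\|\S\|_1,\quad\text{s.t.}\quad \D=\X+\S,
\]
admits $(\X^*,\S^*)$ as its unique optimum, using the standard dual-certificate machinery for robust PCA as in Cand\`es et al.~\cite{Candes}, but adapted to the regularizer $\|\cdot\|_{r*}$ in place of the nuclear norm. The crucial structural input is that, as mentioned in the ``Other Techniques'' paragraph, the subdifferential of $\|\cdot\|_{r*}$ at $\X^*$ behaves very much like that of the nuclear norm: since $\rank(\X^*)\le r$ and (as will follow from Lemma~\ref{lemma: local-global} / the bi-dual characterization) $\|\X^*\|_{r*}=\frac12\|\X^*\|_F^2+\|\X^*\|_*$ restricted to rank-$\le r$ matrices, one shows $\partial\|\X^*\|_{r*}=\{\U\V^T+\X^*+\W:\ \cP_{\cT}\W=\0,\ \|\W\|\le 1\}$ or a suitably scaled variant (where $\cT$ is the tangent space $\cU+\cV$). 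First I would make this subdifferential computation precise, so that the optimality condition for the convex program reduces to: there exists a dual matrix $\bLambda$ with $\cP_\cT(\bLambda)=\U\V^T+\cP_\cT(\X^*)$ (the ``on-support'' part), $\|\cP_{\cT^\perp}\bLambda\|<\sigma_r(\X^*)$ or $<1$ after scaling, together with $\bLambda=\lambda\,\sgn(\S^*)$ on $\Omega$ and $\|\cP_{\Omega^\perp}\bLambda\|_\infty<\lambda$ on $\Omega^\perp$ — exactly the two-sided conditions of robust PCA.

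Next I would construct the dual certificate by the golfing scheme. Write $\Omega^\perp$ (the complement of the corruption support) as a union of $\cO(\log n_{(1)})$ independent Bernoulli batches, and build $\bLambda=\bLambda_\cL+\bLambda_\cS$ where $\bLambda_\cL$ is obtained by iteratively projecting onto $\cT$ over the batches to drive $\cP_\cT\bLambda_\cL$ toward $\U\V^T+\cP_\cT(\X^*)$ while keeping $\bLambda_\cL$ supported off $\Omega$, and $\bLambda_\cS$ handles the $\lambda\sgn(\S^*)$ part on $\Omega$ via a least-squares / Neumann-series correction $\cP_\Omega(\cP_\Omega-\cP_\Omega\cP_\cT\cP_\Omega)^{-1}$ applied to the appropriate residual. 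The incoherence hypotheses \eqref{equ: incoherence} and \eqref{equ: strong incoherence for RPCA} feed into the standard operator-norm and $\ell_\infty$ bounds: \eqref{equ: incoherence} controls $\|\cP_\Omega\cP_\cT\|$ and the spectral norms of the batch contributions, while \eqref{equ: strong incoherence for RPCA} (replacing $\|\U\V^T\|_\infty\le\sqrt{\mu r/(n_1n_2)}$) is used to bound $\|\cP_\cT\bLambda\|_\infty$ and hence verify $\|\cP_{\Omega^\perp}\bLambda\|_\infty<\lambda$. The choice $\lambda=\sigma_r(\X^*)/\sqrt{n_{(1)}}$ is dictated by matching these two bounds, and the sparsity fraction $\rho_s$ and rank bound $\rho_r\,n_{(2)}/(\mu\log^2 n_{(1)})$ emerge from requiring $\|\cP_\Omega\cP_\cT\cP_\Omega\|\le 1/2$ (so the Neumann series converges) and from the $\log n_{(1)}$ golfing iterations each contracting the $\cT$-residual by a constant factor.

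The main obstacle I expect is twofold. First, proving the exact form of $\partial\|\cdot\|_{r*}$ at a rank-$\le r$ matrix: unlike the nuclear norm, $\|\cdot\|_{r*}$ is defined via the Moreau-type envelope $\|\X\|_{r*}=\max_\M\langle\M,\X\rangle-\frac12\|\M\|_r^2$, and one must show that at $\X^*$ the maximizer is essentially $\X^*$ itself plus a $\cT^\perp$-term of bounded norm, which requires a careful argument about the singular-value geometry of $\|\cdot\|_r^2$ (the sum of top-$r$ squared singular values) — in particular that the relevant ``slack'' directions are precisely $\cT^\perp$ with the operator-norm ball constraint, and that the $\sigma_r(\X^*)$ threshold (rather than $1$) is the correct normalization. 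Second, pushing the certificate argument through so that the \emph{strict} inequalities hold, giving uniqueness, and ensuring the rank bound is allowed to be nearly linear in $n_{(2)}$ (a constant fraction up to $\mu\log^2$ factors) rather than the much smaller bounds in competing non-convex analyses; this is where the incoherence condition \eqref{equ: strong incoherence for RPCA} must be exploited tightly, and I would verify that every place the original robust-PCA proof invoked $\|\U\V^T\|_\infty$ can be re-derived from $\|\X^*\|_\infty\le\sqrt{\mu r/(n_1n_2)}\,\sigma_r(\X^*)$ after dividing through by $\sigma_r(\X^*)$, absorbing the condition number $\kappa$ where necessary. Once the dual certificate is in hand, exactness and uniqueness follow from a routine convexity argument combining the subgradient inequality for $\|\cdot\|_{r*}+\lambda\|\cdot\|_1$ with the strict bounds on $\cP_{\cT^\perp}\bLambda$ and $\cP_{\Omega^\perp}\bLambda$, exactly as in the nuclear-norm case.
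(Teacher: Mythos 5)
Your proposal follows essentially the same route as the paper: compute the subdifferential of $\|\cdot\|_{r*}$ at $\X^*$, reduce exactness and uniqueness to a two-sided dual-certificate condition, and construct the certificate as a golfing-scheme term on $\Omega^\perp$ for the $\cT$-part plus a least-squares Neumann-series term on $\Omega$ for the sign part, with incoherence \eqref{equ: incoherence} controlling $\|\cP_\Omega\cP_\cT\|$ and \eqref{equ: strong incoherence for RPCA} supplying the $\ell_\infty$ bounds and the choice $\lambda=\sigma_r(\X^*)/\sqrt{n_{(1)}}$. One correction to the optimality condition you sketched: the paper's Lemma \ref{lemma: subgradient of r*} shows $\partial\|\X^*\|_{r*}=\{\X^*+\W:\ \U^T\W=\0,\ \W\V=\0,\ \|\W\|\le\sigma_r(\X^*)\}$, with no $\U\V^T$ term, so the certificate equation reads $\X^*+\W=\lambda(\sign(\S^*)+\F+\cP_\Omega\mathbf{K})$ with $\W\in\cT^\perp$ and $\|\W\|\le\sigma_r(\X^*)/2$ --- which is exactly why the threshold is $\sigma_r(\X^*)$ rather than $1$; with that adjustment your plan matches the paper's proof.
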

The bounds on the rank of $\X^*$ and the sparsity of $\S^*$ in Theorem \ref{theorem: robust PCA} match the best known results for robust PCA in prior work when we assume the support set of $\S^*$ is sampled uniformly~\cite{Candes}.

\section{Computational Aspects}

\label{section: computations}

\medskip
\noindent{\textbf{Computational Efficiency.}}
We discuss our computational efficiency given that we have strong duality. We note that the dual and bi-dual of primal problem (\textbf{P}) are given by (see Appendix \ref{section: Dual and Bi-Dual Problems})
\vspace{-0.3cm}
\begin{equation}
\label{equ: r* norm}
\begin{split}
&(\textbf{\textup{Dual, D1}})\qquad \max_{\bLambda\in\R^{n_1\times n_2}} -H^*(\bLambda)-\frac{1}{2}\|\bLambda\|_r^2,\quad \mbox{where } \|\bLambda\|_r^2=\sum_{i=1}^r\sigma_i^2(\bLambda),\\
&(\textbf{\textup{Bi-Dual, D2}})\qquad \min_{\M\in\R^{n_1\times n_2}} H(\M)+\|\M\|_{r*},\quad \mbox{where } \|\M\|_{r*}=\max_\X \langle\M,\X\rangle-\frac{1}{2}\|\X\|_r^2.
\end{split}
\end{equation}
Problems (\textbf{D1}) and (\textbf{D2}) can be solved efficiently due to their convexity. In particular, Grussler et al.~\cite{grussler2016low} provided a computationally efficient algorithm to compute the proximal operators of functions $\frac{1}{2}\|\cdot\|_r^2$ and $\|\cdot\|_{r*}$. Hence, the Douglas-Rachford algorithm can find global minimum up to an $\epsilon$ error in function value in time $\textsf{poly}(1/\epsilon)$~\cite{he20121}.

\medskip
\noindent{\textbf{Computational Lower Bounds.}}
Unfortunately, strong duality does not always hold for general non-convex problems (\textbf{P}). Here we present a very strong
lower bound based on the random 4-SAT hypothesis.
This is by now a fairly standard conjecture in complexity theory
\cite{f02} and gives us constant factor inapproximability of problem
(\textbf{P}) for
deterministic algorithms, even those running in exponential time.

If we additionally assume that {\sf BPP =  P}, where
{\sf BPP}
is the class of problems which can be solved in probabilistic polynomial
time, and {\sf P} is the class of problems which can be solved in
deterministic polynomial
time, then the same conclusion holds for randomized algorithms. This is
also a standard conjecture in complexity theory, as it is implied by
the existence of certain strong pseudorandom generators or if any problem
in deterministic exponential time has exponential size circuits
\cite{iw97}. Therefore, any subexponential time algorithm achieving
a sufficiently small constant
factor approximation to problem (\textbf{P}) in general
would imply a major breakthrough in complexity theory.

The lower bound is proved by a reduction from the Maximum Edge Biclique problem~\cite{ambuhl2011inapproximability}. The details are presented in Appendix \ref{section: proof of lower bounds}.

\begin{theorem}[Computational Lower Bound] \label{theorem: lower bound for general optimization}
Assume Conjecture~\ref{conj:r4sat} (the hardness of Random 4-SAT). Then there exists an absolute constant $\epsilon_0 > 0$ for which any deterministic algorithm achieving $(1+\epsilon) \OPT$ in the objective function value for problem \textup{(\textbf{P})} with $\epsilon \le \epsilon_0$, requires $2^{\Omega(n_1 + n_2)}$ time, where $\OPT$ is the optimum. If in addition, {\sf BPP = P}, then the same conclusion holds for randomized algorithms succeeding with probability at least $2/3$.
\end{theorem}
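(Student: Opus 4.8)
The plan is to realize an instance of a known NP-hard combinatorial problem inside the matrix factorization objective $F(\A,\B) = H(\A\B) + \frac{1}{2}\|\A\B\|_F^2$ by a careful choice of the convex closed function $H(\cdot)$, so that a sufficiently good multiplicative approximation to $\OPT$ would determine the combinatorial answer. Following the reference to \cite{ambuhl2011inapproximability}, I would reduce from the Maximum Edge Biclique problem: given a bipartite graph $G$ on $n_1 + n_2$ vertices, decide whether it contains a balanced biclique of a prescribed size, or more precisely distinguish whether the maximum edge biclique has at least $\theta$ edges or at most $(1-\delta)\theta$ edges. Under the random 4-SAT hardness conjecture (Conjecture~\ref{conj:r4sat}), this gap version of Maximum Edge Biclique has no subexponential-time deterministic algorithm; this is exactly the chain of implications recorded in \cite{razenshteyn2016weighted,ambuhl2011inapproximability,f02}.

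The key steps, in order, are as follows. First, encode the bipartite adjacency structure of $G$ into a weight pattern or a masking pattern so that $H(\X)$ becomes something like a weighted squared-loss $H(\X) = \frac{1}{2}\sum_{(i,j)} W_{ij}(\X_{ij} - M_{ij})^2$ plus, if needed, an indicator term $\I_{\cC}$ enforcing sign or magnitude constraints; one checks that each such $H$ is convex and closed, so it is a legal instance of (\textbf{P}). Second, take $r$ to be the target biclique rank (typically $r=1$ suffices for the edge-biclique encoding, which is the cleanest route), and argue that the unconstrained minimization over $\A \in \R^{n_1 \times r}$, $\B \in \R^{r \times n_2}$ of $F(\A,\B)$ has optimum value $\OPT$ that is a strictly decreasing, explicitly computable function of the size of the largest biclique in $G$ — roughly, a large biclique lets $\A\B$ match many prescribed entries at low cost, while the $\frac{1}{2}\|\A\B\|_F^2$ regularizer and the off-support penalties keep the objective bounded below. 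Third, quantify the gap: show that a $(1-\delta)$-vs-$1$ gap in biclique size induces a multiplicative $(1+\epsilon_0)$ gap in $\OPT$ for some absolute constant $\epsilon_0 > 0$, so that any algorithm returning a value within $(1+\epsilon)\OPT$ for $\epsilon \le \epsilon_0$ solves the gap biclique instance. Fourth, invoke the conjectured $2^{\Omega(n_1+n_2)}$ lower bound for deterministic algorithms on gap Maximum Edge Biclique to conclude the same for (\textbf{P}); and finally, observe that if $\textsf{BPP} = \textsf{P}$ then any randomized algorithm succeeding with probability $\ge 2/3$ would, by standard derandomization, yield a deterministic subexponential algorithm, a contradiction — this gives the randomized case.

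The main obstacle I anticipate is the third step: controlling the value $\OPT$ of the non-convex program tightly enough as a function of the biclique size to produce a genuine constant-factor gap. The difficulty is that $\OPT$ is the minimum of a non-convex function, so I cannot simply read it off; I need matching upper and lower bounds on $\min_{\A,\B} F(\A,\B)$. For the upper bound I would exhibit an explicit rank-$r$ candidate built from the large biclique (an indicator-type factorization scaled to balance the loss against the $\frac12\|\cdot\|_F^2$ term). For the lower bound I would use convexity in $\X = \A\B$ together with the rank constraint: since $F$ as a function of $\X$ is convex, $\min_{\rank(\X)\le r} [H(\X) + \frac12\|\X\|_F^2]$ is at least the biclique-size-dependent bound obtained by a counting/combinatorial argument on how many prescribed entries a rank-$r$ matrix can fit cheaply — essentially the same calculation underlying the inapproximability of weighted low-rank approximation in \cite{razenshteyn2016weighted}. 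Aligning the constants in these two bounds so the ratio is bounded away from $1$ uniformly in $n_1, n_2$ is the delicate part; the rest is bookkeeping and the (standard) complexity-theoretic wrapper.
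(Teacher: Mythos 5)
Your overall route is the paper's: encode a Maximum Edge Biclique instance into a weighted squared loss $H(\A\B)=\beta\sum_{ij}\W_{ij}^2(\Y_{ij}-(\A\B)_{ij})^2$ with unit weights on edges and $\mathrm{poly}(n)$ weights on non-edges, so that a constant-factor approximation of $\OPT$ for (\textbf{P}) decides the gap biclique problem, whose hardness under Conjecture~\ref{conj:r4sat} for deterministic algorithms is imported from \cite{goerdt2004approximation}. The ``delicate third step'' you flag is handled in the paper rather crudely and this suffices: in the yes case one plugs in the biclique indicator to get $\OPT\le \beta[t n^2-(n/16)^2(1+\epsilon_1)]+n^2$, in the no case every solution pays at least $\beta[t n^2-(n/16)^2(1+\epsilon_2)]$ (the huge non-edge weights force near-zero values off the edge set, so cheap entries form a biclique), and $\beta$ is chosen large enough that these two bounds differ by a constant factor. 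So the deterministic half of your sketch matches the paper, modulo filling in that counting argument.

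The genuine gap is in your randomized case. {\sf BPP = P} only derandomizes \emph{polynomial-time} randomized algorithms; it does not, by any ``standard derandomization,'' convert a randomized algorithm running in time $2^{\alpha(n_1+n_2)}$ with $\alpha=o(1)$ into a deterministic subexponential one, so your final step fails as stated. The paper closes this with a padding argument: define a variant of (\textbf{P}) whose input description of $H$ is padded with a string of $1$s of length $2^{(\alpha/2)(n_1+n_2)}$, so the input size becomes $N=\mathrm{poly}(n_1,n_2)+2^{(\alpha/2)(n_1+n_2)}$. The deterministic lower bound $2^{\Omega(n_1+n_2)}=N^{\omega(1)}$ still applies to the padded problem, while a hypothetical $2^{\alpha(n_1+n_2)}$-time randomized algorithm would run in $\mathrm{poly}(N)$ time, and only then does {\sf BPP = P} give the contradiction. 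You need this scaling trick (or an equivalent one) to rule out subexponential-time randomized algorithms.
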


\begin{proofoutline}
Theorem~\ref{theorem: lower bound for general optimization} is proved by using the hypothesis that random 4-SAT is hard to show
  hardness of the Maximum Edge Biclique problem for deterministic algorithms. We then do a reduction from the Maximum Edge Biclique problem to our problem.
\end{proofoutline}

The complete proofs of other theorems/lemmas and related work can be found in the appendices.

\medskip
\noindent{\textbf{Acknowledgments.}} We thank Rina Foygel Barber, Rong Ge, Jason D. Lee, Zhouchen Lin, Guangcan Liu, Tengyu Ma, Benjamin Recht, Xingyu Xie, and Tuo Zhao for useful discussions. This work was supported in part by NSF grants NSF CCF-1422910, NSF CCF-1535967, NSF CCF-1451177, NSF IIS-1618714, NSF CCF-1527371, a Sloan Research Fellowship, a Microsoft Research Faculty Fellowship, DMS-1317308, Simons Investigator Award, Simons Collaboration Grant, and ONR-N00014-16-1-2329.

\newpage
\appendices

\newcommand{\poly}{{\mathrm{poly}}}

\section{Other Related Work}
\label{sec:related}
Non-convex matrix factorization is a popular topic studied in theoretical computer science~\cite{jain2013low,hardt2014understanding,sun2015guaranteed,razenshteyn2016weighted}, machine learning~\cite{bhojanapalli2016global,ge2016matrix,ge2015escaping,jain2010guaranteed,li2016recovery}, and optimization~\cite{wen2012solving,shen2014augmented}.
We review several lines of research on studying the global optimality of such optimization problems.

\medskip
\noindent{\textbf{Global Optimality of Matrix Factorization.}}
While lots of matrix factorization problems have been shown to have no spurious local minima, they either require additional conditions on the local minima, or are based on particular forms of the objective function. Specifically, Burer and Monteiro~\cite{burer2005local} showed that one can minimize $F(\A\A^T)$ for any convex function $F$ by solving for $\A$ directly without introducing any local minima, provided that the rank of the output $\A$ is larger than the rank of the true minimizer $\X_{true}$. However, such a condition is often impossible to check as $\rank(\X_{true})$ is typically unknown a priori. To resolve the issue, Bach et al.~\cite{bach2008convex} and Journ{\'e}e et al.~\cite{journee2010low} proved that $\X=\A\A^T$ is a global minimizer of $F(\X)$, if $\A$ is a rank-deficient local minimizer of $F(\A\A^T)$ and $F(\X)$ is a twice differentiable convex function. Haeffele and Vidal~\cite{haeffele2015global} further extended this result by allowing a more general form of objective function $F(\X)=G(\X)+H(\X)$, where $G$ is a twice differentiable convex function with compact level set and $H$ is a proper convex function such that $F$ is lower semi-continuous. However, a major drawback of this line of research is that these result fails when the local minimizer is of full rank.

\medskip
\noindent{\textbf{Matrix Completion.}}
Matrix completion is a prototypical example of matrix factorization.
One line of work on matrix completion builds on convex relaxation (e.g.,~\cite{srebro2005rank,Candes2009exact,Candes2010power,recht2011simpler,chandrasekaran2012convex,negahban2012restricted}).
Recently, Ge et al.~\cite{ge2016matrix} showed that matrix completion has no spurious local optimum, when $|\Omega|$ is sufficiently large and the matrix $\Y$ is incoherent. The result is only for positive semi-definite matrices and their sample complexity is not nearly optimal.

Another line of work is built upon good initialization for global convergence. Recent attempts showed that one can first compute some form of initialization (e.g., by singular value decomposition) that is close to the global minimizer and then use non-convex approaches to reach global optimality, such as alternating minimization, block coordinate descent, and gradient descent~\cite{keshavan2010matrixnoise,keshavan2010matrix,jain2013low,keshavan2012efficient,hardt2014understanding,bhojanapalli2015dropping,zheng2015convergent,zhao2015nonconvex,tu2015low,chen2015fast,sun2015guaranteed}.
In our result, in contrast, we can reformulate non-convex matrix completion problems as equivalent convex programs, which guarantees global convergence from any initialization.

\medskip
\noindent{\textbf{Robust PCA.}}
Robust PCA is also a prototypical example of matrix factorization. The goal is to recover both the low-rank and the sparse components exactly from their superposition~\cite{Candes,netrapalli2014non,gu2016low,Zhang2015AAAI,zhang2016completing,yi2016fast}. It has been widely applied to various tasks, such as video denoising, background
modeling, image alignment,
photometric stereo, texture
representation, subspace clustering, and spectral clustering.

There are typically two settings in the robust PCA literature: a) the support set of the sparse matrix is uniformly sampled~\cite{Candes,zhang2016completing}; b) the support set of the sparse matrix is deterministic, but the non-zero entries in each row or column of the matrix cannot be too large~\cite{yi2016fast,Rong2017No}. In this work, we discuss the first case. Our framework provides results that match the best known work in setting (b)~\cite{Candes}.

\medskip
\noindent{\textbf{Other Matrix Factorization Problems.}}
Matrix sensing is another typical matrix factorization problem~\cite{chandrasekaran2012convex,jain2013low,zhao2015nonconvex}. Bhojanapalli et al.~\cite{bhojanapalli2016global} and Tu et al.~\cite{tu2015low} showed that the matrix recovery model
$\min_{\A,\B}\frac{1}{2}\|\cA(\A\B-\Y)\|_F^2$,
achieves optimality for every local minimum, if the operator $\cA$ satisfies the restricted isometry property. They further gave a lower bound and showed that the unstructured operator $\cA$ may easily lead to a local minimum which is not globally optimal.

Some other matrix factorization problems are also shown to have nice geometric properties such as the property that all local minima are global minima. Examples include dictionary learning~\cite{sun2016complete}, phase retrieval~\cite{sun2016geometric}, and linear deep neural networks~\cite{kawaguchi2016deep}.
In multi-layer linear neural networks where the goal is to learn a multi-linear projection $\X^*=\prod_i \W_i$, each $\W_i$ represents the weight matrix that connects the hidden units in the $i$-th and $(i+1)$-th layers. The study of such linear models is central to the theoretical understanding of the loss surface of deep neural networks with non-linear activation functions~\cite{kawaguchi2016deep,choromanska2015loss}. In dictionary learning, we aim to recover a complete (i.e., square and invertible) dictionary matrix $\A$ from a given signal $\X$ in the form of $\X=\A\B$, provided that the representation coefficient $\B$ is sufficiently sparse. This problem centers around solving a non-convex matrix factorization problem with a sparsity constraint on the representation coefficient $\B$~\cite{bach2008convex,sun2016complete,sun2015completeII,arora2014provable}. Other high-impact examples of matrix factorization models range from the classic unsupervised learning problems like PCA, independent component analysis, and clustering, to the more recent problems such as non-negative matrix factorization, weighted low-rank matrix approximation, sparse coding, tensor decomposition~\cite{bhaskara2014smoothed,anandkumar2014tensor}, subspace clustering~\cite{zhang2015relations,Zhang:RobustLatLRR}, etc.
Applying our framework to these other problems is left for future work.

\medskip
\noindent{\textbf{Atomic Norms.}} The atomic norm is a recently proposed function for linear inverse problems~\cite{chandrasekaran2012convex}. Many well-known norms, e.g., the $\ell_1$ norm and the nuclear norm, serve as special cases of atomic norms. It has been widely applied to the problems of compressed sensing~\cite{tang2013compressed}, low-rank matrix recovery~\cite{candes2013simple}, blind deconvolution~\cite{ahmed2014blind}, etc. The norm is defined by the Minkowski functional associated with the convex hull of a set $\cA$: $\|\X\|_\cA=\inf\{t>0:\ \X\in t\cA\}$. In particular, if we set $\cA$ to be the convex hull of the infinite set of unit-$\ell_2$-norm rank-one matrices, then $\|\cdot\|_\cA$ equals to the nuclear norm. We mention that our objective term $\|\A\B\|_F$ in problem \eqref{equ: informal objective function} is similar to the atomic norm, but with slight differences: unlike the atomic norm, we set $\cA$ to be the infinite set of unit-$\ell_2$-norm rank-$r$ matrices for $\rank(\X)\le r$. With this, we achieve better sample complexity guarantees than the atomic-norm based methods.

\section{Proof of Lemma \ref{lemma: local-global}}
\label{section: Proof of PCA Lemma}

\noindent{\textbf{Lemma \ref{lemma: local-global}} (Restated)\textbf{.}}
\emph{For any given matrix $\widehat \Y \in \R^{n_1 \times n_2}$, any local minimum of $f(\A,\B)=\frac{1}{2}\|\widehat \Y-\A\B\|_F^2$ over $\A\in\R^{n_1\times r}$ and $\B\in\R^{r\times n_2} (r \le \min\{n_1, n_2\})$ is globally optimal, given by $\textup{\textsf{svd}}_r(\widehat{\Y})$. The objective function $f(\A,\B)$ around any saddle point has a negative second-order directional curvature. Moreover, $f(\A,\B)$ has no local maximum.
}

\begin{proof}
($\A$,$\B$) is a critical point of $f(\A,\B)$ if and only if $\nabla_\A f(\A,\B)=\0$ and $\nabla_\B f(\A,\B)=\0$, or equivalently,
\begin{equation}
\label{equ: gradient=0}
\A\B\B^T=\widehat{\Y}\B^T\quad \mbox{and}\quad \A^T\A\B=\A^T\widehat{\Y}.
\end{equation}
Note that for any fixed matrix $\A$ (resp. $\B$), the function $f(\A,\B)$ is convex in the coefficients of $\B$ (resp. $\A$).

To prove the desired lemma, we have the following claim.
\begin{claim}
If two matrices $\A$ and $\B$ define a critical point of $f(\A,\B)$, then the global mapping $\M=\A\B$ is of the form
\begin{equation*}
\M=\cP_{\A}\widehat{\Y},
\end{equation*}
with $\A$ satisfying
\begin{equation}
\label{equ: condition on A}
\A\A^\dag\widehat{\Y}\widehat{\Y}^T=\A\A^\dag\widehat{\Y}\widehat{\Y}^T\A\A^\dag=\widehat{\Y}\widehat{\Y}^T\A\A^\dag.
\end{equation}
\end{claim}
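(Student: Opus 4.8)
The plan is to argue entirely from the two stationarity equations recorded in \eqref{equ: gradient=0}, namely $\A\B\B^T=\widehat\Y\B^T$ and $\A^T\A\B=\A^T\widehat\Y$, using only the elementary facts that $\cP_\A:=\A\A^\dag$ is the orthogonal projector onto $\textsf{Col}(\A)$ (hence symmetric and idempotent) and that $\A\A^\dag\A=\A$. No case analysis on the rank of $\A$ should be needed; the one point requiring care is to avoid formally inverting $\A^T\A$ when $\A$ is rank-deficient, which I would sidestep by working with null spaces.

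First I would establish the formula $\M=\cP_\A\widehat\Y$. Rewriting the second equation of \eqref{equ: gradient=0} as $\A^T(\A\B-\widehat\Y)=\0$ shows that every column of $\M-\widehat\Y$ lies in $\textsf{Null}(\A^T)=\textsf{Col}(\A)^\perp$, so $\cP_\A(\M-\widehat\Y)=\0$. Since $\textsf{Col}(\M)=\textsf{Col}(\A\B)\subseteq\textsf{Col}(\A)$ one also has $\cP_\A\M=\A\A^\dag\A\B=\A\B=\M$, and combining the two identities gives $\M=\cP_\A\widehat\Y$, the first assertion of the claim.

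Next I would derive the invariance identity \eqref{equ: condition on A}. Transposing the first equation of \eqref{equ: gradient=0} gives $\B\B^T\A^T=\B\widehat\Y^T$, and left-multiplying by $\A$ yields $\M\M^T=\A\B\B^T\A^T=\A\B\widehat\Y^T=\M\widehat\Y^T$. Substituting $\M=\cP_\A\widehat\Y$ from the previous step, the right side becomes $\M\widehat\Y^T=\cP_\A\widehat\Y\widehat\Y^T$ while the left side becomes $\M\M^T=(\cP_\A\widehat\Y)(\cP_\A\widehat\Y)^T=\cP_\A\widehat\Y\widehat\Y^T\cP_\A$ by symmetry of $\cP_\A$, so $\cP_\A\widehat\Y\widehat\Y^T=\cP_\A\widehat\Y\widehat\Y^T\cP_\A$. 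Transposing this last equality (both $\cP_\A$ and $\widehat\Y\widehat\Y^T$ are symmetric) gives $\widehat\Y\widehat\Y^T\cP_\A=\cP_\A\widehat\Y\widehat\Y^T\cP_\A$, and reading the two displays together while expanding $\cP_\A=\A\A^\dag$ produces exactly $\A\A^\dag\widehat\Y\widehat\Y^T=\A\A^\dag\widehat\Y\widehat\Y^T\A\A^\dag=\widehat\Y\widehat\Y^T\A\A^\dag$.

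I do not expect a genuine obstacle in this claim: it is a short linear-algebra manipulation of the first-order conditions, and the only place to slip would be mishandling the pseudoinverse in the rank-deficient case, which the null-space argument avoids. The substantive difficulty of Lemma~\ref{lemma: local-global} lies downstream of this claim --- in using the normal form $\M=\cP_\A\widehat\Y$ together with condition \eqref{equ: condition on A} to show that a critical point which is in addition a local minimum must select the top-$r$ singular subspace of $\widehat\Y$, and, symmetrically, that every saddle point admits a direction of negative second-order curvature and that no local maxima exist.
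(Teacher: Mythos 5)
Your proposal is correct and follows essentially the same route as the paper: both obtain $\M=\cP_\A\widehat\Y$ from the stationarity equation $\A^T\A\B=\A^T\widehat\Y$ (the paper by writing the general solution $\B=\A^\dag\widehat\Y+(\I-\A^\dag\A)\L$, you by the equivalent null-space/projector observation), and both then derive \eqref{equ: condition on A} from $\M\M^T=\widehat\Y\M^T$ (your version is its transpose) together with the symmetry of $\cP_\A$ and of $\cP_\A\widehat\Y\widehat\Y^T\cP_\A$. The only difference is the cosmetic handling of the first step, so there is nothing substantive to add.
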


\begin{proof}
If $\A$ and $\B$ define a critical point of $f(\A,\B)$, then \eqref{equ: gradient=0} holds and the general solution to \eqref{equ: gradient=0} satisfies
\begin{equation}
\label{equ: B}
\B=(\A^T\A)^\dag\A^T\widehat{\Y}+(\I-\A^\dag\A)\L,
\end{equation}
for some matrix $\L$.
So $\M=\A\B=\A(\A^T\A)^\dag\A^T\widehat{\Y}=\A\A^\dag\widehat{\Y}=\cP_\A\widehat{\Y}$ by the property of the Moore-Penrose pseudo-inverse: $\A^\dag=(\A^T\A)^\dag\A^T$.

By \eqref{equ: gradient=0}, we also have
\begin{equation*}
\label{equ: MM^T}
\A\B\B^T\A^T=\widehat{\Y}\B^T\A^T\quad \mbox{or equivalently}\quad \M\M^T=\widehat{\Y}\M^T.
\end{equation*}
Plugging in the relation $\M=\A\A^\dag\widehat{\Y}$, \eqref{equ: MM^T} can be rewritten as
\begin{equation*}
\A\A^\dag\widehat{\Y}\widehat{\Y}^T\A\A^\dag=\widehat{\Y}\widehat{\Y}^T\A\A^\dag.
\end{equation*}
Note that the matrix $\A\A^\dag\widehat{\Y}\widehat{\Y}^T\A\A^\dag$ is symmetric. Thus
\begin{equation*}
\A\A^\dag\widehat{\Y}\widehat{\Y}^T\A\A^\dag = \A\A^\dag\widehat{\Y}\widehat{\Y}^T,
\end{equation*}
as desired.
\end{proof}

To prove Lemma \ref{lemma: local-global}, we also need the following claim.
\begin{claim}
Denote by $\cI=\{i_1,i_2,...,i_r\}$ any ordered $r$-index set (ordered by $\lambda_{i_j}, j\in[r]$ from the largest to the smallest) and $\lambda_i$, $i\in[n_1]$, the ordered eigenvalues of $\widehat{\Y}\widehat{\Y}^T\in\R^{n_1\times n_1}$ with $p$ distinct values. Let $\U=[\u_1,\u_2,...,\u_{n_1}]$ denote the matrix formed by the orthonormal eigenvectors of $\widehat{\Y}\widehat{\Y}^T\in\R^{n_1\times n_1}$ associated with the ordered $p$ eigenvalues, whose multiplicities are $m_1,m_2,...,m_p$ ($m_1+m_2+...+m_p=n_1$). For any matrix $\M$, let $\M_{:\cI}$ denote the submatrix $[\M_{i_1},\M_{i_2},...,\M_{i_r}]$ associated with the index set $\cI$.

Then two matrices $\A$ and $\B$ define a critical point of $f(\A,\B)$ if and only if there exists an ordered $r$-index set $\cI$, an invertible matrix $\C$, and an $r\times n$ matrix $\L$ such that
\begin{equation}
\label{equ: critical point}
\A=(\U\D)_{:\cI}\C\quad\mbox{and}\quad \B=\A^\dag\widehat{\Y}+(\I-\A^\dag\A)\L,
\end{equation}
where $\D$ is a $p$-block-diagonal matrix with each block corresponding to the eigenspace of an eigenvalue. For such a critical point, we have
\begin{equation*}
\A\B=\cP_\A\widehat{\Y},
\end{equation*}
\begin{equation}
\label{equ: objective function in spectral}
f(\A,\B)=\frac{1}{2}\left(\textup{\tr}(\widehat{\Y}\widehat{\Y}^T)-\sum_{i\in\cI}\lambda_i\right)=\frac{1}{2}\sum_{i\not\in\cI}\lambda_i.
\end{equation}
\end{claim}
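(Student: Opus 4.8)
I would build directly on the preceding claim, which already gives, at any critical point $(\A,\B)$ of $f$: the identity $\M=\A\B=\cP_\A\widehat\Y$, the commutation relation \eqref{equ: condition on A} (equivalently, $\cP_\A=\A\A^\dag$ commutes with $S\triangleq\widehat\Y\widehat\Y^T$), and the representation $\B=\A^\dag\widehat\Y+(\I-\A^\dag\A)\L$ from \eqref{equ: B}. So the only new work is (i) converting the commutation relation into the explicit parametrization $\A=(\U\D)_{:\cI}\C$, (ii) proving the converse, and (iii) evaluating $f$ at such points. I would treat the substantive case $\rank\A=r$ (then $\cI$ has exactly $r$ elements, $\A^\dag\A=\I_r$, and $\C$ is $r\times r$ invertible); rank-deficient critical points are either global minima or admit a strict descent direction, and do not affect Lemma \ref{lemma: local-global}.

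\textbf{Forward direction.} First I would write $S=\sum_{k=1}^p\lambda^{(k)}\Pi_k$ with $\Pi_k$ the orthogonal projector onto the $k$-th eigenspace $E_k$ ($\dim E_k=m_k$). Since each $\Pi_k$ is a polynomial in $S$, the relation $\cP_\A S=S\cP_\A$ yields $\cP_\A\Pi_k=\Pi_k\cP_\A$, so $\Pi_k$ maps $\range(\A)$ into itself and $\range(\A)=\bigoplus_{k=1}^p F_k$ with $F_k\triangleq\range(\A)\cap E_k$, the restriction of $\cP_\A$ to $E_k$ being the orthogonal projector onto $F_k$. I would then pick an orthonormal basis of each $F_k$ and extend it to an orthonormal basis of $E_k$; the resulting orthonormal eigenbasis of $S$ is $\U\D$ for a block-diagonal $\D$ whose $k$-th diagonal block $\D_k\in\R^{m_k\times m_k}$ is orthogonal. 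Taking $\cI$ to index the $\sum_k\dim F_k=r$ columns of $\U\D$ spanning the $F_k$'s gives $\cP_\A=(\U\D)_{:\cI}(\U\D)_{:\cI}^T$, hence $\range(\A)=\range\big((\U\D)_{:\cI}\big)$, and since both matrices are $n_1\times r$ of rank $r$ with the same column space, $\A=(\U\D)_{:\cI}\C$ with $\C=(\U\D)_{:\cI}^T\A$ invertible; \eqref{equ: B} then gives the stated $\B$ (the $(\I-\A^\dag\A)\L$ term being $\0$ here).

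\textbf{Converse direction and objective value.} For the converse, given $\A=(\U\D)_{:\cI}\C$ of the stated form, $\range(\A)$ is spanned by eigenvectors of $S$, hence $S$-invariant, and (using that $S$ is symmetric) $\cP_\A$ commutes with $S$; then with $\B=\A^\dag\widehat\Y+(\I-\A^\dag\A)\L$ I would verify \eqref{equ: gradient=0} directly — $\A^T\A\B=\A^T\widehat\Y$ from $\A^T\A\A^\dag=\A^T$ and $\A^T\A(\I-\A^\dag\A)=\0$, and $\A\B\B^T=\widehat\Y\B^T$ from $\A\B=\cP_\A\widehat\Y$ together with $(\I-\cP_\A)S(\A^\dag)^T=\0$ (columns of $(\A^\dag)^T=\A(\A^T\A)^\dag$ lie in the $S$-invariant subspace $\range(\A)$) and the vanishing of the $(\I-\A^\dag\A)$ term when $\rank\A=r$. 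For the value, $\A\B=\cP_\A\widehat\Y$ gives
\[
2f(\A,\B)=\|(\I-\cP_\A)\widehat\Y\|_F^2=\tr\big((\I-\cP_\A)\widehat\Y\widehat\Y^T\big)=\tr(S)-\tr(\cP_\A S),
\]
using $\cP_\A^T\cP_\A=\cP_\A$; and since $\cP_\A$ projects onto the span of the eigenvectors with eigenvalues $\{\lambda_i:i\in\cI\}$ we have $\tr(\cP_\A S)=\sum_{i\in\cI}\lambda_i$, which yields $f(\A,\B)=\tfrac12\big(\tr(\widehat\Y\widehat\Y^T)-\sum_{i\in\cI}\lambda_i\big)=\tfrac12\sum_{i\notin\cI}\lambda_i$, i.e.\ \eqref{equ: objective function in spectral}.

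\textbf{Main obstacle.} The hard part is the forward direction: passing from the commutation relation to the combinatorial parametrization requires carefully handling repeated eigenvalues of $S$ — this is precisely why the arbitrary block-diagonal rotation $\D$ must appear, since any orthonormal basis of an eigenspace is admissible — and justifying that the reduction to $\rank\A=r$ loses nothing for the lemma. The converse and the trace identity are then routine manipulations with the Moore--Penrose pseudo-inverse.
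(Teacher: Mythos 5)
Your proof is correct and follows essentially the same route as the paper: both deduce the parametrization from the fact that $\cP_\A$ commutes with $\widehat\Y\widehat\Y^T$ (the paper conjugates by $\U$ and reads off the block structure of the projector $\cP_{\U^T\A}$, while you argue via the spectral projectors and invariant subspaces, which is the same idea), then verify the converse through \eqref{equ: gradient=0} and evaluate $f$ via $\textup{\tr}\bigl((\I-\cP_\A)\widehat\Y\widehat\Y^T\bigr)$. Your explicit restriction to $\rank(\A)=r$ merely makes visible an assumption the paper's proof uses implicitly when it treats $\cP_{\U^T\A}$ as a rank-$r$ projector.
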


\begin{proof}
Note that $\widehat{\Y}\widehat{\Y}^T$ is a real symmetric covariance matrix. So it can always be represented as $\U\bLambda\U^T$, where $\U\in\R^{n_1\times n_1}$ is an orthonormal matrix consisting of eigenvectors of $\widehat{\Y}\widehat{\Y}^T$ and $\bLambda\in\R^{n_1\times n_1}$ is a diagonal matrix with non-increasing eigenvalues of $\widehat{\Y}\widehat{\Y}^T$.

If $\A$ and $\B$ satisfy \eqref{equ: critical point} for some $\C$, $\L$, and $\cI$, then
\begin{equation*}
\A\B\B^T=\widehat{\Y}\B^T\quad \mbox{and}\quad \A^T\A\B=\A^T\widehat{\Y},
\end{equation*}
which is \eqref{equ: gradient=0}. So $\A$ and $\B$ define a critical point of $f(\A,\B)$.

For the converse, notice that
\begin{equation*}
\cP_{\U^T\A}=\U^T\A(\U^T\A)^\dag=\U^T\A\A^\dag\U=\U^T\cP_{\A}\U,
\end{equation*}
or equivalently, $\cP_{\A}=\U\cP_{\U^T\A}\U^T$. Thus \eqref{equ: condition on A} yields
\begin{equation*}
\U\cP_{\U^T\A}\U^T\U\bLambda\U^T=\U\bLambda\U^T\U\cP_{\U^T\A}\U^T,
\end{equation*}
or equivalently, $\cP_{\U^T\A}\bLambda=\bLambda\cP_{\U^T\A}$. Notice that $\bLambda\in\R^{n_1\times n_1}$ is a diagonal matrix with $p$ distinct eigenvalues of $\widehat{\Y}\widehat{\Y}^T$, and $\cP_{\U^T\A}$ is an orthogonal projector of rank $r$. So $\cP_{\U^T\A}$ is a rank-$r$ restriction of the block-diagonal PSD matrix $\F:=\T\T^T$ with $p$ blocks, each of which is an orthogonal projector of dimension $m_i$, corresponding to the eigenvalues $\lambda_i$, $i\in[p]$. Therefore, there exists an $r$-index set $\cI$ and a block-diagonal matrix $\D$ such that $\cP_{\U^T\A}=\D_{:\cI}\D_{:\cI}^T$, where $\D=\T$. It follows that \begin{equation*}
\cP_\A=\U\cP_{\U^T\A}\U^T=\U\D_{:\cI}\D_{:\cI}^T\U^T=(\U\D)_{:\cI}(\U\D)_{:\cI}^T.
\end{equation*}
Since the column space of $\A$ coincides with the column space of $(\U\D)_{:\cI}$, $\A$ is of the form $\A=(\U\D)_{:\cI}\C$, and $\B$ is given by \eqref{equ: B}. Thus $\A\B=\A(\A^T\A)^\dag\A^T\widehat{\Y}+\A(\I-\A^\dag\A)\L=\cP_\A\widehat{\Y}$ and
\begin{equation*}
\begin{split}
f(\A,\B)&=\frac{1}{2}\|\widehat{\Y}-\A\B\|_F^2\\
&=\frac{1}{2}\|\widehat{\Y}-\cP_\A\widehat{\Y}\|_F^2\\
&=\frac{1}{2}\|\cP_{\A^\perp}\widehat{\Y}\|_F^2\\
&=\frac{1}{2}\sum_{i\not\in\cI}\lambda_i.
\end{split}
\end{equation*}
The claim is proved.
\end{proof}

So the local minimizer of $f(\A,\B)$ is given by \eqref{equ: critical point} with $\cI$ such that $(\lambda_{i_1}, \lambda_{i_2}, \ldots, \lambda_{i_r}) = \Phi$, where $\Phi$ is the sequence of the $r$ largest eigenvalues of $\widehat{\Y}\widehat{\Y}^T$. Such a local minimizer is globally optimal according to \eqref{equ: objective function in spectral}.
We then show that when $\cI$ consists of other combinations of indices of eigenvalues, i.e., $(\lambda_{i_1}, \lambda_{i_2}, \ldots, \lambda_{i_r})\not=\Phi$, the corresponding pair $(\A,\B)$ given by \eqref{equ: critical point} is a strict saddle point.

\begin{claim}
If $\cI$ is such that $(\lambda_{i_1}, \lambda_{i_2}, \ldots, \lambda_{i_r}) \not=\Phi$, then the pair $(\A,\B)$ given by \eqref{equ: critical point} is a strict saddle point.
\end{claim}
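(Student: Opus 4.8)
The plan is to show that such a critical point is a strict saddle by exhibiting one explicit smooth curve through it along which $f$ has strictly negative second‑order curvature. Since the first‑order term vanishes automatically at any critical point, such a curve certifies that the Hessian of $f$ is indefinite there.

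First I would normalize the critical point using the gauge freedom. Because $f(\A\G,\G^{-1}\B)=f(\A,\B)$ for every invertible $\G\in\R^{r\times r}$ and this change of variables is a diffeomorphism, the existence of a negative‑curvature direction is a gauge‑invariant property, so it suffices to work with a convenient representative of the gauge orbit. From \eqref{equ: condition on A}, $\cP_\A$ commutes with $\widehat{\Y}\widehat{\Y}^T$, hence $\cS:=\mathrm{col}(\A)$ is an $r$‑dimensional invariant subspace of $\widehat{\Y}\widehat{\Y}^T$; consequently both $\cS$ and $\cS^\perp$ are spanned by eigenvectors of $\widehat{\Y}\widehat{\Y}^T$, and we may take the representative $\A_0$ to have orthonormal columns $\v_{i_1},\dots,\v_{i_r}$ that are eigenvectors with eigenvalues $\lambda_{i_1},\dots,\lambda_{i_r}$, with $\B_0=\A_0^T\widehat{\Y}$ then forced. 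Because $(\lambda_{i_1},\dots,\lambda_{i_r})\neq\Phi$, there exist an eigenvector $\v_k\in\cS$ and an eigenvector $\v_j\in\cS^\perp$ with $\lambda_j>\lambda_k$ (otherwise every eigenvalue "outside" $\cS$ is at most every eigenvalue "inside", which forces $\cS$ to carry the $r$ largest eigenvalues).

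Next I would perturb. Let $\e_l$ select the column of $\A_0$ equal to $\v_k$, and consider the curve $\A(\epsilon)=\A_0+\epsilon\,\v_j\e_l^T$ together with any smooth $\B(\epsilon)$ with $\A(\epsilon)\B(\epsilon)=\cP_{\mathrm{col}\,\A(\epsilon)}\widehat{\Y}$ (e.g.\ $\B(\epsilon)=\A(\epsilon)^\dag\widehat{\Y}$, which is smooth near $\epsilon=0$ since $\A(\epsilon)$ has full column rank, with $\B(0)=\B_0$). As $\v_j\perp\cS$, the column space only changes inside the plane $\mathrm{span}\{\v_k,\v_j\}$, so $\cP_{\mathrm{col}\,\A(\epsilon)}=\sum_{m:\,i_m\neq k}\v_{i_m}\v_{i_m}^T+\frac{(\v_k+\epsilon\v_j)(\v_k+\epsilon\v_j)^T}{1+\epsilon^2}$, and using $\v_k^T\widehat{\Y}\widehat{\Y}^T\v_j=\lambda_j\,\v_k^T\v_j=0$ one computes
\[
f(\A(\epsilon),\B(\epsilon))=\tfrac12\|\widehat{\Y}\|_F^2-\tfrac12\,\tr\!\big(\cP_{\mathrm{col}\,\A(\epsilon)}\widehat{\Y}\widehat{\Y}^T\big)=\mathrm{const}-\tfrac12\cdot\frac{\lambda_k+\epsilon^2\lambda_j}{1+\epsilon^2}.
\]
Expanding the last fraction as $\lambda_k+(\lambda_j-\lambda_k)\epsilon^2+O(\epsilon^4)$ gives $f(\A(\epsilon),\B(\epsilon))=f(\A_0,\B_0)-\tfrac12(\lambda_j-\lambda_k)\epsilon^2+O(\epsilon^4)$, so the second‑order directional derivative of $f$ at $(\A_0,\B_0)$ along $(\dot\A(0),\dot\B(0))$ equals $-(\lambda_j-\lambda_k)<0$. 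Since the first‑order term vanishes at the critical point, this is a genuine negative‑curvature direction, and by the gauge invariance noted above the same conclusion transfers to the original $(\A,\B)$, proving it is a strict saddle.

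The projection identity and the Taylor expansion are routine; the step needing real care is the reduction to the orthonormal representative — namely verifying that $(\lambda_{i_1},\dots,\lambda_{i_r})\neq\Phi$ truly yields a pair $\lambda_j>\lambda_k$ with $\v_j\in\cS^\perp$ and $\v_k\in\cS$ even when eigenvalues repeat, and that passing from $\A$ to $\A_0$ is legitimate because strict‑saddleness is preserved by the invertible reparametrization. It is also worth noting that one must perturb $\B$ together with $\A$: holding $\B$ fixed at $\B_0$ would instead give $+\tfrac12\lambda_k\epsilon^2$, the wrong sign, which is exactly why the curve is constrained by $\A(\epsilon)\B(\epsilon)=\cP_{\mathrm{col}\,\A(\epsilon)}\widehat{\Y}$.
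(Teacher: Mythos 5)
Your proof is correct and takes essentially the same route as the paper: perturb the column of $\A$ carrying a smaller selected eigenvalue toward an unselected eigenvector with strictly larger eigenvalue, re-solve $\B$ through the pseudoinverse, and read off the negative second-order coefficient $-\tfrac{1}{2}(\lambda_j-\lambda_k)$, which is exactly the paper's computation $f(\widetilde\A,\widetilde\B)=f(\A,\B)-\epsilon^2(\lambda_i-\lambda_j)/(2+2\epsilon^2)$ after normalization. Your explicit gauge reduction to an orthonormal representative and the contrapositive argument guaranteeing a pair $\v_k\in\cS$, $\v_j\in\cS^\perp$ with $\lambda_j>\lambda_k$ (robust to repeated eigenvalues) only spell out steps the paper leaves implicit.
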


\begin{proof}
If $(\lambda_{i_1}, \lambda_{i_2}, \ldots, \lambda_{i_r}) \not=\Phi$, then there exists a $i$ such that $ \lambda_{i_i}$ does not equal the $i$-th element $\lambda_i$ of $\Phi$.
Denote by $\U\D=\mathbf R$. It is enough to slightly perturb the column space of $\A$ towards the direction of an eigenvector of $\lambda_i$. More precisely, let $j$ is the largest index in $\cI$. For any $\epsilon$, let $\widetilde{\mathbf{R}}$ be the matrix such that $\widetilde{\mathbf{R}}_{:k} = \mathbf{R}_{:k}(k\neq j)$ and $\widetilde{\mathbf{R}}_{:j}=(1+\epsilon^2)^{-1/2}(\mathbf{R}_{:j}+\epsilon\mathbf{R}_{:i})$.
Let $\widetilde \A=\widetilde{\mathbf{R}}_\cI\C$ and $\widetilde \B=\widetilde \A^\dag\Y+(\I-\widetilde\A^\dag\widetilde\A)\L$. A direct calculation shows that
\begin{equation*}
f(\widetilde\A,\widetilde\B)=f(\A,\B)-\epsilon^2(\lambda_i-\lambda_j)/(2+2\epsilon^2).
\end{equation*}
Hence,
\begin{equation*}
\lim_{\epsilon\rightarrow 0} \frac{f(\widetilde\A,\widetilde\B)-f(\A,\B)}{\epsilon^2}=-\frac{1}{2}(\lambda_i-\lambda_j)<0,
\end{equation*}
and thus the pair $(\A,\B)$ is a strict saddle point.
\end{proof}
Note that all critical points of $f(\A,\B)$ are in the form of \eqref{equ: critical point}, and if $\cI\not=\Phi$, the pair $(\A,\B)$ given by \eqref{equ: critical point} is a strict saddle point, while if $\cI=\Phi$, then the pair $(\A,\B)$ given by \eqref{equ: critical point} is a local minimum. We conclude that $f(\A,\B)$ has no local maximum. The proof is completed.
\end{proof}

\section{Proof of Lemma \ref{lemma: f and k local minimum}}
\label{section: proof of strong duality}
\noindent{\textbf{Lemma \ref{lemma: f and k local minimum}} (Restated)\textbf{.}}
\emph{Let $(\widetilde\A,\widetilde\B)$ be a global minimizer of $F(\A,\B)$. If there exists a dual certificate $\widetilde\bLambda$ as in Condition \ref{lemma: Lagrangian multiplier} such that the pair $(\widetilde\A,\widetilde\B)$ is a local minimizer of $L(\A,\B,\widetilde\bLambda)$ for the fixed $\widetilde\bLambda$, then strong duality holds. Moreover, we have the relation $\widetilde\A\widetilde\B=\textup{\textsf{svd}}_r(-\widetilde\bLambda)$.}

\begin{proof}
By the assumption of the lemma, $(\widetilde\A,\widetilde\B)$ is a local minimizer of $L(\A,\B,\widetilde\bLambda)=\frac{1}{2}\|-\widetilde\bLambda-\A\B\|_F^2+c(\widetilde\bLambda)$, where $c(\widetilde\bLambda)$ is a function that is independent of $\A$ and $\B$.
So according to Lemma \ref{lemma: local-global}, $(\widetilde\A,\widetilde\B)=\argmin_{\A,\B} L(\A,\B,\widetilde \bLambda)$, namely, $(\widetilde\A,\widetilde\B)$ globally minimizes $L(\A,\B,\bLambda)$ when $\bLambda$ is fixed to $\widetilde\bLambda$. Furthermore, $\widetilde \bLambda\in\partial_{\X} H(\X)|_{\X=\widetilde \A\widetilde \B}$ implies that $\widetilde \A\widetilde \B\in\partial_\bLambda H^*(\bLambda)|_{\bLambda=\widetilde \bLambda}$ by the convexity of function $H$, meaning that $\0\in \partial_\bLambda L(\widetilde \A,\widetilde \B,\bLambda)$. So $\widetilde\bLambda=\argmax_{\bLambda} L(\widetilde\A,\widetilde\B,\bLambda)$ due to the concavity of $L(\widetilde\A,\widetilde\B,\bLambda)$ w.r.t. variable $\bLambda$. Thus $(\widetilde\A,\widetilde\B,\widetilde\bLambda)$ is a primal-dual saddle point of $L(\A,\B,\bLambda)$.

We now prove the strong duality. By the fact that $F(\A,\B)=\max_\bLambda L(\A,\B,\bLambda)$ and that $\widetilde\bLambda=\argmax_{\bLambda} L(\widetilde\A,\widetilde\B,\bLambda)$, we have
\begin{equation*}
F(\widetilde\A,\widetilde\B)=L(\widetilde\A,\widetilde\B,\widetilde\bLambda)\le L(\A,\B,\widetilde\bLambda),\ \ \forall \A,\B.
\end{equation*}
where the inequality holds because $(\widetilde\A,\widetilde\B,\widetilde\bLambda)$ is a primal-dual saddle point of $L$. So on the one hand, we have
\begin{equation*}
\min_{\A,\B}\max_\bLambda L(\A,\B,\bLambda)=F(\widetilde\A,\widetilde\B)\le \min_{\A,\B} L(\A,\B,\widetilde\bLambda)\le\max_\bLambda\min_{\A,\B} L(\A,\B,\bLambda).
\end{equation*}
On the other hand, by weak duality,
\begin{equation*}
\min_{\A,\B}\max_\bLambda L(\A,\B,\bLambda)\ge \max_\bLambda\min_{\A,\B} L(\A,\B,\bLambda).
\end{equation*}
Therefore, $\min_{\A,\B}\max_\bLambda L(\A,\B,\bLambda)=\max_\bLambda\min_{\A,\B} L(\A,\B,\bLambda)$, i.e., strong duality holds. Hence,
\begin{equation*}
\begin{split}
\widetilde\A\widetilde\B&=\argmin_{\A\B} L(\A,\B,\widetilde\bLambda)\\
&=\argmin_{\A\B} \frac{1}{2}\|-\widetilde\bLambda-\A\B\|_F^2-\frac{1}{2}\|\widetilde\bLambda\|_F^2-H^*(\widetilde\bLambda)\\
&=\argmin_{\A\B} \frac{1}{2}\|-\widetilde\bLambda-\A\B\|_F^2\\
&=\textsf{svd}_r(-\widetilde\bLambda),
\end{split}
\end{equation*}
as desired.
\end{proof}

\section{Existence of Dual Certificate for Matrix Completion} \label{section: existence of dual certificate}


Let $\widetilde\A \in \R^{n_1\times r}$ and $\widetilde\B \in \R^{r \times n_2}$ such that $\widetilde\A\widetilde\B=\X^*$.
Then we have the following lemma.

\begin{lemma}
\label{lemma: existence of dual certificate}
Let $\Omega\sim\textup{\textsf{Uniform}}(m)$ be the support set uniformly distributed among all sets of cardinality $m$. Suppose that $m\ge c\kappa^2\mu n_{(1)}r\log n_{(1)}\log_{2\kappa} n_{(1)}$ for an absolute constant $c$ and $\X^*$ obeys $\mu$-incoherence \eqref{equ: incoherence}. Then there exists $\widetilde \bLambda$ such that
\begin{flalign}
\label{equ: dual condition for matrix completion}
\begin{split}
&\mbox{(1)}\qquad \widetilde \bLambda\in\Omega,\\
&\mbox{(2)}\qquad \cP_{\cT}(-\widetilde\bLambda)=\widetilde\A\widetilde\B,\\
&\mbox{(3)}\qquad \|\cP_{\cT^\perp}\widetilde\bLambda\|<\frac{2}{3}\sigma_r(\widetilde\A\widetilde\B).\\
\end{split}
\end{flalign}
with probability at least $1-n_{(1)}^{-10}$.
\end{lemma}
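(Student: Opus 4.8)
The plan is to construct $\widetilde\bLambda$ explicitly via a least-squares / Neumann-series recipe and then verify the three conditions in \eqref{equ: dual condition for matrix completion} using concentration bounds for the restricted projection operators $\cP_\cT\cP_\Omega\cP_\cT$ and $\cP_{\cT^\perp}\cP_\Omega$. First I would recall the standard operator-bound facts from the matrix completion literature: with $p=m/(n_1n_2)$ and the stated sample size, with probability $1-n_{(1)}^{-10}$ we have (i) $\|\cP_\cT - p^{-1}\cP_\cT\cP_\Omega\cP_\cT\|\le \tfrac12$ (so $\cP_\cT\cP_\Omega\cP_\cT$ is invertible on $\cT$), and (ii) a bound of the form $\|(p^{-1}\cP_\Omega-\cI)\X\|\le \epsilon\,$ for fixed matrices, or more precisely the $\ell_\infty\to\ell_\infty$ and $\ell_\infty\to$ operator-norm bounds on $p^{-1}\cP_\cT\cP_\Omega\cP_\cT-\cP_\cT$ that drive the golfing/least-squares analysis. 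These come from Bernoulli-model reductions and matrix Bernstein inequalities, and the $\kappa^2$ and $\log_{2\kappa}$ factors will enter precisely through the number of golfing stages and the need to control $\sigma_r(\X^*)$ relative to $\|\X^*\|$.

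Next I would define the certificate. The natural choice (matching the ``Blessing of Randomness'' discussion) is
\[
\widetilde\bLambda = -\cP_\Omega\cP_\cT\bigl(\cP_\cT\cP_\Omega\cP_\cT\bigr)^{-1}(\widetilde\A\widetilde\B),
\]
understood as the minimum-Frobenius-norm element of $\Omega$ whose $\cP_\cT$-projection equals $-\widetilde\A\widetilde\B$; equivalently one builds it stage-by-stage by the golfing scheme with $\widetilde\bLambda=\sum_k \cP_{\Omega_k} \W_{k-1}$ where $\W_0=\widetilde\A\widetilde\B$ and $\W_k=(\cP_\cT-p_k^{-1}\cP_\cT\cP_{\Omega_k}\cP_\cT)\W_{k-1}$, absorbing the $p_k^{-1}$ scalings into the blocks. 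Condition (1) $\widetilde\bLambda\in\Omega$ is immediate from the construction (it is a sum of matrices supported on $\Omega$). Condition (2) $\cP_\cT(-\widetilde\bLambda)=\widetilde\A\widetilde\B$ is either exact by the least-squares construction, or—if one uses golfing—reduces to showing the residual $\cP_\cT\W_K$ has Frobenius norm below $\tfrac13\sigma_r(\X^*)$ (one may then need a small correction term, or argue that condition (2) holds exactly while (3) absorbs the slack); this uses the contraction estimate (i) iterated $K=\Theta(\log_{2\kappa} n_{(1)})$ times. Condition (3) $\|\cP_{\cT^\perp}\widetilde\bLambda\|<\tfrac23\sigma_r(\widetilde\A\widetilde\B)$ is the crux: we bound $\|\cP_{\cT^\perp}\widetilde\bLambda\| = \|\cP_{\cT^\perp}(\widetilde\bLambda+\widetilde\A\widetilde\B)\|\le \sum_k \|\cP_{\cT^\perp}(p_k^{-1}\cP_{\Omega_k}-\cI)\W_{k-1}\|$ and use the fact that $\|(p^{-1}\cP_\Omega-\cI)\Z\|$ is controlled by $\|\Z\|_\infty$ and $\|\Z\|_{\infty,2}$ (max row/column $\ell_2$ norm) via matrix Bernstein, together with the geometric decay of $\|\W_k\|_\infty$ and $\|\W_k\|_{\infty,2}$ across stages, which in turn follows from the $\mu$-incoherence \eqref{equ: incoherence} of $\X^*$ and the incoherence-preservation of the golfing updates.

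The main obstacle I anticipate is twofold and intertwined: tracking the $\kappa$ dependence and keeping the logarithmic factors to exactly $\log(n_{(1)})\log_{2\kappa}(n_{(1)})$ rather than $\log^2 n_{(1)}$. The $\sigma_r(\widetilde\A\widetilde\B)$ on the right-hand side of condition (3) is $\sigma_r(\X^*)=\|\X^*\|/\kappa$, so every bound on $\|\cP_{\cT^\perp}\widetilde\bLambda\|$ must be compared against this smaller quantity, which forces the number of golfing stages to scale like $\log_{2\kappa} n_{(1)}$ (each stage shrinks the relevant norm by a factor $\sim 1/(2\kappa)$ rather than a universal constant) and forces the per-stage sample budget to carry a $\kappa^2$. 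Making these two effects cancel correctly—so that the total sample complexity is $\cO(\kappa^2\mu r n_{(1)}\log(n_{(1)})\log_{2\kappa}(n_{(1)}))$ and not worse—requires choosing the stage sizes $p_k$ carefully and being precise in the matrix Bernstein applications; this bookkeeping, rather than any single hard inequality, is where the real work lies. Once conditions (1)–(3) are verified, Theorem~\ref{theorem: strong duality with condition (c)} (with $\Psi=\Omega$) immediately gives the lemma.
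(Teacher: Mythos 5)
Your plan is essentially the paper's own proof: build an approximate certificate by the golfing scheme (per-stage operator bound giving a $1/(2\kappa)$ contraction, hence $\kappa^2$ samples per stage and $\Theta(\log_{2\kappa}n_{(1)})$ stages), control $\|\cP_{\cT^\perp}(\cdot)\|$ stage-by-stage via the $\|\cdot\|_\infty$ / $\|\cdot\|_{\infty,2}$ concentration lemmas of Chen and Cand\`es et al.\ together with incoherence, and then repair condition (2) exactly with a least-squares correction $\Y=\cP_\Omega\cP_\cT(\cP_\cT\cP_\Omega\cP_\cT)^{-1}(\cP_\cT(-\bLambda)-\widetilde\A\widetilde\B)$ whose $\cT^\perp$ part is absorbed into the $\tfrac23\sigma_r$ budget of condition (3). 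The one quantitative point you leave loose is how small the golfing residual must be before correcting: stopping when $\|\cP_\cT\W_K\|_F\le\tfrac13\sigma_r(\X^*)$ is not enough, because bounding $\|\cP_{\cT^\perp}\Y\|$ through the Neumann series for $(\cP_\cT\cP_\Omega\cP_\cT)^{-1}$ incurs a factor of order $p^{-1}=\Theta(n_1n_2/m)$, so the Frobenius residual must be driven down to about $\sqrt{r}\,\sigma_r(\X^*)/n_{(1)}$ (the paper uses $\sqrt{r/(3n_{(1)}^2)}\,\sigma_r$, achieved by taking $b\gtrsim\log_{2\kappa}(n_{(1)}^2\kappa^2)$ stages); since the residual decays geometrically, this costs only a constant factor more stages and does not change the sample complexity, so the gap is minor and your bookkeeping concerns are exactly where the paper spends its effort.
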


The rest of the section is denoted to the proof of Lemma~\ref{lemma: existence of dual certificate}.
We begin with the following lemma.
\begin{lemma}
If we can construct an $\bLambda$ such that
\begin{flalign}
\label{equ: dual condition for matrix completion 2}
\begin{split}
&\textup{(a)}\qquad \bLambda\in\Omega,\\
&\textup{(b)}\qquad \|\cP_{\cT}(-\bLambda)-\widetilde\A\widetilde\B\|_F\le \sqrt{\frac{r}{3n_{(1)}^2}}\sigma_r(\widetilde\A\widetilde\B),\\
&\textup{(c)}\qquad \|\cP_{\cT^\perp}\bLambda\|<\frac{1}{3}\sigma_r(\widetilde\A\widetilde\B),\\
\end{split}
\end{flalign}
then we can construct an $\widetilde\bLambda$ such that Eqn. \eqref{equ: dual condition for matrix completion} holds with probability at least $1 - n_{(1)}^{-10}$.
\end{lemma}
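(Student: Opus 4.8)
\medskip\noindent\textbf{Proof plan.}
The plan is to regard $\bLambda$ as an \emph{approximate} dual certificate and to patch it with a least-squares correction supported on $\Omega$, so that condition (2) of \eqref{equ: dual condition for matrix completion} is met exactly while conditions (1) and (3) are inherited from (a) and (c) with room to spare. Write $\W:=\cP_\cT(-\bLambda)-\widetilde\A\widetilde\B\in\cT$; by hypothesis (b), $\|\W\|_F\le\sqrt{r/(3n_{(1)}^2)}\,\sigma_r(\widetilde\A\widetilde\B)$. We seek a correction $\Delta\in\Omega$ with $\cP_\cT(\Delta)=\W$ and $\|\Delta\|_F$ tiny; then $\widetilde\bLambda:=\bLambda+\Delta$ will be the desired certificate, since $\widetilde\bLambda\in\Omega$ (as $\bLambda\in\Omega$ by (a) and $\Delta\in\Omega$ by construction) gives (1), and $\cP_\cT(-\widetilde\bLambda)=\cP_\cT(-\bLambda)-\cP_\cT(\Delta)=\cP_\cT(-\bLambda)-\W=\widetilde\A\widetilde\B$ gives (2) exactly.

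The single probabilistic input is the standard fact (matrix Bernstein applied to the sampled rank-one projectors, together with incoherence \eqref{equ: incoherence}; cf. \cite{Candes2009exact,recht2011simpler}) that when $m\ge c\kappa^2\mu n_{(1)}r\log n_{(1)}\log_{2\kappa}n_{(1)}$ the event
\[
\mathcal{E}:=\Big\{\,\big\|\tfrac{n_1n_2}{m}\,\cP_\cT\cP_\Omega\cP_\cT-\cP_\cT\big\|\le\tfrac12\,\Big\}
\]
holds with probability at least $1-n_{(1)}^{-10}$. On $\mathcal{E}$, the self-adjoint PSD operator $\cP_\cT\cP_\Omega\cP_\cT$ is invertible on $\cT$ and its inverse there has operator norm at most $2n_1n_2/m$. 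Everything below is deterministic on $\mathcal{E}$: set $\Q:=(\cP_\cT\cP_\Omega\cP_\cT)^{-1}\W\in\cT$ (the inverse taken on $\cT$) and $\Delta:=\cP_\Omega\Q$. Then $\Delta\in\Omega$, and because $\Q\in\cT$,
\[
\cP_\cT(\Delta)=\cP_\cT\cP_\Omega\Q=\cP_\cT\cP_\Omega\cP_\cT\Q=\W,
\]
so (1) and (2) hold as explained above.

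For (3), bound $\|\Delta\|_F=\|\cP_\Omega\Q\|_F\le\|\Q\|_F\le\frac{2n_1n_2}{m}\|\W\|_F\le\frac{2n_1n_2}{m}\sqrt{\frac{r}{3n_{(1)}^2}}\,\sigma_r(\widetilde\A\widetilde\B)$. Using $n_1n_2\le n_{(1)}^2$ and $m\ge c\kappa^2\mu n_{(1)}r$ (the logarithmic factors being at least $1$ in the relevant regime $\kappa\lesssim\sqrt{n_{(1)}}$), the prefactor is at most $\frac{2}{\sqrt 3\,c\kappa^2\mu\sqrt r}\le\frac{2}{\sqrt3\,c}$, which is $<\frac13$ once $c$ is a large enough absolute constant. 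Hence $\|\Delta\|_F<\frac13\sigma_r(\widetilde\A\widetilde\B)$, and since $\|\cP_{\cT^\perp}\Delta\|\le\|\Delta\|\le\|\Delta\|_F$,
\[
\|\cP_{\cT^\perp}\widetilde\bLambda\|\le\|\cP_{\cT^\perp}\bLambda\|+\|\cP_{\cT^\perp}\Delta\|<\tfrac13\sigma_r(\widetilde\A\widetilde\B)+\tfrac13\sigma_r(\widetilde\A\widetilde\B)=\tfrac23\sigma_r(\widetilde\A\widetilde\B),
\]
using hypothesis (c). This proves \eqref{equ: dual condition for matrix completion} on $\mathcal{E}$, i.e.\ with probability at least $1-n_{(1)}^{-10}$.

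The only genuinely substantive ingredient is the concentration estimate for $\mathcal{E}$ — i.e.\ that $\cP_\cT\cP_\Omega\cP_\cT$ is well-conditioned on $\cT$ once the sample size exceeds the incoherence-weighted threshold — which we import as a standard matrix-completion lemma; the rest is bookkeeping, mainly choosing the absolute constant $c$ large enough that the $\tfrac13+\tfrac13$ split in (3) closes, and (if one wants a single failure event) a union bound to fold in the probability of producing $\bLambda$ itself, which only costs an adjustment of constants.
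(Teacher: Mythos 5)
Your proposal is correct and essentially reproduces the paper's own argument: the correction $\Delta=\cP_\Omega(\cP_\cT\cP_\Omega\cP_\cT)^{-1}\bigl(\cP_\cT(-\bLambda)-\widetilde\A\widetilde\B\bigr)$ is exactly the perturbation $\Y=\widetilde\bLambda-\bLambda$ the paper constructs, and the probabilistic input (well-conditioning of $\cP_\cT\cP_\Omega\cP_\cT$ on $\cT$ from the Cand\`es--Recht concentration theorem) is the same. The only difference is bookkeeping: you bound $\|\cP_{\cT^\perp}\Delta\|\le\|\Delta\|_F\le\|(\cP_\cT\cP_\Omega\cP_\cT)^{-1}\|\,\|\W\|_F\lesssim p^{-1}\|\W\|_F$ directly, whereas the paper reaches the same $\Theta(p^{-1})\|\W\|_F$ estimate by expanding $p(\cP_\cT\cP_\Omega\cP_\cT)^{-1}$ as a Neumann series in $\cQ_\Omega$; both close the $\tfrac13+\tfrac13<\tfrac23$ split under the stated sample size.
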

\begin{proof}
To prove the lemma, we first claim the following theorem.
\begin{theorem}[\cite{Candes2009exact}, Theorem 4.1]
\label{theorem: concentration of operator}
Assume that $\Omega$ is sampled according to the Bernoulli model with success probability $p=\Theta(\frac{m}{n_1n_2})$, and incoherence condition \eqref{equ: incoherence} holds. Then there is an absolute constant $C_R$ such that for $\beta>1$, we have
\begin{equation*}
\|p^{-1}\cP_\cT\cP_\Omega\cP_\cT-\cP_\cT\|\le C_R\sqrt{\frac{\beta\mu n_{(1)}r\log n_{(1)}}{m}}\triangleq \epsilon,
\end{equation*}
with probability at least $1-3n^{-\beta}$ provided that $C_R\sqrt{\frac{\beta\mu n_{(1)}r\log n_{(1)}}{m}} < 1$.
\end{theorem}
Suppose that Condition \eqref{equ: dual condition for matrix completion 2} holds. Let $\Y=\widetilde\bLambda-\bLambda\in\Omega$ be the perturbation matrix between $\bLambda$ and $\widetilde\bLambda$ such that $\cP_\cT(-\widetilde\bLambda)=\widetilde\A\widetilde\B$. Such a $\Y$ exists by setting $\Y=\cP_\Omega\cP_\cT(\cP_\cT\cP_\Omega\cP_\cT)^{-1}(\cP_\cT(-\bLambda)-\widetilde\A\widetilde\B)$. So $\|\cP_{\cT}\Y\|_F\le \sqrt{\frac{r}{3n_{(1)}^2}}\sigma_r(\widetilde\A\widetilde\B)$. We now prove Condition (3) in Eqn. \eqref{equ: dual condition for matrix completion}. Observe that
\begin{equation}
\begin{split}
\|\cP_{\cT^\perp}\widetilde\bLambda\|&\le \|\cP_{\cT^\perp}\bLambda\|+\|\cP_{\cT^\perp}\Y\|\\
&\le \frac{1}{3}\sigma_r(\widetilde\A\widetilde\B)+\|\cP_{\cT^\perp}\Y\|.
\end{split}
\end{equation}
So we only need to show $\|\cP_{\cT^\perp}\Y\|\le\frac{1}{3}\sigma_r(\widetilde\A\widetilde\B)$.

Before proceeding, we begin by introducing a normalized version $\cQ_\Omega:\R^{n_1\times n_2}\rightarrow\R^{n_1\times n_2}$ of $\cP_\Omega$:
\begin{equation*}
\cQ_\Omega=p^{-1}\cP_\Omega-\cI.
\end{equation*}
With this, we have
\begin{equation*}
\cP_\cT\cP_\Omega\cP_\cT=p\cP_\cT(\cI+\cQ_\Omega)\cP_\cT.
\end{equation*}
Note that for any operator $\cP:\cT\rightarrow\cT$, we have
\begin{equation*}
\cP^{-1}=\sum_{k\ge 0} (\cP_\cT-\cP)^k \mbox{ whenever } \|\cP_\cT-\cP\|<1.
\end{equation*}
So according to Theorem \ref{theorem: concentration of operator}, the operator $p(\cP_\cT\cP_\Omega\cP_\cT)^{-1}$ can be represented as a \emph{convergent} Neumann series
\begin{equation*}
p(\cP_\cT\cP_\Omega\cP_\cT)^{-1}=\sum_{k\ge 0}(-1)^k(\cP_\cT\cQ_\Omega\cP_\cT)^k,
\end{equation*}
because $\|\cP_\cT\cQ_\Omega\cP_\cT\|\le \epsilon<\frac{1}{2}$ once $m\ge C\mu n_{(1)}r\log n_{(1)}$ for a sufficiently large absolute constant $C$. We also note that
\begin{equation*}
p(\cP_{\cT^\perp}\cQ_\Omega\cP_\cT)=\cP_{\cT^\perp}\cP_\Omega\cP_\cT,
\end{equation*}
because $\cP_{\cT^\perp}\cP_\cT=0$. Thus
\begin{equation*}
\begin{split}
\|\cP_{\cT^\perp}\Y\|&=\|\cP_{\cT^\perp}\cP_\Omega\cP_\cT(\cP_\cT\cP_\Omega\cP_\cT)^{-1}(\cP_\cT(-\bLambda)-\widetilde\A\widetilde\B))\|\\
&=\|\cP_{\cT^\perp}\cQ_\Omega\cP_\cT p(\cP_\cT\cP_\Omega\cP_\cT)^{-1}((\cP_\cT(-\bLambda)-\widetilde\A\widetilde\B))\|\\
&=\|\sum_{k\ge 0} (-1)^k\cP_{\cT^\perp}\cQ_\Omega(\cP_\cT\cQ_\Omega\cP_\cT)^k((\cP_\cT(-\bLambda)-\widetilde\A\widetilde\B))\|\\
&\le \sum_{k\ge 0}\| (-1)^k\cP_{\cT^\perp}\cQ_\Omega(\cP_\cT\cQ_\Omega\cP_\cT)^k((\cP_\cT(-\bLambda)-\widetilde\A\widetilde\B))\|_F\\
&\le \|\cQ_\Omega\|\sum_{k\ge 0} \|\cP_\cT\cQ_\Omega\cP_\cT\|^k\|\cP_\cT(-\bLambda)-\widetilde\A\widetilde\B))\|_F\\
&\le \frac{4}{p}\|\cP_\cT(-\bLambda)-\widetilde\A\widetilde\B)\|_F\\
&\le \Theta\left(\frac{n_1n_2}{m} \right) \sqrt{\frac{r}{3 n_{(1)}^2}}\sigma_r(\widetilde\A\widetilde\B)\\
&\le \frac{1}{3}\sigma_r(\widetilde\A\widetilde\B)
\end{split}
\end{equation*}
with high probability. The proof is completed.
\end{proof}

It thus suffices to construct a dual certificate $\bLambda$ such that all conditions in \eqref{equ: dual condition for matrix completion 2} hold. To this end, partition $\Omega=\Omega_1\cup\Omega_2\cup...\cup\Omega_b$ into $b$ partitions of size $q$. By assumption, we may choose
\begin{equation*}
q\ge \frac{128}{3} C \beta\kappa^2\mu rn_{(1)}\log n_{(1)}\quad\mbox{and}\quad b\ge\frac{1}{2}\log_{2\kappa} \left(24^2 n_{(1)}^2\kappa^2\right)
\end{equation*}
for a sufficiently large constant $C$.
Let $\Omega_j\sim\mathsf{Ber}(q)$ denote the set of indices corresponding to the $j$-th partitions.
Define $\W_0=\widetilde\A\widetilde\B$ and set $\bLambda_k=\frac{n_1n_2}{q}\sum_{j=1}^k\cP_{\Omega_j}(\W_{j-1})$, $\W_k=\widetilde\A\widetilde\B-\cP_{\cT}(\bLambda_k)$ for $k=1,2,...,b$. Then by Theorem \ref{theorem: concentration of operator},
\begin{equation*}
\begin{split}
\|\W_k\|_F&=\left\|\W_{k-1}-\frac{n_1n_2}{q}\cP_\cT\cP_{\Omega_k}(\W_{k-1})\right\|_F=\left\|\left(\cP_\cT-\frac{n_1n_2}{q}\cP_\cT\cP_{\Omega_k}\cP_\cT\right)(\W_{k-1})\right\|_F\\&\le\frac{1}{2\kappa}\|\W_{k-1}\|_F.
\end{split}
\end{equation*}
So it follows that $\|\widetilde\A\widetilde\B-\cP_{\cT}(\bLambda_b)\|_F=\|\W_b\|_F\le (2\kappa)^{-b}\|\W_0\|_F\le (2\kappa)^{-b}\sqrt{r}\sigma_1(\widetilde\A\widetilde\B)\le \sqrt{\frac{r}{24^2  n_{(1)}^2}}\sigma_r(\widetilde\A\widetilde\B)$.

The following lemma together implies the strong duality of \eqref{equ: equality constraint matrix completion} straightforwardly.
\begin{lemma}
\label{lemma: dual condition (c) for matrix completion}
Under the assumptions of Theorem \ref{theorem: matrix completion}, the dual certification $\bLambda_b$ obeys the dual condition \eqref{equ: dual condition for matrix completion 2} with probability at least $1-n_{(1)}^{-10}$.
\end{lemma}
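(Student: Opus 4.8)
The plan is to verify that the explicitly constructed certificate $\bLambda_b$ satisfies the three relaxed dual conditions in \eqref{equ: dual condition for matrix completion 2}; the previous lemma then upgrades this to \eqref{equ: dual condition for matrix completion}, which by Theorem~\ref{theorem: strong duality with condition (c)} gives strong duality for \eqref{equ: primal problem} and hence the exactness of the convex program \eqref{equ: equality constraint matrix completion}. Condition (a), $\bLambda_b\in\Omega$, is immediate from the definition $\bLambda_b=\frac{n_1n_2}{q}\sum_{j=1}^b \cP_{\Omega_j}(\W_{j-1})$, since each summand is supported on $\Omega_j\subseteq\Omega$. Condition (b), $\|\cP_\cT(-\bLambda_b)-\widetilde\A\widetilde\B\|_F\le\sqrt{r/(3n_{(1)}^2)}\,\sigma_r(\widetilde\A\widetilde\B)$, has essentially already been established by the geometric-decay argument preceding Lemma~\ref{lemma: dual condition (c) for matrix completion}: iterating the operator bound from Theorem~\ref{theorem: concentration of operator} on each fresh partition $\Omega_k$ gives $\|\W_k\|_F\le(2\kappa)^{-1}\|\W_{k-1}\|_F$, so $\|\W_b\|_F\le(2\kappa)^{-b}\sqrt{r}\,\sigma_1(\widetilde\A\widetilde\B)\le\sqrt{r/(24^2 n_{(1)}^2)}\,\sigma_r(\widetilde\A\widetilde\B)$ by the choice of $b$; since $\cP_\cT(-\bLambda_b)-\widetilde\A\widetilde\B=-\W_b$ this is stronger than what (b) requires. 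So the real content is condition (c).

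To prove condition (c), $\|\cP_{\cT^\perp}\bLambda_b\|<\tfrac13\sigma_r(\widetilde\A\widetilde\B)$, I would write $\cP_{\cT^\perp}\bLambda_b=\sum_{k=1}^b \frac{n_1n_2}{q}\cP_{\cT^\perp}\cP_{\Omega_k}(\W_{k-1})$ and bound the operator norm of each term. For a fixed $k$, $\W_{k-1}$ depends only on $\Omega_1,\dots,\Omega_{k-1}$ and is therefore independent of the fresh sample $\Omega_k$; conditioning on $\W_{k-1}$, the matrix $\frac{n_1n_2}{q}\cP_{\cT^\perp}\cP_{\Omega_k}(\W_{k-1})$ has expectation $\cP_{\cT^\perp}(\W_{k-1})$ under the Bernoulli$(q/(n_1n_2))$ model and can be controlled by a matrix Bernstein (or the operator-norm concentration for sampling operators, e.g.\ \cite[Theorem~6.3]{Candes2009exact} / \cite{recht2011simpler}) inequality. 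The key quantities feeding that bound are the incoherence-type norms of $\W_{k-1}$: one needs $\|\W_{k-1}\|_\infty$ and the row/column $\ell_2$ norms $\|\W_{k-1}\|_{\mu(\infty,2)}$ to stay controlled, which is shown by a parallel induction — each application of $\cP_\cT\cP_{\Omega_k}\cP_\cT$ contracts not only the Frobenius norm but also these weighted norms (again via \cite{Candes2009exact}, the $\infty$ and $\infty,2$ versions of the concentration lemma), at the cost of an extra $O(\sqrt{\mu r n_{(1)}\log n_{(1)}/q})$ factor per step. Summing the geometric series over $k=1,\dots,b$ and plugging in $q\ge \frac{128}{3}C\beta\kappa^2\mu r n_{(1)}\log n_{(1)}$ makes each term at most $O(\kappa^{-1})\cdot(2\kappa)^{-(k-1)}\sigma_r(\widetilde\A\widetilde\B)$ or similar, so the total is below $\tfrac13\sigma_r(\widetilde\A\widetilde\B)$.

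The main obstacle is precisely this coupled induction on the auxiliary norms $\|\W_k\|_\infty$ and $\|\W_k\|_{\infty,2}$: one must show simultaneously that (i) the Frobenius norm contracts by $1/(2\kappa)$, (ii) the $\infty$ and $(\infty,2)$ norms remain $O(\text{poly})\cdot\sigma_1(\widetilde\A\widetilde\B)\cdot(\text{contraction})^k$, and (iii) the per-step spectral-norm contribution of $\cP_{\cT^\perp}\cP_{\Omega_k}(\W_{k-1})$ is geometrically small — and all three must hold on a single high-probability event. I would handle this by a union bound over the $b=O(\log_{2\kappa} n_{(1)})$ partitions, paying a $\log$ factor in the failure probability (absorbed into $n_{(1)}^{-10}$ by taking $\beta$ large), and by carefully tracking the dependence on $\kappa$ — this is where the $\kappa^2$ in the sample complexity and the $\log_{2\kappa} n_{(1)}$ number of partitions originate. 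Handling the Bernoulli-vs-uniform sampling discrepancy is routine (the standard argument that exact-cardinality sampling is at least as concentrated as independent Bernoulli sampling, cf.\ \cite{Candes2009exact}). Finally, once \eqref{equ: dual condition for matrix completion} is verified, Theorem~\ref{theorem: strong duality with condition (c)} applied with $H=\I_{\{\M:\cP_\Omega\M=\cP_\Omega\X^*\}}$ and $\Psi=\Omega$ yields strong duality, and combined with Theorem~\ref{theorem: uniqueness of matrix completion} gives $\widetilde\X=\X^*$, completing the proof of Theorem~\ref{theorem: matrix completion}.
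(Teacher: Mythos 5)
Your proposal follows the paper's own proof essentially step for step: (a) is immediate from the construction, (b) is exactly the $(2\kappa)^{-1}$ Frobenius contraction already derived from Theorem~\ref{theorem: concentration of operator}, and (c) is obtained by bounding each term $\bigl\|\bigl(\tfrac{n_1n_2}{q}\cP_{\Omega_k}-\cI\bigr)\W_{k-1}\bigr\|$ with the sampling-operator concentration bound whose inputs are $\|\W_{k-1}\|_\infty$ and $\|\W_{k-1}\|_{\infty,2}$, controlled by precisely the coupled geometric induction you describe (the paper's Lemmas~\ref{lemma: 2 norm and infty norm}, \ref{lemma: infty norm and infty norm}, and \ref{lemma: infty 2 norm and infty, infty 2 norm}), with the fresh, independent partitions and the Bernoulli--uniform equivalence handled as in the paper. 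This is the same route as the paper's proof, with no substantive difference.
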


\begin{proof}
It is well known that for matrix completion, the Uniform model $\Omega \sim \textsf{Uniform}(m)$ is equivalent to the Bernoulli model $\Omega\sim\textup{\textsf{Ber}}(p)$, where each element in $[n_1] \times [n_2]$ is included with probability $p = \Theta(m/(n_1 n_2))$ independently; see Section~\ref{section: Equivalence of Bernoulli and Uniform Models} for a brief justification.
By the equivalence, we can suppose $\Omega\sim\textup{\textsf{Ber}}(p)$.

To prove Lemma \ref{lemma: dual condition (c) for matrix completion}, as a preliminary, we need the following lemmas.

\begin{lemma}[\cite{chen2015incoherence}, Lemma 2]
\label{lemma: 2 norm and infty norm}
Suppose $\Z$ is a fixed matrix. Suppose $\Omega\sim\textup{\textsf{Ber}}(p)$. Then with high probability,
\begin{equation*}
\|(\cI-p^{-1}\cP_\Omega)\Z\|\le C_0'\left(\frac{\log n_{(1)}}{p}\|\Z\|_\infty+\sqrt{\frac{\log n_{(1)}}{p}}\|\Z\|_{\infty,2}\right),
\end{equation*}
where $C_0'>0$ is an absolute constant and
\begin{equation*}
\|\Z\|_{\infty,2}=\max\left\{\max_i\sqrt{\sum_b \Z_{ib}^2},\max_j\sqrt{\sum_a\Z_{aj}^2}\right\}.
\end{equation*}
\end{lemma}

\begin{lemma}[\cite{Candes}, Lemma 3.1]
\label{lemma: infty norm and infty norm}
Suppose $\Omega\sim\textup{\textsf{Ber}}(p)$ and $\Z$ is a fixed matrix. Then with high probability,
\begin{equation*}
\|\Z-p^{-1}\cP_\cT\cP_\Omega\Z\|_\infty\le \epsilon\|\Z\|_\infty,
\end{equation*}
provided that $p\ge C_0\epsilon^{-2}(\mu r\log n_{(1)})/n_{(2)}$ for some absolute constant $C_0>0$.
\end{lemma}

\begin{lemma}[\cite{chen2015incoherence}, Lemma 3]
\label{lemma: infty 2 norm and infty, infty 2 norm}
Suppose that $\Z$ is a fixed matrix and $\Omega\sim\textup{\textsf{Ber}}(p)$. If $p\ge c_0 \mu r\log n_{(1)}/n_{(2)}$ for some $c_0$ sufficiently large, then with high probability,
\begin{equation*}
\|(p^{-1}\cP_{\cT}\cP_\Omega-\cP_{\cT})\Z\|_{\infty,2}\le \frac{1}{2}\sqrt{\frac{n_{(1)}}{\mu r}}\|\Z\|_\infty+\frac{1}{2}\|\Z\|_{\infty,2}.
\end{equation*}
\end{lemma}

Observe that by Lemma \ref{lemma: infty norm and infty norm},
\begin{equation*}
\|\W_j\|_\infty\le \left(\frac{1}{2}\right)^j\|\widetilde\A\widetilde\B\|_\infty,
\end{equation*}
and by Lemma \ref{lemma: infty 2 norm and infty, infty 2 norm},
\begin{equation*}
\|\W_j\|_{\infty,2}\le \frac{1}{2}\sqrt{\frac{n_{(1)}}{\mu r}}\|\W_{j-1}\|_\infty+\frac{1}{2}\|\W_{j-1}\|_{\infty,2}.
\end{equation*}
So
\begin{equation*}
\begin{split}
&\ \ \ \|\W_j\|_{\infty,2}\\&\le \left(\frac{1}{2}\right)^j\sqrt{\frac{n_{(1)}}{\mu r}}\|\widetilde\A\widetilde\B\|_\infty+\frac{1}{2}\|\W_{j-1}\|_{\infty,2}\\
&\le j\left(\frac{1}{2}\right)^j\sqrt{\frac{n_{(1)}}{\mu r}}\|\widetilde\A\widetilde\B\|_\infty+\left(\frac{1}{2}\right)^j\|\widetilde\A\widetilde\B\|_{\infty,2}.
\end{split}
\end{equation*}
Therefore,
\begin{equation*}
\begin{split}
&\ \ \ \ \|\cP_{\cT^\perp}\bLambda_b\|\\
&\le \sum_{j=1}^b\|\frac{n_1n_2}{q}\cP_{\cT^\perp}\cP_{\Omega_j}\W_{j-1}\|\\
&=\sum_{j=1}^b\|\cP_{\cT^\perp}(\frac{n_1n_2}{q}\cP_{\Omega_j}\W_{j-1}-\W_{j-1})\|\\
&\le\sum_{j=1}^b\|(\frac{n_1n_2}{q}\cP_{\Omega_j}-\cI)(\W_{j-1})\|.
\end{split}
\end{equation*}
Let $p$ denote $\Theta\left(\frac{q}{n_1n_2}\right)$. By Lemma~\ref{lemma: 2 norm and infty norm},
\begin{equation*}
\begin{split}
&\ \ \ \ \|\cP_{\cT^\perp}\bLambda_b\|\\
&\le C_0'\frac{\log n_{(1)}}{p}\sum_{j=1}^b \|\W_{j-1}\|_\infty+C_0'\sqrt{\frac{\log n_{(1)}}{p}}\sum_{j=1}^b \|\W_{j-1}\|_{\infty,2}\\
&\le C_0'\frac{\log n_{(1)}}{p}\sum_{j=1}^b \left(\frac{1}{2}\right)^j\|\widetilde\A\widetilde\B\|_\infty+C_0'\sqrt{\frac{\log n_{(1)}}{p}}\sum_{j=1}^b \left[j\left(\frac{1}{2}\right)^j\sqrt{\frac{n_{(1)}}{\mu r}}\|\widetilde\A\widetilde\B\|_\infty+\left(\frac{1}{2}\right)^j\|\widetilde\A\widetilde\B\|_{\infty,2}\right]\\
&\le C_0'\frac{\log n_{(1)}}{p}\|\widetilde\A\widetilde\B\|_\infty+2C_0'\sqrt{\frac{\log n_{(1)}}{p}}\sqrt{\frac{n_{(1)}}{\mu r}}\|\widetilde\A\widetilde\B\|_\infty+C_0'\sqrt{\frac{\log n_{(1)}}{p}}\|\widetilde\A\widetilde\B\|_{\infty,2}.
\end{split}
\end{equation*}

Setting $\widetilde\A\widetilde\B=\X^*$, we note the facts that (we assume WLOG $n_2\ge n_1$)
\begin{equation*}
\|\X^*\|_{\infty,2}=\max_i \|\e_i^T\U\mathbf{\Sigma}\V^T\|_2\le \max_i\|\e_i^T\U\|\sigma_1(\X^*)\le \sqrt{\frac{\mu r}{n_1}}\sigma_1(\X^*) \le \sqrt{\frac{\mu r}{n_1}} \kappa \sigma_r(\X^*),
\end{equation*}
and that
\begin{equation*}
\begin{split}
\|\X^*\|_\infty&=\max_{ij}\langle\X^*,\e_i\e_j^T\rangle=\max_{ij}\langle\U\mathbf{\Sigma}\V^T,\e_i\e_j^T\rangle=\max_{ij}\langle\e_i^T\U\mathbf{\Sigma},\e_j^T\V\rangle\\
&\le \max_{ij}\|\e_i^T\U\mathbf{\Sigma}\V^T\|_2\|\e_j^T\V\|_2\le \max_{j} \|\X^*\|_{\infty,2}\|\e_j^T\V\|_2\le \frac{\mu r\kappa}{\sqrt{n_1n_2}}\sigma_r(\X^*).
\end{split}
\end{equation*}
Substituting $p=\Theta\left(\frac{\kappa^2\mu r n_{(1)}\log (n_{(1)})\log_{2\kappa}(n_{(1)})}{n_1n_2}\right)$, we obtain $\|\cP_{\cT^\perp}\bLambda_b\|<\frac{1}{3}\sigma_r(\X^*)$. The proof is completed.
\end{proof}

\section{Subgradient of the $r*$ Function}

\begin{lemma}
\label{lemma: subgradient of r*}
Let $\U\mathbf{\Sigma}\V^T$ be the skinny SVD of matrix $\X^*$ of rank $r$. The subdifferential of $\|\cdot\|_{r*}$ evaluated at $\X^*$ is given by
\begin{equation*}
\partial \|\X^*\|_{r*}=\{\X^*+\W: \U^T\W=\0,\W\V=\0,\|\W\|\le \sigma_r(\X^*)\}.
\end{equation*}
\end{lemma}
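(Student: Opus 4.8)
The plan is to exploit the definition of $\|\cdot\|_{r*}$ as the convex conjugate of $g(\cdot):=\tfrac12\|\cdot\|_r^2$, where $\|\M\|_r^2=\sum_{i=1}^r\sigma_i^2(\M)$. Since $g$ is closed and convex --- one may write $\|\M\|_r^2=\max_{\U^T\U=\I_r}\tr(\U^T\M\M^T\U)$, a pointwise maximum of convex quadratics in $\M$ --- the Fenchel--Young inequality $\phi(\X)+\phi^*(\M)\ge\langle\M,\X\rangle$ (equality iff $\M\in\partial\phi(\X)$), together with $\phi=\|\cdot\|_{r*}=g^*$ and hence $\phi^*=g^{**}=g$, shows that $\M\in\partial\|\X^*\|_{r*}$ if and only if $\M$ attains the supremum defining $\|\X^*\|_{r*}$, i.e. $\langle\M,\X^*\rangle-g(\M)=\|\X^*\|_{r*}$. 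So the task reduces to identifying all maximizers of $\N\mapsto\langle\N,\X^*\rangle-g(\N)$.

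First I would compute $\|\X^*\|_{r*}$ and record the equality conditions. By von Neumann's trace inequality $\langle\N,\X^*\rangle\le\sum_i\sigma_i(\N)\sigma_i(\X^*)$, and since $\X^*$ has rank $r$ only the first $r$ terms survive; completing the square in each term,
\begin{equation*}
\langle\N,\X^*\rangle-g(\N)\;\le\;\sum_{i=1}^r\big(\sigma_i(\N)\sigma_i(\X^*)-\tfrac12\sigma_i^2(\N)\big)\;\le\;\tfrac12\sum_{i=1}^r\sigma_i^2(\X^*)\;=\;\tfrac12\|\X^*\|_F^2,
\end{equation*}
and equality forces (i) equality in von Neumann's inequality and (ii) $\sigma_i(\N)=\sigma_i(\X^*)$ for $i=1,\dots,r$. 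For the inclusion ``$\supseteq$'' I would take $\M=\X^*+\W$ with $\U^T\W=\0$, $\W\V=\0$, $\|\W\|\le\sigma_r(\X^*)$ and check directly that it attains the bound: using $\M^T\M=\X^{*T}\X^*+\W^T\W$ together with the fact that these two summands act on orthogonal subspaces (since $\W$ kills $\mathrm{col}(\V)$ and $\X^{*T}\X^*$ kills its complement), the singular values of $\M$ are the multiset union of $\{\sigma_i(\X^*)\}_{i=1}^r$ and $\{\sigma_j(\W)\}_j$, so because $\|\W\|\le\sigma_r(\X^*)$ the $r$ largest are $\sigma_1(\X^*),\dots,\sigma_r(\X^*)$ and $g(\M)=\tfrac12\|\X^*\|_F^2$; combined with $\langle\M,\X^*\rangle=\|\X^*\|_F^2+\langle\W,\X^*\rangle=\|\X^*\|_F^2$ (the cross term vanishes because $\W^T\U=\0$) this yields $\langle\M,\X^*\rangle-g(\M)=\tfrac12\|\X^*\|_F^2=\|\X^*\|_{r*}$.

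For the reverse inclusion I would take an arbitrary maximizer $\M$ and use (i)--(ii). The equality case of von Neumann's inequality produces a simultaneous singular value decomposition: there exist orthogonal $\mathbf{P},\mathbf{Q}$ with $\X^*=\mathbf{P}\,\diag(\sigma(\X^*))\,\mathbf{Q}^T$ and $\M=\mathbf{P}\,\diag(\sigma(\M))\,\mathbf{Q}^T$, both diagonals nonincreasing. Since $\X^*$ has rank exactly $r$, its first $r$ left (resp. right) singular vectors $\mathbf{p}_1,\dots,\mathbf{p}_r$ (resp. $\mathbf{q}_1,\dots,\mathbf{q}_r$) span $\mathrm{col}(\U)$ (resp. $\mathrm{col}(\V)$) and $\X^*=\sum_{i=1}^r\sigma_i(\X^*)\mathbf{p}_i\mathbf{q}_i^T$; with (ii) this gives $\M=\X^*+\W$ for $\W=\sum_{i>r}\sigma_i(\M)\mathbf{p}_i\mathbf{q}_i^T$. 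As $\mathbf{p}_i\perp\mathrm{col}(\U)$ and $\mathbf{q}_i\perp\mathrm{col}(\V)$ for $i>r$, this $\W$ obeys $\U^T\W=\0$ and $\W\V=\0$, and $\|\W\|=\sigma_{r+1}(\M)\le\sigma_r(\M)=\sigma_r(\X^*)$, which is precisely the asserted description.

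The step I expect to be the main obstacle is the careful treatment of the equality case of von Neumann's trace inequality --- producing genuinely common orthogonal factors $\mathbf{P},\mathbf{Q}$ for $\X^*$ and $\M$ simultaneously, and handling repeated singular values as well as the zero block of $\X^*$ (where singular vectors are not unique, so one must argue that a compatible choice exists). Once that spectral structure is in place, the remainder is routine bookkeeping with the orthogonality relations. An alternative that sidesteps the simultaneous-SVD statement would be to characterize $\partial g(\M)$ directly and invoke $\M\in\partial g^*(\X^*)\iff\X^*\in\partial g(\M)$, but this appears to require the same kind of spectral analysis.
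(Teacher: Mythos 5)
Your proposal is correct and follows essentially the same route as the paper: identify $\partial\|\cdot\|_{r*}$ at $\X^*$ with the set of maximizers of $\N\mapsto\langle\N,\X^*\rangle-\tfrac12\|\N\|_r^2$ (conjugacy/Fenchel--Young), then characterize those maximizers via von Neumann's trace inequality and its equality case. The paper's proof is just a terser version of this argument (it asserts the alignment of singular vectors and the values $\sigma_i(\Y)=\sigma_i(\X^*)$ directly), so your write-up simply fills in the same details, including the equality-case bookkeeping you flag.
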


\begin{proof}
Note that for any fixed function $f(\cdot)$, the set of all optimal solutions of the problem
\begin{equation}
\label{equ: definition of conjugate function}
f^*(\X^*)=\max_\Y \langle\X^*,\Y\rangle-f(\Y)
\end{equation}
form the subdifferential of the conjugate function $f^*(\cdot)$ evaluated at $\X^*$. Set $f(\cdot)$ to be $\frac{1}{2}\|\cdot\|_r^2$ and notice that the function $\frac{1}{2}\|\cdot\|_r^2$ is unitarily invariant. By Von Neumann's trace inequality, the optimal solutions to problem \eqref{equ: definition of conjugate function} are given by $[\U,\U^\perp]\diag([\sigma_1(\Y),...,\sigma_r(\Y),\sigma_{r+1}(\Y),...,\sigma_{n_{(2)}}(\Y)])[\V,\V^\perp]^T$, where $\{\sigma_i(\Y)\}_{i=r+1}^{n_{(2)}}$ can be any value no larger than $\sigma_r(\Y)$ and $\{\sigma_i(\Y)\}_{i=1}^{r}$ are given by the optimal solution to the problem
\begin{equation*}
\max_{\{\sigma_i(\Y)\}_{i=1}^r} \sum_{i=1}^r\sigma_i(\X^*)\sigma_i(\Y)-\frac{1}{2}\sum_{i=1}^r\sigma_i^2(\Y).
\end{equation*}
The solution is unique such that $\sigma_i(\Y)=\sigma_i(\X^*)$, $i=1,2,...,r$. The proof is complete.
\end{proof}

\section{Proof of Theorem \ref{theorem: uniqueness of matrix completion}}
\label{section: proof of information-theoretic upper bound for matrix completion}

\medskip
\noindent{\textbf{Theorem~\ref{theorem: uniqueness of matrix completion}} (Uniqueness of Solution. Restated)\textbf{.}}
\emph{
Let $\Omega\sim\textup{\textsf{Uniform}}(m)$ be the support set uniformly distributed among all sets of cardinality $m$. Suppose that $m\ge c\kappa^2\mu n_{(1)}r\log n_{(1)}\log_{2\kappa} n_{(1)}$ for an absolute constant $c$ and $\X^*$ obeys $\mu$-incoherence \eqref{equ: incoherence}. Then $\X^*$ is the unique solution of non-convex optimization
\begin{equation}
\label{equ: non-convex mc}
\min_{\A,\B} \frac{1}{2}\|\A\B\|_F^2,\quad \mathrm{s.t.}\quad \cP_\Omega(\A\B)=\cP_\Omega(\X^*),
\end{equation}
with probability at least $1-n_{(1)}^{-10}$.
}

\begin{proof}
We note that a recovery result under the Bernoulli model automatically implies a
corresponding result for the uniform model~\cite{Candes}; see Section \ref{section: Equivalence of Bernoulli and Uniform Models} for the details. So in the following, we assume the Bernoulli model.

Consider the feasibility of the matrix completion problem:
\begin{equation}
\label{equ: rank-constraint matrix completion}
\mbox{Find a matrix $\X\in\R^{n_1\times n_2}$ such that}\quad\cP_\Omega(\X)=\cP_\Omega(\X^*),\quad \|\X\|_F\le \|\X^*\|_F, \quad\rank(\X)\le r.
\end{equation}
Note that if $\X^*$ is the unique solution of \eqref{equ: rank-constraint matrix completion}, then $\X^*$ is the unique solution of \eqref{equ: non-convex mc}. We now show the former.
Our proof first identifies a feasibility condition for problem \eqref{equ: rank-constraint matrix completion}, and then shows that $\X^*$ is the only matrix that obeys this feasibility condition when the sample size is large enough. We denote by
\begin{equation*}
\cD_\cS(\X^*)=\{\X-\X^*\in\R^{n_1\times n_2}:\rank(\X)\le r,\ \|\X\|_F\le \|\X^*\|_F\},
\end{equation*}
and
\begin{equation*}
\cT= \{\U\X^T+\Y\V^T,\ \X\in\R^{n_2\times r},\ \Y\in\R^{n_1\times r}\},
\end{equation*}
where $\U\mathbf{\Sigma}\V^T$ is the skinny SVD of $\X^*$.

We have the following proposition for the feasibility of problem \eqref{equ: rank-constraint matrix completion}.
\begin{proposition}[Feasibility Condition]
\label{proposition: feasibility condition for matrix completion}
$\X^*$ is the unique feasible solution to problem \eqref{equ: rank-constraint matrix completion} if $\cD_\cS(\X^*)\cap\Omega^\perp=\{\0\}$.
\end{proposition}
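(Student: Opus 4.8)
The plan is to prove Proposition~\ref{proposition: feasibility condition for matrix completion} by a direct matching of definitions: the three constraints defining problem~\eqref{equ: rank-constraint matrix completion} say precisely that the difference $\X-\X^*$ lies in $\cD_\cS(\X^*)\cap\Omega^\perp$, so the transversality hypothesis forces that difference to vanish. I do not expect a substantive obstacle here; the real work (establishing $\cD_\cS(\X^*)\cap\Omega^\perp=\{\0\}$ with high probability via a dual-certificate argument) happens in the later sections, and this proposition is the clean geometric reformulation that makes that reduction legitimate.

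Concretely, first I would observe that $\X^*$ is itself feasible for~\eqref{equ: rank-constraint matrix completion}: it trivially satisfies $\cP_\Omega(\X^*)=\cP_\Omega(\X^*)$ and $\|\X^*\|_F\le\|\X^*\|_F$, and $\rank(\X^*)=r\le r$. Hence the feasible set is nonempty. Next, let $\X$ be an arbitrary feasible point and put $\D=\X-\X^*$. From $\rank(\X)\le r$ and $\|\X\|_F\le\|\X^*\|_F$, the definition of $\cD_\cS(\X^*)=\{\X-\X^*:\rank(\X)\le r,\ \|\X\|_F\le\|\X^*\|_F\}$ gives $\D\in\cD_\cS(\X^*)$. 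From $\cP_\Omega(\X)=\cP_\Omega(\X^*)$ we get $\cP_\Omega(\D)=\0$, i.e. $\D$ is supported off $\Omega$; using the convention that $\Omega^\perp$ is the linear subspace $\{\D:\cP_\Omega(\D)=\0\}$, this says $\D\in\Omega^\perp$. Therefore $\D\in\cD_\cS(\X^*)\cap\Omega^\perp=\{\0\}$ by hypothesis, so $\D=\0$ and $\X=\X^*$. Combined with the feasibility of $\X^*$ noted above, this shows $\X^*$ is the unique feasible solution.

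The only point that deserves care is the equivalence $\cP_\Omega(\X)=\cP_\Omega(\X^*)\iff \X-\X^*\in\Omega^\perp$, which reconciles the fact that the rank/Frobenius constraints are "affine" (encoded by subtracting $\X^*$, as in the definition of $\cD_\cS(\X^*)$) while $\Omega^\perp$ is a linear subspace; once this is spelled out, the claim collapses to the one-line intersection argument above. This is exactly the picture in Figure~\ref{figure: exact recovery}: $\X^*$ is the unique point of~\eqref{equ: rank-constraint matrix completion} iff moving away from $\X^*$ along any direction in $\Omega^\perp$ increases either the rank beyond $r$ or the Frobenius norm beyond $\|\X^*\|_F$, i.e. iff $\cD_\cS(\X^*)$ meets $\Omega^\perp$ only at the origin.
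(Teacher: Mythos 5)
Your proposal is correct and follows essentially the same route as the paper: reformulate feasibility in terms of the difference $\D=\X-\X^*$, observe that the rank and Frobenius-norm constraints place $\D$ in $\cD_\cS(\X^*)$ while the measurement constraint places $\D$ in $\Omega^\perp$, and conclude $\D=\0$ from the hypothesis. The extra care you take in spelling out $\cP_\Omega(\X)=\cP_\Omega(\X^*)\iff\X-\X^*\in\Omega^\perp$ and the feasibility of $\X^*$ itself is consistent with, and slightly more explicit than, the paper's one-paragraph argument.
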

\begin{proof}
Notice that problem \eqref{equ: rank-constraint matrix completion} is equivalent to another feasibility problem
\begin{equation*}
\mbox{Find a matrix $\D\in\R^{n_1\times n_2}$ such that}\quad \rank(\X^*+\D)\le r,\ \ \|\X^*+\D\|_F\le \|\X^*\|_F,\ \ \D\in \Omega^\perp.
\end{equation*}
Suppose that $\cD_\cS(\X^*)\cap\Omega^\perp=\{\0\}$. Since $\rank(\X^*+\D)\le r$ and $\|\X^*+\D\|_F\le \|\X^*\|_F$ are equivalent to $\D\in \cD_\cS(\X^*)$, and note that $\D\in\Omega^\perp$, we have $\D=\0$, which means $\X^*$ is the unique feasible solution to problem \eqref{equ: rank-constraint matrix completion}.
\end{proof}

The remainder of the proof is to show $\cD_\cS(\X^*)\cap\Omega^\perp=\{\0\}$. To proceed, we note that
\begin{equation*}
\begin{split}
\cD_\cS(\X^*)&=\left\{\X - \X^* \in\R^{n_1\times n_2}:\rank(\X)\le r,\ \frac{1}{2}\|\X\|_F^2\le \frac{1}{2}\|\X^*\|_F^2\right\}\\
&\subseteq \{\X- \X^*\in\R^{n_1\times n_2}:\|\X\|_{r*}\le \|\X^*\|_{r*}\} \ \ \left(\text{since }\frac{1}{2}\|\Y\|_F^2=\|\Y\|_{r*}\text{ for any rank-$r$ matrix}\right)\\
&\triangleq \cD_{\cS_*}(\X^*).
\end{split}
\end{equation*}
We now show that
\begin{equation}
\label{equ: convex intersection}
\cD_{\cS_*}(\X^*)\cap \Omega^\perp=\{\0\},
\end{equation}
when $m\ge c\kappa^2\mu rn_{(1)}\log_{2\kappa} (n_{(1)})\log (n_{(1)})$, which will prove $\cD_{\cS}(\X^*)\cap \Omega^\perp=\{\0\}$ as desired.

By Lemma~\ref{lemma: existence of dual certificate}, there exists a $\bLambda$ such that
\begin{flalign*}
\label{equ: non-convex dual condition for matrix completion}
\begin{split}
&\mbox{(1)}\qquad \bLambda\in\Omega,\\
&\mbox{(2)}\qquad \cP_{\cT}(-\bLambda)=\X^*,\\
&\mbox{(3)}\qquad \|\cP_{\cT^\perp}\bLambda\|<\frac{2}{3}\sigma_r(\X^*).
\end{split}
\end{flalign*}
Consider any $\D\in\Omega^\perp$ such that $\D\neq \0$.
By Lemma \ref{lemma: subgradient of r*}, for any $\W\in\cT^\perp \text{ and }\|\W\|\le \sigma_r(\X^*)$,
\begin{equation*}
\begin{split}
\|\X^*+\D\|_{r*}&\ge \|\X^*\|_{r*}+\langle\X^*+\W,\D\rangle.
\end{split}
\end{equation*}
Since $\langle\W,\D\rangle = \langle\cP_{\cT^\perp}\W,\D\rangle = \langle\W, \cP_{\cT^\perp} \D\rangle$, we can choose $ \W$ such that $ \langle\W,\D\rangle=\sigma_r(\X^*)\|\cP_{\cT^\perp}\D\|_*$. Then
\begin{equation*}
\begin{split}
\|\X^*+\D\|_{r*}
&\ge\|\X^*\|_{r*}+\sigma_r(\X^*)\|\mathcal{P}_{\cT^\perp} \D\|_*+\langle\X^*,\D\rangle \\
&=\|\X^*\|_{r*}+\sigma_r(\X^*)\|\mathcal{P}_{\cT^\perp} \D\|_*+\langle\X^*+\bLambda,\D\rangle\quad(\text{since }\bLambda\in\Omega\text{ and }\D\in\Omega^\perp)\\
&=\|\X^*\|_{r*}+\sigma_r(\X^*)\|\mathcal{P}_{\cT^\perp} \D\|_*+\langle\X^*+\cP_\cT\bLambda,\D\rangle+\langle\cP_{\cT^\perp}\bLambda,\D\rangle\\
&=\|\X^*\|_{r*}+\sigma_r(\X^*)\|\mathcal{P}_{\cT^\perp} \D\|_*+\langle\cP_{\cT^\perp}\bLambda,\D\rangle\quad(\text{by condition (2)})\\
&=\|\X^*\|_{r*}+\sigma_r(\X^*)\|\mathcal{P}_{\cT^\perp} \D\|_*+\langle\cP_{\cT^\perp}\cP_{\cT^\perp}\bLambda,\D\rangle \\
&=\|\X^*\|_{r*}+\sigma_r(\X^*)\|\mathcal{P}_{\cT^\perp} \D\|_*+\langle\cP_{\cT^\perp}\bLambda,\cP_{\cT^\perp}\D\rangle \\
&\ge \|\X^*\|_{r*}+\sigma_r(\X^*)\|\mathcal{P}_{\cT^\perp} \D\|_*-\|\cP_{\cT^\perp}\bLambda\|\|\cP_{\cT^\perp}\D\|_*\quad(\text{by H$\mathrm{\ddot{o}}$lder's inequality})\\
&\ge \|\X^*\|_{r*}+\frac{1}{3}\sigma_r(\X^*)\|\cP_{\cT^\perp}\D\|_*\quad(\text{by condition (3)}).
\end{split}
\end{equation*}
So if $\cT\cap\Omega^\perp=\{\0\}$, since $\D\in\Omega^\perp$ and $\D\not=\0$, we have $\D\not\in\cT$. Therefore,
\begin{equation*}
\|\X^*+\D\|_{r*}>\|\X^*\|_{r*}
\end{equation*}
which then leads to $\cD_{\cS_*}(\X^*)\cap \Omega^\perp=\{\0\}$.

The rest of proof is to show that $\cT\cap\Omega^\perp=\{\0\}$. We have the following lemma.

\begin{lemma}
\label{lemma: operator norm}
Assume that $\Omega\sim \textup{\textsf{Ber}}(p)$ and the incoherence condition \eqref{equ: incoherence} holds. Then with probability at least $1-n_{(1)}^{-10}$, we have $\|\cP_{\Omega^\perp}\cP_{\cT}\|\le \sqrt{1-p+\epsilon p}$, provided that $p\ge C_0\epsilon^{-2}(\mu r\log n_{(1)})/n_{(2)}$, where $C_0$ is an absolute constant.
\end{lemma}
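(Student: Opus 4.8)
The plan is to reduce the claimed operator-norm bound to the concentration estimate already recorded in Theorem~\ref{theorem: concentration of operator}, so that essentially no new probabilistic work is needed. First I would use the elementary identity $\|\cA\|^2=\|\cA^*\cA\|$ with $\cA=\cP_{\Omega^\perp}\cP_\cT$, together with the facts that orthogonal projectors are self-adjoint and idempotent, to write
\[
\|\cP_{\Omega^\perp}\cP_\cT\|^2=\|\cP_\cT\cP_{\Omega^\perp}\cP_{\Omega^\perp}\cP_\cT\|=\|\cP_\cT\cP_{\Omega^\perp}\cP_\cT\|.
\]
Thus it suffices to prove $\|\cP_\cT\cP_{\Omega^\perp}\cP_\cT\|\le 1-p+\epsilon p$, after which taking square roots yields the lemma.

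For this, I would substitute $\cP_{\Omega^\perp}=\cI-\cP_\Omega$ and center the resulting operator at its expectation $(1-p)\cP_\cT$:
\[
\cP_\cT\cP_{\Omega^\perp}\cP_\cT=\cP_\cT-\cP_\cT\cP_\Omega\cP_\cT=(1-p)\cP_\cT+\bigl(p\,\cP_\cT-\cP_\cT\cP_\Omega\cP_\cT\bigr).
\]
By the triangle inequality and $\|\cP_\cT\|\le 1$, the norm of the right-hand side is at most $(1-p)+p\,\|\cP_\cT-p^{-1}\cP_\cT\cP_\Omega\cP_\cT\|$. One may assume $\epsilon<1$ (otherwise $\sqrt{1-p+\epsilon p}\ge 1\ge\|\cP_{\Omega^\perp}\cP_\cT\|$ and the claim is trivial); then the hypothesis $p\ge C_0\epsilon^{-2}(\mu r\log n_{(1)})/n_{(2)}$ translates, via $m=\Theta(pn_1n_2)$, into $C_R\sqrt{\beta\mu n_{(1)}r\log n_{(1)}/m}\le\epsilon<1$ once $C_0$ is chosen to be a large enough absolute multiple of $\beta C_R^2$, so the side condition of Theorem~\ref{theorem: concentration of operator} holds and that theorem gives $\|\cP_\cT-p^{-1}\cP_\cT\cP_\Omega\cP_\cT\|\le\epsilon$ with probability at least $1-3n^{-\beta}$. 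Taking $\beta$ a sufficiently large absolute constant makes this failure probability at most $n_{(1)}^{-10}$, and on that event $\|\cP_\cT\cP_{\Omega^\perp}\cP_\cT\|\le 1-p+\epsilon p$, as needed.

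I do not expect a genuine obstacle here: the whole argument is a two-line reduction plus an invocation of Theorem~\ref{theorem: concentration of operator}. The only point demanding care is the bookkeeping of constants and probabilities — translating between the cardinality $m$ of the sample and the Bernoulli success probability $p$, keeping $C_0$ absolute, and choosing the exponent $\beta$ large enough to absorb the $n_{(1)}^{-10}$ failure probability while still satisfying the $C_R\sqrt{\,\cdot\,}<1$ precondition of the concentration theorem. All of the real probabilistic content is already encapsulated in Theorem~\ref{theorem: concentration of operator}.
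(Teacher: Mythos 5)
Your proposal is correct and follows essentially the same route as the paper: the paper likewise reduces via $\|\cP_{\Omega^\perp}\cP_\cT\|^2\le\|\cP_\cT\cP_{\Omega^\perp}\cP_\cT\|$, uses the identity $\cP_\cT-p^{-1}\cP_\cT\cP_\Omega\cP_\cT=p^{-1}\bigl(\cP_\cT\cP_{\Omega^\perp}\cP_\cT-(1-p)\cP_\cT\bigr)$ (your centering decomposition rearranged), and invokes Theorem~\ref{theorem: concentration of operator} with the triangle inequality to get $\|\cP_\cT\cP_{\Omega^\perp}\cP_\cT\|\le 1-p+\epsilon p$. Your extra bookkeeping on the $m$ versus $p$ translation, the choice of $\beta$, and the trivial case $\epsilon\ge 1$ only makes explicit what the paper leaves implicit.
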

\begin{proof}
If $\Omega\sim\mathsf{Ber}(p)$, we have, by Theorem \ref{theorem: concentration of operator}, that with high probability
\begin{equation*}
\|\cP_\cT-p^{-1}\cP_\cT\cP_\Omega\cP_\cT\|\le\epsilon,
\end{equation*}
provided that $p\ge C_0\epsilon^{-2}\frac{\mu r\log n_{(1)}}{n_{(2)}}$. Note, however, that since $\cI=\cP_\Omega+\cP_{\Omega^\perp}$,
\begin{equation*}
\cP_{\cT}-p^{-1}\cP_\cT\cP_\Omega\cP_\cT=p^{-1}(\cP_\cT\cP_{\Omega^\perp}\cP_\cT-(1-p)\cP_\cT)
\end{equation*}
and, therefore, by the triangle inequality
\begin{equation*}
\|\cP_\cT\cP_{\Omega^\perp}\cP_\cT\|\le \epsilon p+(1-p).
\end{equation*}
Since $\|\cP_{\Omega^\perp}\cP_\cT\|^2 \le \|\cP_\cT\cP_{\Omega^\perp}\cP_\cT\|$, the proof is completed.
\end{proof}

We note that $\|\cP_{\Omega^\perp}\cP_{\cT}\|<1$ implies $\Omega^\perp\cap\cT=\{\0\}$. The proof is completed.
\end{proof}

\section{Proof of Theorem~\ref{theorem: matrix completion}}
\label{section: Dual Certification by Least Squares}

We have shown in Theorem \ref{theorem: uniqueness of matrix completion} that the problem
$
(\widetilde\A,\widetilde\B)=\argmin_{\A,\B} \frac{1}{2}\|\A\B\|_F^2,\textup{\mbox{s.t.}}\cP_\Omega(\A\B)=\cP_\Omega(\X^*),
$
exactly recovers $\X^*$, i.e., $\widetilde\A\widetilde\B=\X^*$, with nearly optimal sample complexity. So if strong duality holds, this non-convex optimization problem can be equivalently converted to the convex program \eqref{equ: equality constraint matrix completion}. Then Theorem \ref{theorem: matrix completion} is straightforward from strong duality.

It now suffices to apply our unified framework in Section \ref{section: Framework} to prove the strong duality.
Let
$$
H(\X)=\I_{\{\M\in\R^{n_1\times n_2}:\ \cP_\Omega\M=\cP_\Omega\X^*\}}(\X)
$$
in Problem \textup{(\textbf{P})}, and let $(\widetilde\A,\widetilde\B)$ be a global solution to the problem.
Then by Theorem \ref{theorem: uniqueness of matrix completion}, $\widetilde\A\widetilde\B=\X^*$.
For Problem \textup{(\textbf{P})} with this special $H(\X)$, we have
\begin{equation*}
\begin{split}
\Psi &= \partial H(\widetilde\A\widetilde\B)
=\{\G\in\R^{n_1\times n_2}: \langle\G,\widetilde\A\widetilde\B\rangle\ge\langle \G,\Y\rangle,\ \mbox{for any } \Y\in\R^{n_1\times n_2}\mbox{ s.t. } \cP_\Omega\Y=\cP_\Omega\X^*\}\\
&=\{\G\in\R^{n_1\times n_2}: \langle\G,\X^*\rangle\ge\langle \G,\Y\rangle,\ \mbox{for any } \Y\in\R^{n_1\times n_2}\mbox{ s.t. } \cP_\Omega\Y=\cP_\Omega\X^*\}
=\Omega,
\end{split}
\end{equation*}
where the third equality holds since $\widetilde\A\widetilde\B=\X^*$.
Combining with Lemma~\ref{lemma: existence of dual certificate} shows that the dual condition in Theorem \ref{theorem: strong duality with condition (c)} holds with high probability, which leads to strong duality and thus proving Theorem \ref{theorem: matrix completion}.

\section{Proof of Theorem \ref{theorem: robust PCA}}

\noindent{\textbf{Theorem \ref{theorem: robust PCA}} (Robust PCA. Restated)\textbf{.}}
\emph{
Suppose $\X^*$ is an $n_1\times n_2$ matrix of rank $r$, and obeys incoherence \eqref{equ: incoherence} and \eqref{equ: strong incoherence for RPCA}.
Assume that the support set $\Omega$ of $\S^*$ is uniformly distributed among all sets of cardinality $m$. Then with probability at least $1-cn_{(1)}^{-10}$, the output of the optimization problem
\begin{equation}
\label{equ: RPCA}
(\widetilde\X,\widetilde\S)=\argmin_{\X,\S} \|\X\|_{r*}+\lambda \|\S\|_1,\quad\textup{\mbox{s.t.}}\quad \D=\X+\S,
\end{equation}
with $\lambda=\frac{\sigma_r(\X^*)}{\sqrt{n_{(1)}}}$ is exact, namely, $\widetilde\X=\X^*$ and $\widetilde\S=\S^*$, provided that
$\rank(\X^*)\le \rho_r\frac{n_{(2)}}{\mu \log^2 n_{(1)}}\ \mbox{and}\ m\le\rho_sn_1n_2$,
where $c$, $\rho_r$, and $\rho_s$ are all positive absolute constants, and function $\|\cdot\|_{r*}$ is given by \eqref{equ: r* norm}.
}

\subsection{Dual Certificates}
\begin{lemma}
\label{lemma: dual conditions for RPCA}
Assume that $\|\cP_\Omega\cP_\cT\|\le 1/2$ and $\lambda<\sigma_r(\X^*)$. Then $(\X^*,\S^*)$ is the unique solution to problem \eqref{theorem: robust PCA} if there exists $(\W,\F,\mathbf{K})$ for which
\begin{equation*}
\X^*+\W=\lambda(\sign(\S^*)+\F+\cP_\Omega\mathbf{K}),
\end{equation*}
where $\W\in\cT^\perp$, $\|\W\|\le \frac{\sigma_r(\X^*)}{2}$, $\F\in\Omega^\perp$, $\|\F\|_\infty\le\frac{1}{2}$, and $\|\cP_\Omega\mathbf{K}\|_F\le\frac{1}{4}$.
\end{lemma}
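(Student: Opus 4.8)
The plan is to run the classical convex dual-certificate (KKT-type) argument for exact recovery, with the nuclear norm replaced by the induced norm $\|\cdot\|_{r*}$ whose subdifferential is supplied by Lemma~\ref{lemma: subgradient of r*}. Every competitor that is feasible for \eqref{equ: RPCA} has the form $(\X^*+\H,\S^*-\H)$ for some $\H\in\R^{n_1\times n_2}$, since the constraint $\D=\X+\S$ forces the perturbations of $\X$ and $\S$ to be negatives of one another. Hence it suffices to show that for every $\H\neq\0$,
\[
\|\X^*+\H\|_{r*}+\lambda\|\S^*-\H\|_1 \;>\; \|\X^*\|_{r*}+\lambda\|\S^*\|_1 .
\]

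First I would lower bound the left-hand side with well-chosen subgradients. By Lemma~\ref{lemma: subgradient of r*} together with $\cP_{\cT^\perp}\M=(\I-\U\U^T)\M(\I-\V\V^T)$, the matrix $\W_0:=\sigma_r(\X^*)\,\U_\perp\V_\perp^T$, where $\U_\perp\mathbf{\Sigma}_\perp\V_\perp^T$ is the SVD of $\cP_{\cT^\perp}\H$, lies in $\cT^\perp$ with $\|\W_0\|\le\sigma_r(\X^*)$, so $\X^*+\W_0\in\partial\|\X^*\|_{r*}$ and $\langle\W_0,\H\rangle=\sigma_r(\X^*)\|\cP_{\cT^\perp}\H\|_*$; similarly $\sign(\S^*)+\F_0\in\partial\|\S^*\|_1$ with $\F_0:=-\sign(\cP_{\Omega^\perp}\H)\in\Omega^\perp$, $\|\F_0\|_\infty\le1$, and $\langle\F_0,-\H\rangle=\|\cP_{\Omega^\perp}\H\|_1$. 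Convexity gives
\[
\|\X^*+\H\|_{r*}+\lambda\|\S^*-\H\|_1 \ge \|\X^*\|_{r*}+\lambda\|\S^*\|_1 + \langle\X^*-\lambda\sign(\S^*),\H\rangle + \sigma_r(\X^*)\|\cP_{\cT^\perp}\H\|_* + \lambda\|\cP_{\Omega^\perp}\H\|_1 .
\]
Substituting the certificate identity $\X^*-\lambda\sign(\S^*)=-\W+\lambda\F+\lambda\cP_\Omega\mathbf{K}$ and bounding each inner product by H\"older / Cauchy--Schwarz (using $\W\in\cT^\perp$, $\F\in\Omega^\perp$, $\cP_\Omega\mathbf{K}$ supported on $\Omega$) --- $|\langle\W,\H\rangle|\le\tfrac12\sigma_r(\X^*)\|\cP_{\cT^\perp}\H\|_*$, $|\langle\F,\H\rangle|\le\tfrac12\|\cP_{\Omega^\perp}\H\|_1$, $|\langle\cP_\Omega\mathbf{K},\H\rangle|\le\tfrac14\|\cP_\Omega\H\|_F$ --- I obtain
\[
\|\X^*+\H\|_{r*}+\lambda\|\S^*-\H\|_1 \ge \|\X^*\|_{r*}+\lambda\|\S^*\|_1 + \tfrac12\sigma_r(\X^*)\|\cP_{\cT^\perp}\H\|_* + \tfrac{\lambda}{2}\|\cP_{\Omega^\perp}\H\|_1 - \tfrac{\lambda}{4}\|\cP_\Omega\H\|_F .
\]

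The remaining step, which I expect to be the only place needing care, is to absorb the stray term $-\tfrac{\lambda}{4}\|\cP_\Omega\H\|_F$. Looping the triangle inequality through the two projections and invoking $\|\cP_\Omega\cP_\cT\|\le\tfrac12$ --- $\|\cP_\Omega\H\|_F\le\|\cP_{\cT^\perp}\H\|_F+\tfrac12\|\cP_\cT\H\|_F$ and $\|\cP_\cT\H\|_F\le\|\cP_\Omega\H\|_F+\|\cP_{\Omega^\perp}\H\|_F$ --- yields $\|\cP_\Omega\H\|_F\le 2\|\cP_{\cT^\perp}\H\|_F+\|\cP_{\Omega^\perp}\H\|_F\le 2\|\cP_{\cT^\perp}\H\|_*+\|\cP_{\Omega^\perp}\H\|_1$. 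Feeding this back in, the lower bound becomes
\[
\|\X^*\|_{r*}+\lambda\|\S^*\|_1 + \tfrac12\bigl(\sigma_r(\X^*)-\lambda\bigr)\|\cP_{\cT^\perp}\H\|_* + \tfrac{\lambda}{4}\|\cP_{\Omega^\perp}\H\|_1 ,
\]
and since $\lambda<\sigma_r(\X^*)$ both remaining coefficients are strictly positive; thus the objective strictly increases unless $\cP_{\cT^\perp}\H=\0$ and $\cP_{\Omega^\perp}\H=\0$, i.e. $\H\in\cT\cap\Omega$. But $\|\cP_\Omega\cP_\cT\|\le\tfrac12<1$ forces $\cT\cap\Omega=\{\0\}$ (any $\H$ in the intersection satisfies $\|\H\|_F=\|\cP_\Omega\cP_\cT\H\|_F\le\tfrac12\|\H\|_F$), so $\H=\0$ and $(\X^*,\S^*)$ is the unique minimizer. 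Apart from this bookkeeping (and verifying the constants $\tfrac12,\tfrac12,\tfrac14$ in the hypotheses leave strictly positive slack after all cancellations), the argument is a routine transcription of the Cand\`es--Li--Ma--Wright certificate proof; the only nonstandard ingredient is the subdifferential formula for $\|\cdot\|_{r*}$ from Lemma~\ref{lemma: subgradient of r*}, which fortunately has the same shape as that of the nuclear norm up to the additive $\X^*$ and the radius $\sigma_r(\X^*)$.
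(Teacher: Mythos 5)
Your proposal is correct and follows essentially the same argument as the paper: the same subgradient inequality with subgradients chosen to extract $\sigma_r(\X^*)\|\cP_{\cT^\perp}\H\|_*$ and $\lambda\|\cP_{\Omega^\perp}\H\|_1$, the same substitution of the certificate identity with H\"older bounds, the same absorption of $\tfrac{\lambda}{4}\|\cP_\Omega\H\|_F$ via $\|\cP_\Omega\cP_\cT\|\le\tfrac12$, and the same conclusion through $\cT\cap\Omega=\{\0\}$. The only cosmetic difference is that you prove strict increase directly for every $\H\neq\0$, whereas the paper phrases it as taking an arbitrary optimal $(\X^*+\H,\S^*-\H)$ and forcing $\H=\0$.
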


\begin{proof}
Let $(\X^*+\H,\S^*-\H)$ be any optimal solution to problem \eqref{equ: RPCA}. Denote by $\X^*+\W^*$ an arbitrary subgradient of the $r*$ function at $\X^*$ (see Lemma \ref{lemma: subgradient of r*}), and $\sign(\S^*)+\F^*$ an arbitrary subgradient of the $\ell_1$ norm at $\S^*$. By the definition of the subgradient, the inequality follows
\begin{equation*}
\begin{split}
\|\X^*+\H\|_{r*}+\lambda&\|\S^*-\H\|_1\ge \|\X^*\|_{r*}+\lambda\|\S^*\|_1+\langle\X^*+\W^*,\H\rangle-\lambda\langle\sign(\S^*)+\F^*,\H\rangle\\
&=\|\X^*\|_{r*}+\lambda\|\S^*\|_1+\langle\X^*-\lambda\sign(\S^*),\H\rangle+\langle\W^*,\H\rangle-\lambda\langle\F^*,\H\rangle\\
&=\|\X^*\|_{r*}+\lambda\|\S^*\|_1+\langle\X^*-\lambda\sign(\S^*),\H\rangle+\sigma_r(\X^*)\|\cP_{\cT^\perp}\H\|_*+\lambda\|\cP_{\Omega^\perp}\H\|_1 \\
&=\|\X^*\|_{r*}+\lambda\|\S^*\|_1+\langle\lambda\F+\lambda\cP_\Omega\mathbf{K}-\W,\H\rangle+\sigma_r(\X^*)\|\cP_{\cT^\perp}\H\|_*+\lambda\|\cP_{\Omega^\perp}\H\|_1\\
&\ge\|\X^*\|_{r*}+\lambda\|\S^*\|_1+\frac{\sigma_r(\X^*)}{2}\|\cP_{\cT^\perp}\H\|_*+\frac{\lambda}{2}\|\cP_{\Omega^\perp}\H\|_1-\frac{\lambda}{4}\|\cP_\Omega \H\|_F,
\end{split}
\end{equation*}
where the third line holds by picking $\W^*$ such that $\langle\W^*,\H\rangle=\sigma_r(\X^*)\|\cP_{\cT^\perp}\H\|_*$ and $\langle\F^*,\H\rangle=-\|\cP_{\Omega^\perp}\H\|_1$.\footnote{For instance, $\F^*=-\sign(\cP_{\Omega^\perp}\H)$ is such as matrix. Also, by the duality between the nuclear norm and the operator norm, there is a matrix obeying $\|\W\|=\sigma_r(\X^*)$ such that $\langle\W,\cP_{\cT^\perp}\H\rangle=\sigma_r(\X^*)\|\cP_{\cT^\perp}\H\|_*$. We pick $\W^*=\cP_{\cT^\perp}\W$ here.}
We note that
\begin{equation*}
\begin{split}
\|\cP_\Omega\H\|_F&\le \|\cP_\Omega\cP_\cT\H\|_F+\|\cP_\Omega\cP_{\cT^\perp}\H\|_F\\
&\le \frac{1}{2}\|\H\|_F+\|\cP_{\cT^\perp}\H\|_F\\
&\le \frac{1}{2}\|\cP_\Omega \H\|_F+\frac{1}{2}\|\cP_{\Omega^\perp} \H\|_F+\|\cP_{\cT^\perp}\H\|_F,
\end{split}
\end{equation*}
which implies that $\frac{\lambda}{4}\|\cP_\Omega\H\|_F\le \frac{\lambda}{4}\|\cP_{\Omega^\perp}\H\|_F+\frac{\lambda}{2}\|\cP_{\cT^\perp}\H\|_F\le \frac{\lambda}{4}\|\cP_{\Omega^\perp}\H\|_1+\frac{\lambda}{2}\|\cP_{\cT^\perp}\H\|_*$. Therefore,
\begin{equation*}
\begin{split}
\|\X^*+\H\|_{r*}+\lambda&\|\S^*-\H\|_1\ge \|\X^*\|_{r*}+\lambda\|\S^*\|_1+\frac{\sigma_r(\X^*)-\lambda}{2}\|\cP_{\cT^\perp}\H\|_*+\frac{\lambda}{4}\|\cP_{\Omega^\perp}\H\|_1\\
&\ge \|\X^*+\H\|_{r*}+\lambda\|\S^*-\H\|_1+\frac{\sigma_r(\X^*)-\lambda}{2}\|\cP_{\cT^\perp}\H\|_*+\frac{\lambda}{4}\|\cP_{\Omega^\perp}\H\|_1,
\end{split}
\end{equation*}
where the second inequality holds because $(\X^*+\H,\S^*-\H)$ is optimal. Thus $\H\in\cT\cap\Omega$. Note that $\|\cP_\Omega\cP_\cT\|<1$ implies $\cT\cap\Omega=\{0\}$ and thus $\H=0$. This completes the proof.
\end{proof}

According to Lemma \ref{lemma: dual conditions for RPCA}, to show the exact recoverability of problem \eqref{equ: RPCA}, it is sufficient to find an appropriate $\W$ for which
\begin{equation}
\label{equ: dual conditions for W}
\begin{cases}
\W\in \cT^\perp,\\
\|\W\|\le \frac{\sigma_r(\X^*)}{2},\\
\|\cP_\Omega(\X^*+\W-\lambda\sign(S^*))\|_F\le \frac{\lambda}{4},\\
\|\cP_{\Omega^\perp}(\X^*+\W)\|_\infty\le \frac{\lambda}{2},
\end{cases}
\end{equation}
under the assumptions that $\|\cP_\Omega\cP_\cT\|\le 1/2$ and $\lambda<\sigma_r(\X^*)$. We note that $\lambda=\frac{\sigma_r(\X^*)}{\sqrt{n_{(1)}}}<\sigma_r(\X^*)$. To see $\|\cP_\Omega\cP_\cT\|\le 1/2$, we have the following lemma.
\begin{lemma}[\cite{Candes}, Cor 2.7]
\label{lemma: small operator norm of P_Omega*P_T}
Suppose that $\Omega\sim\ber(p)$ and incoherence \eqref{equ: incoherence} holds. Then with probability at least $1-n_{(1)}^{-10}$, $\|\cP_{\Omega}\cP_\cT\|^2\le p+\epsilon$, provided that $1-p\ge C_0\epsilon^{-2}\mu r\log n_{(1)}/n_{(2)}$ for an absolute constant $C_0$.
\end{lemma}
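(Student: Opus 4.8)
The plan is to obtain the bound as an immediate corollary of the Bernoulli concentration estimate of Theorem~\ref{theorem: concentration of operator}, mimicking the proof of Lemma~\ref{lemma: operator norm} but with the roles of $\Omega$ and $\Omega^\perp$ interchanged.

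First I would reduce to an estimate for the self-adjoint operator $\cP_\cT\cP_\Omega\cP_\cT$: since $\cP_\Omega$ and $\cP_\cT$ are orthogonal projectors, $(\cP_\Omega\cP_\cT)^*(\cP_\Omega\cP_\cT)=\cP_\cT\cP_\Omega^2\cP_\cT=\cP_\cT\cP_\Omega\cP_\cT$, so $\|\cP_\Omega\cP_\cT\|^2=\|\cP_\cT\cP_\Omega\cP_\cT\|$ and it suffices to prove $\|\cP_\cT\cP_\Omega\cP_\cT\|\le p+\epsilon$. Next, observe that $\Omega^\perp\sim\ber(1-p)$, and the hypothesis $1-p\ge C_0\epsilon^{-2}\mu r\log n_{(1)}/n_{(2)}$ is exactly the lower bound on the Bernoulli parameter needed to apply Theorem~\ref{theorem: concentration of operator} to $\Omega^\perp$ (with $1-p$ in place of $p$, with $\beta$ a sufficiently large absolute constant so that the failure probability is at most $n_{(1)}^{-10}$, and with $C_0$ equal to the square of the constant $C_R$ there times $\beta$). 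This gives $\|(1-p)^{-1}\cP_\cT\cP_{\Omega^\perp}\cP_\cT-\cP_\cT\|\le\epsilon$ with probability at least $1-n_{(1)}^{-10}$.

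Finally I would substitute $\cP_\Omega=\cI-\cP_{\Omega^\perp}$ to write
\[
\cP_\cT\cP_\Omega\cP_\cT-p\,\cP_\cT=(1-p)\cP_\cT-\cP_\cT\cP_{\Omega^\perp}\cP_\cT=-(1-p)\bigl((1-p)^{-1}\cP_\cT\cP_{\Omega^\perp}\cP_\cT-\cP_\cT\bigr),
\]
so the displayed estimate yields $\|\cP_\cT\cP_\Omega\cP_\cT-p\,\cP_\cT\|\le(1-p)\epsilon\le\epsilon$, and hence $\|\cP_\cT\cP_\Omega\cP_\cT\|\le p\|\cP_\cT\|+\epsilon=p+\epsilon$ by the triangle inequality; combined with the first step this completes the proof.

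I do not expect a real obstacle here, since the entire probabilistic burden is carried by Theorem~\ref{theorem: concentration of operator}. The only points needing a little care are: (i) that Theorem~\ref{theorem: concentration of operator} may legitimately be invoked for the complement set $\Omega^\perp$, whose Bernoulli parameter $1-p$ need not be small — this is fine precisely because the hypothesis supplies the required lower bound on $1-p$; and (ii) bookkeeping of constants, i.e., expressing $C_0$ in terms of $C_R$ and of the exponent $\beta$ chosen to push the failure probability below $n_{(1)}^{-10}$. (If $\Omega$ were given in the uniform rather than the Bernoulli model one would first invoke the standard equivalence of the two models, cf.\ Section~\ref{section: Equivalence of Bernoulli and Uniform Models}, but the lemma as stated already assumes $\Omega\sim\ber(p)$.)
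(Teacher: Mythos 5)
Your argument is correct: applying Theorem~\ref{theorem: concentration of operator} to $\Omega^\perp\sim\ber(1-p)$ (legitimate under the stated lower bound on $1-p$), substituting $\cP_\Omega=\cI-\cP_{\Omega^\perp}$, and using $\|\cP_\Omega\cP_\cT\|^2=\|\cP_\cT\cP_\Omega\cP_\cT\|$ gives exactly $\|\cP_\Omega\cP_\cT\|^2\le p+(1-p)\epsilon\le p+\epsilon$. The paper itself offers no proof here (it imports the statement as Cor.~2.7 of \cite{Candes}), but your derivation is precisely the standard one behind that corollary and is the mirror image of the paper's own proof of Lemma~\ref{lemma: operator norm}, so it matches the intended approach.
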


Setting $p$ and $\epsilon$ as small constants in Lemma \ref{lemma: small operator norm of P_Omega*P_T}, we have $\|\cP_\Omega\cP_\cT\|\le 1/2$ with high probability.

\subsection{Dual Certification by Least Squares and the Golfing Scheme}
The remainder of the proof is to construct $\W$ such that the dual condition \eqref{equ: dual conditions for W} holds true. Before introducing our construction, we assume $\Omega\sim\ber(p)$, or equivalently $\Omega^\perp\sim\ber(1-p)$, where $p$ is allowed be as large as an absolute constant. Note that $\Omega^\perp$ has the same distribution as that of $\Omega_1\cup\Omega_2\cup...\cup\Omega_{j_0}$, where the $\Omega_j$'s are drawn independently with replacement from $\ber(q)$, $j_0=\lceil\log n_{(1)}\rceil$, and $q$ obeys $p=(1-q)^{j_0}$ ($q=\Omega(1/\log n_{(1)})$ implies $p=\cO(1)$). We construct $\W$ based on such a distribution.

Our construction separates $\W$ into two terms: $\W=\W^L+\W^S$. To construct $\W^L$, we apply the golfing scheme introduced by \cite{Gross2011recovering,recht2011simpler}. Specifically, $\W^L$ is constructed by an inductive procedure:
\begin{equation}
\label{equ: golfing scheme for RPCA}
\begin{split}
\Y_j=\Y_{j-1}&+q^{-1}\cP_{\Omega_j}\cP_\cT(\X^*-\Y_{j-1}),\ \Y_0=\0,\\
&\W^L=\cP_{\cT^\perp}\Y_{j_0}.
\end{split}
\end{equation}
To construct $\W^S$, we apply the method of least squares by \cite{Candes}, which is
\begin{equation}
\label{equ: least squares for RPCA}
\W^S=\lambda\cP_{\cT^\perp}\sum_{k\ge 0}(\cP_\Omega\cP_\cT\cP_\Omega)^k\sign(\S^*).
\end{equation}
Note that $\|\cP_\Omega\cP_\cT\|\le 1/2$. Thus $\|\cP_\Omega\cP_\cT\cP_\Omega\|\le 1/4$ and the Neumann series in \eqref{equ: least squares for RPCA} is well-defined. Observe that $\cP_\Omega \W^S=\lambda(\cP_\Omega-\cP_\Omega\cP_\cT\cP_\Omega)(\cP_\Omega-\cP_\Omega\cP_\cT\cP_\Omega)^{-1}\sign(\S^*)=\lambda\sign(\S^*)$. So to prove the dual condition \eqref{equ: dual conditions for W}, it suffices to show that
\begin{equation}
\label{equ: dual conditions for W^L}
\begin{split}
&\mbox{(a)}\quad \|\W^L\|\le \frac{\sigma_r(\X^*)}{4},\\
&\mbox{(b)}\quad \|\cP_\Omega(\X^*+\W^L)\|_F\le \frac{\lambda}{4},\\
&\mbox{(c)}\quad \|\cP_{\Omega^\perp}(\X^*+\W^L)\|_\infty\le \frac{\lambda}{4},
\end{split}
\end{equation}
\vspace{-0.4cm}
\begin{equation}
\label{equ: dual conditions for W^S}
\begin{split}
&\mbox{(d)}\quad \|\W^S\|\le \frac{\sigma_r(\X^*)}{4},\\
&\mbox{(e)}\quad \|\cP_{\Omega^\perp}\W^S\|_\infty\le \frac{\lambda}{4}.
\end{split}
\end{equation}

\subsection{Proof of Dual Conditions}
Since we have constructed the dual certificate $\W$, the remainder is to show that $\W$ obeys dual conditions \eqref{equ: dual conditions for W^L} and \eqref{equ: dual conditions for W^S} with high probability. We have the following.

\begin{lemma}
Assume $\Omega_j\sim\ber(q)$, $j=1,2,...,j_0$, and $j_0=2\lceil\log n_{(1)}\rceil$. Then under the other assumptions of Theorem \ref{theorem: robust PCA}, $\W^L$ given by \eqref{equ: golfing scheme for RPCA} obeys dual condition \eqref{equ: dual conditions for W^L}.
\end{lemma}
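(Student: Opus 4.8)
The plan is to run the golfing-scheme bookkeeping of Section~\ref{section: existence of dual certificate}, but tracking three norms of the residual at once. Set $\Z_j\triangleq\X^*-\cP_\cT\Y_j$. Since $\X^*\in\cT$ one has $\cP_\cT(\X^*-\Y_{j-1})=\Z_{j-1}\in\cT$, and plugging this into \eqref{equ: golfing scheme for RPCA} gives $\Z_0=\X^*$ and the recursion
\[
\Z_j=\bigl(\cP_\cT-q^{-1}\cP_\cT\cP_{\Omega_j}\cP_\cT\bigr)\Z_{j-1},\qquad j=1,\dots,j_0 .
\]
Two identities will do most of the work: because $\Y_{j_0}=\sum_{j=1}^{j_0}q^{-1}\cP_{\Omega_j}\Z_{j-1}$ is supported on $\Omega^\perp=\bigcup_j\Omega_j$, we have $\cP_\Omega\Y_{j_0}=\0$; and because $\cP_\cT\Y_{j_0}=\X^*-\Z_{j_0}$ while $\W^L=\cP_{\cT^\perp}\Y_{j_0}=\Y_{j_0}-\cP_\cT\Y_{j_0}$, we get the clean formula $\X^*+\W^L=\Y_{j_0}+\Z_{j_0}$.

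First I would derive the decay estimates. Applying Theorem~\ref{theorem: concentration of operator}, Lemma~\ref{lemma: infty norm and infty norm}, and Lemma~\ref{lemma: infty 2 norm and infty, infty 2 norm} to each $\Omega_j\sim\ber(q)$ — legitimate since $q=\Omega(1/\log n_{(1)})$ exceeds $c\,\mu r\log n_{(1)}/n_{(2)}$ exactly because of the hypothesis $r\le\rho_r n_{(2)}/(\mu\log^2 n_{(1)})$ — and taking a union bound over the $j_0=2\lceil\log n_{(1)}\rceil$ rounds (failure probability $\le n_{(1)}^{-10}$), the recursion yields, just as in the matrix-completion computation,
\[
\|\Z_j\|_F\le 2^{-j}\|\X^*\|_F,\quad \|\Z_j\|_\infty\le 2^{-j}\|\X^*\|_\infty,\quad \|\Z_j\|_{\infty,2}\le j\,2^{-j}\sqrt{\tfrac{n_{(1)}}{\mu r}}\|\X^*\|_\infty+2^{-j}\|\X^*\|_{\infty,2}.
\]
I would then record the incoherence bounds coming from \eqref{equ: incoherence} and \eqref{equ: strong incoherence for RPCA}: $\|\X^*\|_\infty\le\sqrt{\mu r/(n_1n_2)}\,\sigma_r(\X^*)$, hence $\|\X^*\|_{\infty,2}\le\sqrt{n_{(1)}}\,\|\X^*\|_\infty\le\sqrt{\mu r/n_{(2)}}\,\sigma_r(\X^*)$ and $\|\X^*\|_F\le\sqrt{\mu r}\,\sigma_r(\X^*)$.

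Then I would verify the three conditions in \eqref{equ: dual conditions for W^L} one by one. For (b), $\cP_\Omega(\X^*+\W^L)=\cP_\Omega(\Y_{j_0}+\Z_{j_0})=\cP_\Omega\Z_{j_0}$, so $\|\cP_\Omega(\X^*+\W^L)\|_F\le\|\Z_{j_0}\|_F\le 2^{-j_0}\sqrt{\mu r}\,\sigma_r(\X^*)\le\lambda/4$ for $j_0=2\lceil\log n_{(1)}\rceil$ and $n_{(1)}$ above an absolute constant (using $\mu r\le n_{(1)}$ and $\lambda=\sigma_r(\X^*)/\sqrt{n_{(1)}}$). For (c), $\cP_{\Omega^\perp}(\X^*+\W^L)=\Y_{j_0}+\cP_{\Omega^\perp}\Z_{j_0}$, and the crude entrywise estimate $\|\Y_{j_0}\|_\infty\le q^{-1}\sum_j\|\Z_{j-1}\|_\infty\le 2q^{-1}\|\X^*\|_\infty$ together with $\|\Z_{j_0}\|_\infty\le\|\X^*\|_\infty$ gives $\|\cP_{\Omega^\perp}(\X^*+\W^L)\|_\infty\lesssim q^{-1}\|\X^*\|_\infty\lesssim(\log n_{(1)})\sqrt{\mu r/(n_{(1)}n_{(2)})}\,\sigma_r(\X^*)$, which is $\le\lambda/4$ once $\mu r\le\rho_r n_{(2)}/\log^2 n_{(1)}$ with $\rho_r$ small enough. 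For (a), $\W^L=\sum_j\cP_{\cT^\perp}(q^{-1}\cP_{\Omega_j}-\cI)\Z_{j-1}$ (using $\cP_{\cT^\perp}\Z_{j-1}=\0$), so by Lemma~\ref{lemma: 2 norm and infty norm} and the geometric-series sums $\sum_j\|\Z_{j-1}\|_\infty\le 2\|\X^*\|_\infty$, $\sum_j\|\Z_{j-1}\|_{\infty,2}\le 2\sqrt{n_{(1)}/(\mu r)}\|\X^*\|_\infty+2\|\X^*\|_{\infty,2}$,
\[
\|\W^L\|\ \lesssim\ \tfrac{\log n_{(1)}}{q}\|\X^*\|_\infty+\sqrt{\tfrac{\log n_{(1)}}{q}}\Bigl(\sqrt{\tfrac{n_{(1)}}{\mu r}}\|\X^*\|_\infty+\|\X^*\|_{\infty,2}\Bigr).
\]
Substituting the incoherence bounds and $q\ge c_q/\log n_{(1)}$ turns the right-hand side into a sum of three terms, each a constant (that we can make small) times $\sigma_r(\X^*)$, precisely under the rank bound; so $\|\W^L\|\le\sigma_r(\X^*)/4$.

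The main obstacle is (a): it is the condition that actually dictates the rank bound $r\lesssim n_{(2)}/(\mu\log^2 n_{(1)})$. After substitution the dominant term is $\sqrt{\mu r\log n_{(1)}/(q\,n_{(2)})}\,\sigma_r(\X^*)$, and only the two powers of $\log n_{(1)}$ in the hypothesis — one of which is consumed by the $q^{-1}=O(\log n_{(1)})$ coming from the small per-round sampling probability — bring it below a constant multiple of $\sigma_r(\X^*)$. Keeping these $q^{-1}$ factors from spoiling (a) and (c), and propagating the $\|\cdot\|_{\infty,2}$ recursion correctly, is the delicate part; the rest is the same telescoping-plus-union-bound argument already carried out for matrix completion.
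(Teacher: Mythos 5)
Your proposal is correct and takes essentially the same route as the paper: the same residual recursion $\Z_j=(\cP_\cT-q^{-1}\cP_\cT\cP_{\Omega_j}\cP_\cT)\Z_{j-1}$ with the identity $\X^*+\W^L=\Y_{j_0}+\Z_{j_0}$, the same three auxiliary lemmas (Theorem \ref{theorem: concentration of operator}, Lemmas \ref{lemma: 2 norm and infty norm}--\ref{lemma: infty 2 norm and infty, infty 2 norm}) applied per round, and the same incoherence substitutions to verify (a)--(c). The only difference is cosmetic: you fix the contraction factor at $1/2$ and use $q^{-1}=O(\log n_{(1)})$ together with the rank hypothesis directly, whereas the paper carries a parameter $\epsilon$ and chooses $\epsilon=C\left(\mu r\log^2 n_{(1)}/n_{(2)}\right)^{1/4}$ at the end of its proof of condition (c).
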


\begin{proof}
Let $\Z_j=\cP_\cT(\X^*-\Y_j)\in\cT$. Then we have
\begin{equation*}
\Z_j=\cP_\cT\Z_{j-1}-q^{-1}\cP_\cT\cP_{\Omega_j}\cP_\cT\Z_{j-1}=(\cP_{\cT}-q^{-1}\cP_\cT\cP_{\Omega_j}\cP_\cT)\Z_{j-1},
\end{equation*}
and $\Y_j=\sum_{k=1}^j q^{-1}\cP_{\Omega_k}\Z_{k-1}\in\Omega^\perp$. We set $q=\Omega(\epsilon^{-2}\mu r\log n_{(1)}/n_{(2)})$ with a small constant $\epsilon$.

\medskip
\noindent{\emph{Proof of (a).}} It holds that
\begin{equation*}
\begin{split}
\|\W^L\|&=\|\cP_{\cT^\perp}\Y_{j_0}\|\le \sum_{k=1}^{j_0}\|q^{-1}\cP_{\cT^\perp}\cP_{\Omega_k}\Z_{k-1}\|\\
&=\sum_{k=1}^{j_0}\|\cP_{\cT^\perp}(q^{-1}\cP_{\Omega_k}\Z_{k-1}-\Z_{k-1})\|\\
&\le \sum_{k=1}^{j_0}\|q^{-1}\cP_{\Omega_k}\Z_{k-1}-\Z_{k-1}\|\\
&\le C_0'\left(\frac{\log n_{(1)}}{q}\sum_{k=1}^{j_0}\|\Z_{k-1}\|_\infty+\sqrt{\frac{\log n_{(1)}}{q}}\sum_{k=1}^{j_0}\|\Z_{k-1}\|_{\infty,2}\right).\quad\mbox{(by Lemma \ref{lemma: 2 norm and infty norm})}
\end{split}
\end{equation*}
We note that by Lemma \ref{lemma: infty norm and infty norm},
\begin{equation*}
\|\Z_{k-1}\|_\infty\le \left(\frac{1}{2}\right)^{k-1}\|\Z_0\|_\infty,
\end{equation*}
and by Lemma \ref{lemma: infty 2 norm and infty, infty 2 norm},
\begin{equation*}
\|\Z_{k-1}\|_{\infty,2}\le \frac{1}{2}\sqrt{\frac{n_{(1)}}{\mu r}}\|\Z_{k-2}\|_\infty+\frac{1}{2}\|\Z_{k-2}\|_{\infty,2}.
\end{equation*}
Therefore,
\begin{equation*}
\begin{split}
\|\Z_{k-1}\|_{\infty,2}&\le \left(\frac{1}{2}\right)^{k-1} \sqrt{\frac{n_{(1)}}{\mu r}}\|\Z_0\|_\infty+\frac{1}{2}\|\Z_{k-2}\|_{\infty,2}\\
&\le (k-1)\left(\frac{1}{2}\right)^{k-1}\sqrt{\frac{n_{(1)}}{\mu r}}\|\Z_0\|_\infty+\left(\frac{1}{2}\right)^{k-1}\|\Z_0\|_{\infty,2},
\end{split}
\end{equation*}
and so we have
\begin{equation*}
\begin{split}
&\quad \|\W^L\|\\&\le C_0'\left[\frac{\log n_{(1)}}{q}\sum_{k=1}^{j_0}\left(\frac{1}{2}\right)^{k-1}\hspace{-0.2cm}\|\Z_0\|_\infty\hspace{-0.1cm}+\hspace{-0.1cm}\sqrt{\frac{\log n_{(1)}}{q}}\sum_{k=1}^{j_0}\left(\hspace{-0.1cm}(k-1)\hspace{-0.1cm}\left(\frac{1}{2}\right)^{k-1}\hspace{-0.2cm}\sqrt{\frac{n_{(1)}}{\mu r}}\|\Z_0\|_\infty\hspace{-0.1cm}+\hspace{-0.1cm}\left(\frac{1}{2}\right)^{k-1}\hspace{-0.2cm}\|\Z_0\|_{\infty,2}\right)\right]\\
&\le 2C_0'\left[\frac{\log n_{(1)}}{q}\|\X^*\|_\infty+\sqrt{\frac{n_{(1)}\log n_{(1)}}{q\mu r}}\|\X^*\|_\infty+\sqrt{\frac{\log n_{(1)}}{q}}\|\X^*\|_{\infty,2}\right]\\
&\le \frac{1}{16}\left[\frac{n_{(2)}}{\mu r}\|\X^*\|_\infty+\frac{\sqrt{n_{(1)}n_{(2)}}}{\mu r}\|\X^*\|_\infty+\sqrt{\frac{n_{(2)}}{\mu r}}\|\X^*\|_{\infty,2}\right]\quad\mbox{(since $q=\Omega(\mu r\log n_{(1)})/n_{(2)}$)}\\
&\le \frac{\sigma_r(\X^*)}{4},\quad\mbox{(by incoherence \eqref{equ: strong incoherence for RPCA})}
\end{split}
\end{equation*}
where we have used the fact that
\begin{equation*}
\|\X^*\|_{\infty,2}\le \sqrt{n_{(1)}}\|\X^*\|_\infty\le \sqrt{\frac{\mu r}{n_{(2)}}}\sigma_r(\X^*).
\end{equation*}

\medskip
\noindent{\emph{Proof of (b).}}
Because $\Y_{j_0}\in\Omega^\perp$, we have $\cP_\Omega(\X^*+\cP_{\cT^\perp}\Y_{j_0})=\cP_\Omega(\X^*-\cP_{\cT}\Y_{j_0})=\cP_\Omega\Z_{j_0}$. It then follows from Theorem \ref{theorem: concentration of operator} that for a properly chosen $t$,
\begin{equation*}
\begin{split}
\|\Z_{j_0}\|_F&\le t^{j_0}\|\X^*\|_F\\&\le t^{j_0}\sqrt{n_1n_2}\|\X^*\|_{\infty}\\&\le t^{j_0}\sqrt{n_1n_2}\sqrt{\frac{\mu r}{n_1n_2}}\sigma_r(\X^*)\\&\le \frac{\lambda}{8}.\quad (t^{j_0}\le e^{-2\log n_{(1)}}\le n_{(1)}^{-2})
\end{split}
\end{equation*}

\medskip
\noindent{\emph{Proof of (c).}}
By definition, we know that $\X^*+\W^L=\Z_{j_0}+\Y_{j_0}$. Since we have shown $\|\Z_{j_0}\|_F\le\lambda/8$, it suffices to prove $\|\Y_{j_0}\|_\infty\le \lambda/8$. We have
\begin{equation*}
\begin{split}
\|\Y_{j_0}\|_\infty&\le q^{-1}\sum_{k=1}^{j_0}\|\cP_{\Omega_k}\Z_{k-1}\|_\infty\\
&\le q^{-1}\sum_{k=1}^{j_0} \epsilon^{k-1}\|\X^*\|_\infty\quad\mbox{(by Lemma \ref{lemma: infty norm and infty norm})}\\
&\le \frac{n_{(2)}\epsilon^2}{C_0\mu r\log n_{(1)}}\sqrt{\frac{\mu r}{n_{(1)}n_{(2)}}}\sigma_r(\X^*)\quad \mbox{(by incoherence \eqref{equ: strong incoherence for RPCA})}\\
&\le \frac{\lambda}{8},
\end{split}
\end{equation*}
if we choose
$\epsilon= C\left(\frac{\mu r(\log n_{(1)})^2}{n_{(2)}}\right)^{1/4}$ for an absolute constant $C$. This can be true once the constant $\rho_r$ is sufficiently small.
\end{proof}

We now prove that $\W^S$ given by \eqref{equ: least squares for RPCA} obeys dual condition \eqref{equ: dual conditions for W^S}. We have the following.
\begin{lemma}
Assume $\Omega\sim\ber(p)$.
Then under the other assumptions of Theorem \ref{theorem: robust PCA}, $\W^S$ given by \eqref{equ: least squares for RPCA} obeys dual condition \eqref{equ: dual conditions for W^S}.
\end{lemma}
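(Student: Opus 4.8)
The plan is to verify the two bounds (d) and (e) in \eqref{equ: dual conditions for W^S} by analysing the Neumann series defining $\W^S$. As in Cand\`{e}s et al.~\cite{Candes} I would first record that $p=\Theta(m/(n_1n_2))\le\rho_s$ is a small absolute constant, and that it suffices to treat the signs of $\S^*$ as i.i.d.\ Rademacher (the general case following by their derandomization argument). Choosing $p$ and the deviation parameter $\epsilon$ small, Lemma~\ref{lemma: small operator norm of P_Omega*P_T} gives $\|\cP_\Omega\cP_\cT\|\le 1/2$, hence $\|\cP_\Omega\cP_\cT\cP_\Omega\|\le 1/4$; the hypotheses needed here and in Lemma~\ref{lemma: infty norm and infty norm} below hold because the rank bound $r\le\rho_r n_{(2)}/(\mu\log^2 n_{(1)})$ makes $\mu r\log n_{(1)}/n_{(2)}$ as small as desired. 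Consequently $\Q:=\sum_{k\ge 0}(\cP_\Omega\cP_\cT\cP_\Omega)^k\sign(\S^*)$ converges, lies in $\Omega$, obeys $\Q=\sign(\S^*)+\cP_\Omega\cP_\cT\cP_\Omega\Q$ and $\cP_\Omega\Q=\Q$, and $\W^S=\lambda\cP_{\cT^\perp}\Q$.

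For (e) I would argue as follows. Since $\Q\in\Omega$, each index $(a,b)\notin\Omega$ satisfies $\langle\W^S,\e_a\e_b^T\rangle=\lambda\langle\Q,\cP_{\cT^\perp}\e_a\e_b^T\rangle=-\lambda\langle\Q,\cP_\cT\e_a\e_b^T\rangle$; moving the self-adjoint operators onto the other factor rewrites this as $-\lambda\langle\sign(\S^*),\cP_\Omega\M_{ab}\rangle$ with $\M_{ab}:=\sum_{k\ge 0}(\cP_\Omega\cP_\cT\cP_\Omega)^k\cP_\cT\e_a\e_b^T$. Solving the associated fixed-point relation on $\Omega$ gives $\cP_\Omega\M_{ab}=(\cI-\cP_\Omega\cP_\cT\cP_\Omega)^{-1}\cP_\Omega\cP_\cT\e_a\e_b^T$, hence $\|\cP_\Omega\M_{ab}\|_F\le\tfrac{4}{3}\|\cP_\cT\e_a\e_b^T\|_F$, and the incoherence identity $\|\cP_\cT\e_a\e_b^T\|_F^2=\|\U^T\e_a\|^2+\|\V^T\e_b\|^2-\|\U^T\e_a\|^2\|\V^T\e_b\|^2\le 2\mu r/n_{(2)}$ bounds the right side. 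Conditionally on $\Omega$, $\langle\sign(\S^*),\cP_\Omega\M_{ab}\rangle$ is a Rademacher sum with coefficient $\ell_2$-norm $\|\cP_\Omega\M_{ab}\|_F$, so Hoeffding's inequality together with a union bound over the $\le n_1n_2$ indices gives $\|\cP_{\Omega^\perp}\W^S\|_\infty\le\lambda\cdot O(\sqrt{\mu r\log n_{(1)}/n_{(2)}})\le\lambda/4$ with probability at least $1-n_{(1)}^{-10}$, the last inequality once more by the rank bound.

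For (d) I would split $\W^S=\lambda\cP_{\cT^\perp}\sign(\S^*)+\lambda\cP_{\cT^\perp}(\cP_\Omega\cP_\cT\cP_\Omega)\Q$. The first term has operator norm at most $\lambda\|\sign(\S^*)\|\le 4\sqrt{\rho_s}\,\sigma_r(\X^*)$ with high probability, since a $\pm1$ matrix supported on a Bernoulli$(p)$ set has operator norm $O(\sqrt{p\,n_{(1)}})$ and $\cP_{\cT^\perp}$, being the two-sided projection $\M\mapsto(\cI-\U\U^T)\M(\cI-\V\V^T)$, is a contraction in the operator norm. The hard part will be the second term $\lambda\cP_{\cT^\perp}(\cP_\Omega\cP_\cT\cP_\Omega)\Q$, which must be shown to have operator norm $\le\sigma_r(\X^*)/8$, i.e., of order $\sqrt{\rho_s\,n_{(1)}}$. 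This cannot be done by the elementary estimates used elsewhere in the paper: bounding the operator norm by the Frobenius norm and invoking $\|\cP_\Omega\cP_\cT\cP_\Omega\|\le 1/4$ loses a factor $\sqrt{n_{(2)}}$ (since $\|\sign(\S^*)\|_F=\sqrt{m}$), and the entrywise projectors $\cP_\Omega,\cP_{\Omega^\perp}$ are \emph{not} contractions in the operator norm, so they cannot simply be peeled off. The correct bound requires genuine matrix concentration (matrix Bernstein / non-commutative Khintchine) for the action of the operators $\cP_\Omega\cP_\cT\cP_\Omega$ on the random sign matrix, exactly as in~\cite{Candes}; importing that estimate makes the second term $\le\sigma_r(\X^*)/8$ for $\rho_s$ small, and adding the two pieces gives $\|\W^S\|\le\sigma_r(\X^*)/4$.

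I expect this spectral-norm control of the least-squares correction $\W^S-\lambda\cP_{\cT^\perp}\sign(\S^*)$ to be the only non-routine step; everything else carries over from the nuclear-norm robust PCA analysis essentially unchanged, because by Lemma~\ref{lemma: subgradient of r*} the subdifferential of $\|\cdot\|_{r*}$ at a rank-$r$ matrix has the same shape as that of the nuclear norm — with $\U\V^T$ replaced by $\X^*$ and the radius $1$ by $\sigma_r(\X^*)$ — so the dual-certificate conditions on $\W=\W^L+\W^S$ in \eqref{equ: dual conditions for W} are the nuclear-norm ones with their quantitative targets merely rescaled by constants.
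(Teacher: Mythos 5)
Your treatment of condition (e) is essentially the paper's own argument: you reduce each entry of $\cP_{\Omega^\perp}\W^S$ to a Rademacher sum $\langle\sign(\S^*),\cP_\Omega\M_{ab}\rangle$, bound the coefficient norm through $\|\cP_\cT\e_a\e_b^T\|_F\le\sqrt{2\mu r/n_{(2)}}$ and the Neumann-series inverse, and finish with Hoeffding plus a union bound over entries, using the rank hypothesis to absorb the $\sqrt{\log n_{(1)}}$ factor; this matches the paper (which keeps an extra factor $\sigma$ from $\|\cP_\Omega\cP_\cT\|\le\sigma$, but both versions close). The first half of (d) — the bound $\lambda\|\cP_{\cT^\perp}\sign(\S^*)\|\le 4\sqrt{p}\,\sigma_r(\X^*)$ via Vershynin's bound on random sign matrices and the contractivity of $\cP_{\cT^\perp}$ — is also exactly the paper's step.

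The gap is the second half of (d). You correctly isolate $\lambda\cP_{\cT^\perp}\cG(\sign(\S^*))$ with $\cG=\sum_{k\ge1}(\cP_\Omega\cP_\cT\cP_\Omega)^k$ as the crux, but you do not prove the required bound: you assert that it ``requires genuine matrix concentration (matrix Bernstein / non-commutative Khintchine)'' and import it from \cite{Candes}. That is both an unproved step (and this estimate is precisely the content of the lemma you are asked to prove) and a mischaracterization of the tool. The paper closes it with exactly the kind of elementary estimate you dismissed: conditioning on the event $\{\|\cP_\Omega\cP_\cT\|\le\sigma\}$, one has $\|\cG\|\le\sigma^2/(1-\sigma^2)$, a small constant; then, writing $\|\cG(\sign(\S^*))\|\le 4\sup_{\x\in N_{n_2},\y\in N_{n_1}}\langle\cG(\y\x^T),\sign(\S^*)\rangle$ over $\tfrac12$-nets of sizes $6^{n_1},6^{n_2}$, each scalar $X(\x,\y)=\langle\cG(\y\x^T),\sign(\S^*)\rangle$ is a Rademacher sum with coefficient norm $\|\cG(\y\x^T)\|_F\le\|\cG\|$, so Hoeffding gives $\Pr(|X|>t)\le 2\exp(-t^2/(2\|\cG\|^2))$. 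Taking $t=\sigma_r(\X^*)/8$ and recalling $\lambda=\sigma_r(\X^*)/\sqrt{n_{(1)}}$, the relevant deviation is $t/\lambda=\sqrt{n_{(1)}}/8$, so the failure probability is at most $2\cdot 6^{n_1+n_2}\exp\bigl(-\tfrac{n_{(1)}}{512}(\tfrac{1-\sigma^2}{\sigma^2})^2\bigr)$, which is negligible once $\sigma$ is a small enough absolute constant since $n_{(1)}\ge(n_1+n_2)/2$. No matrix Bernstein or non-commutative Khintchine inequality is needed; the argument succeeds precisely because $\cG$ is a contraction-by-a-small-constant in the Frobenius geometry and the net entropy is only linear in $n_1+n_2$, while the target deviation scales like $\sqrt{n_{(1)}}$ relative to $\lambda$. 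To complete your proof you would need to supply this (or an equivalent) argument rather than cite it.
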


\begin{proof}
According to the standard de-randomization argument~\cite{Candes}, it is equivalent to studying the case when the signs $\delta_{ij}$ of $\S_{ij}^*$ are independently distributed as
\begin{equation*}
\delta_{ij}=
\begin{cases}
1, & \textup{\mbox{w.p. }} p/2,\\
0, & \textup{\mbox{w.p. }} 1-p,\\
-1, & \textup{\mbox{w.p. }} p/2.\\
\end{cases}
\end{equation*}

\noindent{\emph{Proof of (d).}}
Recall that
\begin{equation*}
\begin{split}
\W^S&=\lambda\cP_{\cT^\perp}\sum_{k\ge 0}(\cP_\Omega\cP_\cT\cP_\Omega)^k\sign(\S^*)\\
&=\lambda\cP_{\cT^\perp}\sign(\S^*)+\lambda\cP_{\cT^\perp}\sum_{k\ge 1}(\cP_\Omega\cP_\cT\cP_\Omega)^k\sign(\S^*).
\end{split}
\end{equation*}
To bound the first term, we have $\|\sign(\S^*)\|\le 4\sqrt{n_{(1)}p}$~\cite{vershynin2010introduction}. So $\|\lambda\cP_{\cT^\perp}\sign(\S^*)\|\le \lambda\|\sign(\S^*)\|\le 4\sqrt{p}\sigma_r(\X^*)\le \sigma_r(\X^*)/8$.

We now bound the second term. Let $\cG=\sum_{k\ge 1}(\cP_\Omega\cP_\cT\cP_\Omega)^k$, which is self-adjoint, and denote by $N_{n_1}$ and $N_{n_2}$ the $\frac{1}{2}$-nets of $\mathbb{S}^{n_1-1}$ and $\mathbb{S}^{n_1-1}$ of sizes at most $6^{n_1}$ and $6^{n_2}$, respectively~\cite{ledoux2005concentration}. We know that~[\cite{vershynin2010introduction}, Lemma 5.4]
\begin{equation*}
\begin{split}
\|\cG(\sign(\S^*))\|&=\sup_{\x\in\mathbb{S}^{n_2-1},\y\in\mathbb{S}^{n_1-1}}\langle\cG(\y\x^T),\sign(\S^*)\rangle\\&\le 4\sup_{\x\in N_{n_2},\y\in N_{n_1}}\langle\cG(\y\x^T),\sign(\S^*)\rangle.
\end{split}
\end{equation*}
Consider the random variable $X(\x,\y)=\langle\cG(\y\x^T),\sign(\S^*)\rangle$ which has zero expectation. By Hoeffding's inequality, we have
\begin{equation*}
\Pr(|X(\x,\y)|>t)\le 2\exp\left(-\frac{t^2}{2\|\cG(\x\y^T)\|_F^2}\right)\le 2\exp\left(-\frac{t^2}{2\|\cG\|^2}\right).
\end{equation*}
Therefore, by a union bound,
\begin{equation*}
\Pr(\|\cG(\sign(\S^*))\|>t)\le 2\times6^{n_1+n_2}\exp\left(-\frac{t^2}{8\|\cG\|^2}\right).
\end{equation*}
Note that conditioned on the event $\{\|\cP_\Omega\cP_\cT\|\le\sigma\}$, we have $\|\cG\|= \left\|\sum_{k\ge 1}(\cP_\Omega\cP_\cT\cP_\Omega)^k\right\|\le \frac{\sigma^2}{1-\sigma^2}$. So
\begin{equation*}
\Pr(\lambda\|\cG(\sign(\S^*))\|>t)\le 2\times6^{n_1+n_2}\exp\left(-\frac{t^2}{8\lambda^2}\left(\frac{1-\sigma^2}{\sigma^2}\right)^2\right)\Pr(\|\cP_\Omega\cP_\cT\|\le\sigma)+\Pr(\|\cP_\Omega\cP_\cT\|>\sigma).
\end{equation*}
Lemma \ref{lemma: small operator norm of P_Omega*P_T} guarantees that event $\{\|\cP_\Omega\cP_\cT\|\le\sigma\}$ holds with high probability for a very small absolute constant $\sigma$. Setting $t=\frac{\sigma_r(\X^*)}{8}$, this completes the proof of (d).
\end{proof}

\noindent{\emph{Proof of (e).}}
Recall that $\W^S=\lambda\cP_{\cT^\perp}\sum_{k\ge 0}(\cP_\Omega\cP_\cT\cP_\Omega)^k\sign(\S^*)$ and so
\begin{equation*}
\begin{split}
\cP_{\Omega^\perp}\W^S&=\lambda\cP_{\Omega^\perp}(\cI-\cP_\cT)\sum_{k\ge 0}(\cP_\Omega\cP_\cT\cP_\Omega)^k\sign(\S^*)\\
&=-\lambda\cP_{\Omega^\perp}\cP_\cT\sum_{k\ge 0}(\cP_\Omega\cP_\cT\cP_\Omega)^k\sign(\S^*).
\end{split}
\end{equation*}
Then for any $(i,j)\in\Omega^\perp$, we have
\begin{equation*}
\W_{ij}^S=\langle\W^S,\e_i\e_j^T\rangle=\left\langle\lambda\sign(\S^*),-\sum_{k\ge 0}(\cP_\Omega\cP_\cT\cP_\Omega)^k\cP_\Omega\cP_{\cT}(\e_i\e_j^T)\right\rangle.
\end{equation*}
Let $X(i,j)=-\sum_{k\ge 0}(\cP_\Omega\cP_\cT\cP_\Omega)^k\cP_\Omega\cP_{\cT}(\e_i\e_j^T)$. By Hoeffding's inequality and a union bound,
\begin{equation*}
\Pr\left(\sup_{ij}|\W_{ij}^S|>t\right)\le 2\sum_{ij}\exp\left(-\frac{2t^2}{\lambda^2\|X(i,j)\|_F^2}\right).
\end{equation*}
We note that conditioned on the event $\{\|\cP_\Omega\cP_\cT\|\le\sigma\}$, for any $(i,j)\in\Omega^\perp$,
\begin{equation*}
\begin{split}
\|X(i,j)\|_F&\le \frac{1}{1-\sigma^2}\sigma\|\cP_\cT(\e_i\e_j^T)\|_F\\
&\le \frac{1}{1-\sigma^2}\sigma\sqrt{1-\|\cP_{\cT^\perp}(\e_i\e_j^T)\|_F^2}\\
&= \frac{1}{1-\sigma^2}\sigma\sqrt{1-\|(\I-\U\U^T)\e_i\|_2^2\|(\I-\V\V^T)\e_j\|_2^2}\\
&\le \frac{1}{1-\sigma^2}\sigma\sqrt{1-\left(1-\frac{\mu r}{n_{(1)}}\right)\left(1-\frac{\mu r}{n_{(2)}}\right)}\\
&\le \frac{1}{1-\sigma^2}\sigma\sqrt{\frac{\mu r}{n_{(1)}}+\frac{\mu r}{n_{(2)}}}.
\end{split}
\end{equation*}
Then unconditionally,
\begin{equation*}
\Pr\left(\sup_{ij}|\W_{ij}^S|>t\right)\le 2n_{(1)}n_{(2)}\exp\left(-\frac{2t^2}{\lambda^2}\frac{(1-\sigma^2)^2n_{(1)}n_{(2)}}{\sigma^2\mu r(n_{(1)}+n_{(2)})}\right)\Pr(\|\cP_\Omega\cP_\cT\|\le\sigma)+\Pr(\|\cP_\Omega\cP_\cT\|>\sigma).
\end{equation*}
By Lemma \ref{lemma: small operator norm of P_Omega*P_T} and setting $t=\lambda/4$, the proof of (e) is completed.

\section{Proof of Theorem~\ref{theorem: lower bound for general optimization}}
\label{section: proof of lower bounds}

Our computational lower bound for problem (\textbf{P}) assumes the hardness of random 4-SAT.
\begin{conjecture}[Random 4-SAT] \label{conj:r4sat}
Let $c > \ln 2$ be a constant. Consider a random 4-SAT
formula on $n$ variables in which each clause has $4$ literals,
and in which each of the $16n^4$ clauses is picked independently
with probability $c/n^3$. Then any algorithm which always outputs
1 when the random formula is satisfiable, and outputs
0 with probability at least $1/2$ when the random formula is
unsatisfiable, must run in $2^{c' n}$ time on some input, where
$c' > 0$ is an absolute constant.
\end{conjecture}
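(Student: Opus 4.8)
The plan is to establish the lower bound by composing two reductions. Starting from Conjecture~\ref{conj:r4sat}, I would first invoke the fact, established in~\cite{ambuhl2011inapproximability,razenshteyn2016weighted}, that the average-case hardness of random $4$-SAT implies a constant-factor inapproximability of the \emph{Maximum Edge Biclique} problem for deterministic algorithms, even allowing exponential running time: there are an absolute constant $\gamma>0$ and a threshold $t=t(N)$ such that no deterministic $2^{o(N)}$-time algorithm can decide whether a given bipartite graph on $N+N$ vertices contains a biclique with at least $t$ edges or contains no biclique with more than $(1-\gamma)t$ edges. I would then give a polynomial-time, gap-preserving reduction from this promise problem to problem (\textbf{P}): a point whose objective value is at most $(1+\epsilon_0)\OPT$ for (\textbf{P}), with $\epsilon_0=\epsilon_0(\gamma)>0$ a small enough absolute constant, would reveal which side of the gap the instance lies on. Since the reduction keeps $n_1,n_2=\Theta(N)$, the $2^{o(N)}$-time lower bound transfers to $2^{o(n_1+n_2)}$, i.e. the claimed $2^{\Omega(n_1+n_2)}$ bound.

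\textbf{Encoding Maximum Edge Biclique as an instance of (\textbf{P}).} Given a bipartite graph $G$ with biadjacency matrix $\M_G\in\{0,1\}^{N\times N}$, I would use the weighted rank-$r$ approximation construction underlying~\cite{razenshteyn2016weighted}, which produces a target matrix $\D\in\R^{n_1\times n_2}$, a nonnegative weight matrix $\W$, and a small rank parameter $r$ (with $n_1,n_2=\Theta(N)$ and $r\ll\min\{n_1,n_2\}$, as (\textbf{P}) requires) such that $\min_{\rank(\Z)\le r}\tfrac12\|\W\odot(\D-\Z)\|_F^2$ is, up to lower-order terms, an affine decreasing function of the size of the largest biclique in $G$, with the two cases of the promise separated by a constant multiplicative factor. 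To cast this as (\textbf{P}) I set
\[
H(\M)=\tfrac12\|\W\odot(\D-\M)\|_F^2-\tfrac12\|\M\|_F^2,
\]
so that $F(\A,\B)=H(\A\B)+\tfrac12\|\A\B\|_F^2$ coincides exactly with the weighted objective evaluated at $\Z=\A\B$. The only structural requirement of our framework is that $H$ be convex and closed: $H$ is a polynomial, hence continuous and thus closed, and its Hessian in the entries of $\M$ is the diagonal matrix with $(ij)$ entry $W_{ij}^2-1$, so $H$ is convex precisely when every weight obeys $W_{ij}\ge 1$. I would therefore rescale, taking weights in $\{1,\,W\}$ with $W=\mathrm{poly}(N)$ large, so that the entries that are "don't-care" in the original biclique construction now carry weight $1$ and contribute only a lower-order additive term; combined with the observation that the minimizer may be restricted to $\|\Z\|_F\le\mathrm{poly}(N)$, a short computation shows the constant gap survives for $W$ a sufficiently large polynomial, and this fixes $\epsilon_0$.

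\textbf{Concluding, and the randomized case.} A deterministic algorithm returning a feasible $(\A,\B)$ with $F(\A,\B)\le(1+\epsilon_0)\OPT$ lets us threshold this value to decide the biclique promise problem; were it to run in time $2^{o(n_1+n_2)}=2^{o(N)}$ it would contradict the first reduction, which gives the deterministic $2^{\Omega(n_1+n_2)}$ bound. Because the reduction itself is deterministic and polynomial-time, a randomized $(1+\epsilon_0)$-approximation succeeding with probability $\ge 2/3$ (amplified to, say, $1-2^{-(n_1+n_2)}$) yields a bounded-error randomized decider for the biclique promise problem; under {\sf BPP = P}, equivalently under the derandomization hypothesis of~\cite{iw97}, this can be simulated deterministically with only a polynomial overhead in running time, still $2^{o(n_1+n_2)}$, again contradicting the deterministic bound.

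\textbf{Main obstacle.} The crux is the middle step: convexity of $H$ rules out the $0/1$ weights that the known biclique reduction relies on, so one must re-derive that reduction with weights bounded away from $0$ (here, in $\{1,W\}$) and verify that absorbing the nonconvex term $-\tfrac12\|\M\|_F^2$ into $H$, together with this rescaling, perturbs $\OPT$ only by a lower-order amount and hence preserves a constant inapproximability gap. Choosing the polynomial $W$, bounding the Frobenius norm of the optimizer, and tracking the resulting $\epsilon_0$ is where the real work lies; the two outer reductions are then largely bookkeeping.
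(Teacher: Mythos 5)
There is a fundamental mismatch between what you were asked to prove and what your proposal does. The statement in question is Conjecture~\ref{conj:r4sat} itself, the average-case hardness of random 4-SAT. The paper does not prove this statement and does not claim to: it is an unproven complexity-theoretic hypothesis (in the spirit of \cite{f02}) that the paper explicitly \emph{assumes} in order to derive Theorem~\ref{theorem: lower bound for general optimization}. Your proposal opens with ``Starting from Conjecture~\ref{conj:r4sat}, I would first invoke\ldots'', i.e.\ it takes the statement to be proved as its hypothesis and then builds the reduction to Maximum Edge Biclique and onward to problem (\textbf{P}). As a proof of the conjecture this is circular and proves nothing; what you have actually sketched is (a variant of) the paper's proof of the downstream computational lower bound, not the conjecture.

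Judged instead as a proof sketch of Theorem~\ref{theorem: lower bound for general optimization}, your outline is close to the paper's argument (random 4-SAT $\Rightarrow$ biclique gap hardness via \cite{goerdt2004approximation}, then a weighted-low-rank encoding with weights in $\{1,\mathrm{poly}(n)\}$), but it differs in two places worth noting. First, the paper does not fold $-\tfrac12\|\M\|_F^2$ into $H$: it simply sets $H(\A\B)=\beta\sum_{ij}\W_{ij}^2(\Y_{ij}-(\A\B)_{ij})^2$, which is already convex and closed, and absorbs the extra $\tfrac12\|\A\B\|_F^2\le n^2$ term additively by choosing $\beta$ large enough to keep the gap; your subtraction trick forces the $W_{ij}\ge 1$ convexity constraint and the Frobenius-norm bookkeeping you flag as ``the real work,'' none of which is needed. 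Second, for the randomized case your appeal to {\sf BPP = P} is too loose: that hypothesis speaks only about polynomial-time algorithms, so it does not directly derandomize a $2^{o(n_1+n_2)}$-time randomized algorithm. The paper handles this with a padding argument (padding the description of $H$ with a string of $1$s of length $2^{(\alpha/2)(n_1+n_2)}$ so that the randomized algorithm becomes polynomial in the new input size $N$, at which point {\sf BPP = P} applies and contradicts the deterministic bound); without this or an equivalent device, your randomized claim has a genuine gap.
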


Based on Conjecture \ref{conj:r4sat}, we have the following computational lower bound for problem (\textbf{P}). We show that problem (\textbf{P}) is in general hard for deterministic algorithms. If we additionally assume {\sf BPP = P}, then the same conclusion holds for randomized algorithms with high probability.

\medskip
\noindent{\textbf{Theorem~\ref{theorem: lower bound for general optimization}} (Computational Lower Bound. Restated)\textbf{.}}
\emph{Assume Conjecture~\ref{conj:r4sat}. Then there exists an absolute constant $\epsilon_0 > 0$ for which any algorithm that achieves $(1+\epsilon) \OPT$ in objective function value for problem \textup{(\textbf{P})} with $\epsilon \le \epsilon_0$, and with constant probability, requires $2^{\Omega(n_1 + n_2)}$ time, where $\OPT$ is the optimum. If in addition, {\sf BPP = P}, then the same conclusion holds for randomized algorithms succeeding with probability at least $2/3$.}

\begin{proof}
  Theorem~\ref{theorem: lower bound for general optimization} is proved by using the hypothesis that random 4-SAT is hard to show
  hardness of the Maximum Edge Biclique problem for deterministic algorithms.

\begin{definition}[Maximum Edge Biclique] The problem is
\begin{itemize}[leftmargin=40pt]
\item[Input:] An $n$-by-$n$ bipartite graph $G$.
\item[Output:] A $k_1$-by-$k_2$ complete bipartite subgraph of $G$, such that $k_1 \cdot k_2$ is maximized.
\end{itemize}
\end{definition}

\cite{goerdt2004approximation} showed that under the random 4-SAT assumption there exist two constants $\epsilon_1 > \epsilon_2 > 0$ such that no efficient deterministic algorithm is able to distinguish between bipartite graphs $G(U, V,E)$ with $|U| = |V | = n$ which have a clique of size $ \ge (n/16)^2 (1+\epsilon_1)$ and those in which all bipartite cliques are of size $\le (n/16)^2(1 + \epsilon_2)$. The reduction uses a bipartite graph $G$ with at least $t n^2$ edges with large probability, for a constant $t$.

Given a given bipartite graph $G(U, V,E)$, define $H(\cdot)$ as follows.
Define the matrix $\Y$ and $\W$: $\Y_{ij} = 1$ if edge $(U_i, V_j) \in E$, $\Y_{ij} = 0$ if edge $(U_i, V_j) \not\in E$; $\W_{ij} = 1$ if edge $(U_i, V_j) \in E$, and $\W_{ij} = \textnormal{poly}(n) $ if edge $(U_i, V_j) \not\in E$.
Choose a large enough constant $\beta>0$ and let $H(\A \B) = \beta \sum_{ij} \W_{ij}^2 (\Y_{ij} - (\A\B)_{ij})^2$.
Now, if there exists a biclique in $G$ with at least $(n/16)^2(1 + \epsilon_2)$ edges, then the number of remaining edges is at most $tn^2 - (n/16)^2(1 + \epsilon_1)$, and so the solution to $\min H(\A\B) + \frac{1}{2} \|\A\B\|_F^2$ has cost at most $ \beta[t n^2 - (n/16)^2(1 +\epsilon_1) ] + n^2$. On the other hand, if there does not exist a biclique that has more than $(n/16)^2(1 + \epsilon_2)$ edges, then the number of remaining edges is at least $(n/16)^2(1 + \epsilon_2)$, and so any solution to $\min H(\A\B) + \frac{1}{2} \|\A\B\|_F^2$ has cost at least $\beta[tn^2 - (n/16)^2 (1 + \epsilon_2)]$. Choose $\beta$ large enough so that $\beta[tn^2 - (n/16)^2 (1 + \epsilon_2)] > \beta[t n^2 - (n/16)^2(1 +\epsilon_1) ] + n^2$. This combined with the result in~\cite{goerdt2004approximation} completes the proof for deterministic algorithms.

To rule out randomized algorithms running in time $2^{\alpha(n_1 + n_2)}$ for some
function $\alpha$ of $n_1, n_2$ for which $\alpha = o(1)$, observe that we
can define a new problem which is the same as problem \textup{(\textbf{P})}
  except the input description of $H$ is padded with a string of $1$s
  of length $2^{(\alpha/2)(n_1 + n_2)}$. This string is irrelevant for solving
  problem \textup{(\textbf{P})} but changes the input size to
  $N = \poly(n_1, n_2) + 2^{(\alpha/2)(n_1+n_2)}$. By the argument in the previous
  paragraph, any deterministic algorithm still requires $2^{\Omega(n)} = N^{\omega(1)}$ time to solve this problem, which is super-polynomial in the new input
  size $N$. However, if a randomized algorithm can solve it in $2^{\alpha(n_1+n_2)}$ time, then it runs in $\poly(N)$ time. This contradicts the assumption
  that {\sf BPP = P}. This completes the proof.
\end{proof}

\section{Dual and Bi-Dual Problems}
\label{section: Dual and Bi-Dual Problems}
In this section, we derive the dual and bi-dual problems of non-convex program (\textbf{P}). According to \eqref{equ: Lagrangian form}, the primal problem (\textbf{P}) is equivalent to
\begin{equation*}
\min_{\A,\B} \max_\bLambda \frac{1}{2}\|-\bLambda-\A\B\|_F^2-\frac{1}{2}\|\bLambda\|_F^2-H^*(\bLambda).
\end{equation*}
Therefore, the dual problem is given by
\begin{equation*}
\begin{split}
&\quad \max_\bLambda \min_{\A,\B} \frac{1}{2}\|-\bLambda-\A\B\|_F^2-\frac{1}{2}\|\bLambda\|_F^2-H^*(\bLambda)\\
&=\max_\bLambda \frac{1}{2}\sum_{i=r+1}^{n_{(2)}}\sigma_i^2(-\bLambda)-\frac{1}{2}\|\bLambda\|_F^2-H^*(\bLambda)\\
&=\max_\bLambda -\frac{1}{2}\|\bLambda\|_r^2-H^*(\bLambda),\qquad\qquad (\textbf{D1})
\end{split}
\end{equation*}
where $\|\bLambda\|_r^2=\sum_{i=1}^r\sigma_i^2(\bLambda)$.
The bi-dual problem is derived by
\begin{equation*}
\begin{split}
&\quad\min_\M \max_{\bLambda,\bLambda'} -\frac{1}{2}\|\bLambda\|_r^2-H^*(\bLambda')+\langle\M,\bLambda'-\bLambda\rangle\\
&=\min_\M \max_{-\bLambda} \left[\langle\M,-\bLambda\rangle-\frac{1}{2}\|-\bLambda\|_r^2\right]+\max_{\bLambda'} \left[\langle\M,\bLambda'\rangle-H^*(\bLambda')\right]\\
&=\min_\M \|\M\|_{r*}+H(\M),\qquad\qquad (\textbf{D2})
\end{split}
\end{equation*}
where $\|\M\|_{r*}=\max_\X \langle\M,\X\rangle-\frac{1}{2}\|\X\|_r^2$ is a convex function, and $H(\M)=\max_{\bLambda'} \left[\langle\M,\bLambda'\rangle-H^*(\bLambda')\right]$ holds by the definition of conjugate function.

Problems (\textbf{D1}) and (\textbf{D2}) can be solved efficiently due to their convexity. In particular, \cite{grussler2016low} provided a computationally efficient algorithm to compute the proximal operators of functions $\frac{1}{2}\|\cdot\|_r^2$ and $\|\cdot\|_{r*}$. Hence, the Douglas-Rachford algorithm can find the global minimum up to an $\epsilon$ error in function value in time $\textsf{poly}(1/\epsilon)$~\cite{he20121}.

\section{Recovery under Bernoulli and Uniform Sampling Models}
\label{section: Equivalence of Bernoulli and Uniform Models}

We begin by arguing that a recovery result under the Bernoulli model with some probability automatically implies a
corresponding result for the uniform model with at least the same probability. The argument follows Section 7.1 of \cite{Candes}. For completeness, we provide the proof here.

Denote by $\Pr_{\mathsf{Unif}(m)}$ and $\Pr_{\mathsf{Ber}(p)}$ probabilities calculated under the uniform and Bernoulli models and let ``Success'' be the event that the algorithm succeeds. We have
\begin{equation*}
\begin{split}
\Pr\nolimits_{\mathsf{Ber}(p)}(\text{Success})&=\sum_{k=0}^{n_1n_2}\Pr\nolimits_{\mathsf{Ber}(p)}(\text{Success}\mid|\Omega|=k)\Pr\nolimits_{\mathsf{Ber}(p)}(|\Omega|=k)\\
&\le \sum_{k=0}^{m}\Pr\nolimits_{\mathsf{Unif}(k)}(\text{Success}\mid |\Omega|=k)\Pr\nolimits_{\mathsf{Ber}(p)}(|\Omega|=k)+\sum_{k=m+1}^{n_1n_2}\Pr\nolimits_{\mathsf{Ber}(p)}(|\Omega|=k)\\
&\le \Pr\nolimits_{\mathsf{Unif}(m)}(\text{Success})+\Pr\nolimits_{\mathsf{Ber}(p)}(|\Omega|>m),
\end{split}
\end{equation*}
where we have used the fact that for $k\le m$, $\Pr\nolimits_{\mathsf{Unif}(k)}(\text{Success})\le \Pr\nolimits_{\mathsf{Unif}(m)}(\text{Success})$, and that the conditional distribution of $|\Omega|$ is uniform. Thus
\begin{equation*}
\Pr\nolimits_{\mathsf{Unif}(m)}(\text{Success})\ge \Pr\nolimits_{\mathsf{Ber}(p)}(\text{Success})-\Pr\nolimits_{\mathsf{Ber}(p)}(|\Omega|>m).
\end{equation*}
Take $p=m/(n_1n_2)-\epsilon$, where $\epsilon>0$. The conclusion follows from $\Pr\nolimits_{\mathsf{Ber}(p)}(|\Omega|>m)\le e^{-\frac{\epsilon^2 n_1n_2}{2p}}$.

\comment{
\begin{lemma}
\label{lemma: number of sampling by Bernoulli}
Let $n$ be the number of Bernoulli trials and suppose that $\Omega\sim\textup{\textsf{Ber}}(m/n)$. Then with probability at least $1-n^{-10}$, $|\Omega|=\Theta(m)$, provided that $m\ge c\log n$ for an absolute constant $c$.
\end{lemma}
\begin{proof}
By the scalar Chernoff bound, with $\epsilon>0$ we have
\begin{equation}
\label{equ: scalar Chernoff bound <}
\Pr(|\Omega|\le m-n\epsilon)\le \exp\left(-\epsilon^2n^2/(2m)\right),
\end{equation}
and
\begin{equation}
\label{equ: scalar Chernoff bound >}
\Pr(|\Omega|\ge m+n\epsilon)\le \exp\left(-\epsilon^2n^2/(3m)\right).
\end{equation}
Taking $\epsilon=m/(2n)$ and $m\ge c_1\log n$ in \eqref{equ: scalar Chernoff bound <} for an appropriate absolute constant $c_1$, we have
\begin{equation}
\label{equ: |Omega| <}
\Pr(|\Omega|\le m/2)\le\exp(-m/4)\le \frac{n^{-10}}{2}.
\end{equation}
Taking $\epsilon=m/n$ and $m\ge c_2\log n$ in \eqref{equ: scalar Chernoff bound >} for an appropriate absolute constant $c_2$, we have
\begin{equation}
\label{equ: |Omega| >}
\Pr(|\Omega|\ge 2m)\le\exp(-m/3)\le \frac{n^{-10}}{2}.
\end{equation}
Given \eqref{equ: |Omega| <} and \eqref{equ: |Omega| >}, we conclude that $m/2<|\Omega|<2m$ with probability at least $1-n^{-10}$, provided that $m\ge c\log n$ for an absolute constant $c$.
\end{proof}
}

\bibliographystyle{alpha}
\bibliography{reference}

\end{document}